\documentclass[12pt]{article}
\usepackage{shortcuts}
\usepackage{amsmath}
\usepackage{amssymb,amsmath}
\usepackage{amsfonts,mathrsfs,bbm}
\usepackage{bbm}
\usepackage{url}
\usepackage{setspace,float}
\usepackage{natbib}
\usepackage{subcaption} 
\usepackage[symbol]{footmisc}
\newcommand{\blind}{1}

\addtolength{\oddsidemargin}{-.5in}%
\addtolength{\evensidemargin}{-1in}%
\addtolength{\textwidth}{1.1in}%
\addtolength{\textheight}{1.7in}%
\addtolength{\topmargin}{-1in}%

\title{Representation Transfer Learning for Semiparametric Regression}

\begin{document}

\def\spacingset#1{\renewcommand{\baselinestretch}%
{#1}\small\normalsize}
\spacingset{1}


\if1\blind
{
  \title{\bf Representation Transfer Learning for Semiparametric Regression}
   \author{Baihua He\thanks{Equal contribution}\
   \thanks{Department of Statistics and Finance, School of Management, University of Science and Technology of China, Hefei, China. Email: baihua@ustc.edu.cn},
 \   Huihang Liu$^*$\thanks{International Institute of Finance, School of Management, University of Science and Technology of China, Hefei, China. Email: huihang@mail.ustc.edu.cn},
 \  Xinyu Zhang\thanks{ Academy of Mathematics and Systems Science, Chinese Academy of Sciences, Beijing, China. Email: xinyu@amss.ac.cn},
   \ {and Jian Huang}\thanks{Corresponding author.
    Department of Applied Mathematics, The Hong Kong Polytechnic University, Hong Kong SAR, China. Email: j.huang@polyu.edu.hk}
}
 \maketitle
\fi

\if0\blind
{
  \bigskip
  \begin{center}
    {\LARGE\bf Representation Transfer Learning for Semiparametric Regression}
\end{center}
  \medskip
} \fi

\begin{abstract} 
We propose a transfer learning method that utilizes data representations in a semiparametric regression model. Our aim is to perform statistical inference on the parameter of primary interest in the target model while accounting for potential nonlinear effects of confounding variables.
We leverage knowledge from source domains, assuming that the sample size of the source data is substantially larger than that of the target data. This knowledge transfer is carried out by the sharing of data representations, predicated on the idea that there exists a set of latent representations transferable from the source to the target domain. We address model heterogeneity between the source and target domains by incorporating domain-specific parameters in their respective models. We establish sufficient conditions for the identifiability of the models and demonstrate that the estimator for the primary parameter in the target model is both consistent and asymptotically normal. These results lay the theoretical groundwork for making statistical inferences about the main effects. Our simulation studies highlight the benefits of our method, and we further illustrate its practical applications using real-world data.
\end{abstract}

\noindent%
{\it Keywords:}  Asymptotic normality, data representation, heterogeneity, identifiability, multi-source data.

\spacingset{1.0} 
\section{Introduction} \label{sec:intro}

In many practical scenarios, the availability of data can be limited, posing difficulties for estimating effective models for statistical inference. Often, there is an abundance of data in  related but not identical source domains, while the specific target domain suffers from data scarcity.
Transfer learning provides a solution to this issue by leveraging knowledge from  similar source tasks to enhance model performance in the target task \citep{pan2010survey}.
Over the past decade, transfer learning has been widely adopted in various machine learning tasks, including computer vision \citep{yosinski2014how}, natural language processing \citep{mou2016how}, and speech recognition \citep{huang2013crosslanguage}.

The challenges of {domain heterogeneity} and the risks of {negative transfer} in utilizing auxiliary source data have led to the advancement of {statistical transfer learning}, which seeks to develop novel transfer learning methodologies that address these challenges {and establish their theoretical properties}. 
Researchers have proposed transfer learning methods for a variety of models, including high-dimensional linear models \citep{bastani2021predicting,li2022transfer}, generalized linear models \citep{tian2022transfer,li2023estimation}, functional regression \citep{lin2022transfer}, semi-supervised classification \citep{zhou2022doubly}, and basis-type models \citep{cai2024transfer}, among others.
\citet{tian2023learning} introduced a linear representation multi-task method
for estimating a similar representation. However, their reliance on linear associations and linear representation constraints the method's applicability to complex data structures.
In the context of transfer learning, \citet{hu2023optimal} proposed a model averaging approach for semiparametric regression models. However, their method only leverages the linear component's knowledge for transfer learning, overlooking the non-linear component similarities.
\lhhcomment{Term ``statistical transfer learning'' is not widely used in literature.}

Although these methods have shown promising results, there are certain issues that remain unexplored.
First, the challenges in balancing the trade-off between model flexibility and parameter interpretability. Most existing research on transfer learning and multi-source data integration falls into two categories: studies that focus on parametric models \citep{tian2022transfer,li2023estimation}, which offer simplicity and interpretability, and those that explore complex non-linear models \citep{tan2018survey} lack interpretability due to their ``black-box'' nature. Striking a balance between the interpretability of parametric models and the flexibility of non-parametric models remains a significant challenge. Second, the challenges remain in {constructing and identifying the transferable knowledge} across data domains. Most existing statistical transfer learning methods transfer the model parameters directly to the target domain. However, these methods depends on parametric model assumptions, which may limit the transferability of the knowledge and fail to utilize the {latent} shared information among domains. The knowledge is not transferable if the model parameters are different, even if there are {latent} structures in the data. The identifiability of transferable knowledge is also crucial for the statistical inference {before} transferring. These challenges motivate us to develop a novel transfer learning method that uses the shared knowledge among domains and strike a balance between model interpretability and flexibility,

We propose a representation transfer learning (RTL) method for knowledge transfer within the context of semiparametric regression model \citep{engle1986semiparametric}. The semiparametric regression model enables the interpretability of the treatment parameters, or parameters of main interest, while capturing the flexible data structures through the nonparametric components. Our main objective is to facilitate statistical inference for the treatment effects in the target model, taking into possibly nonlinear effects of the confounding variables. We achieve this by accommodating multi-dimensional confounding variables in a flexible manner and by incorporating information from source domains. The scenario we address involves a target model of interest, together with independent heterogeneous source models that share a higher-level data representation. The representation transfer learning mechanism enables the latent shared knowledge to be transferred. We use deep neural networks in the estimation of the effects of confounding variables through a set of representation functions. The transfer of knowledge from source data to the target model occurs via these representation functions, which we estimate by capitalizing on the ample sample sizes available in the source domains.

The main contribution of our paper are threefold. First, a critical concern is the {identifiability} of both the transferred representation function and the domain-specific parametric components. Given that the representation is a high-level abstraction of the data, it is often non-unique and non-identifiable \citep{chen2023leveraging}. Within the semiparametric regression framework, the identifiability of the parametric components can be compromised due to their interaction with the representation. To tackle this challenge, we formulate novel and interpretable conditions that ensure the identifiability of both the representation function and the linear coefficients, provided there is sufficient diversity among the source domains. Second, our theoretical analysis shows that the proposed method can consistently estimate the representation functions via deep neural networks with ReLU activation. We demonstrate that representation transfer learning can reduce approximation bias and enhance sample efficiency. Third, we establish the asymptotic normality of the estimated primary parameter in the target model, providing the basis for statistical inference regarding the effects of the variable of primary interest. Consequently, RTL adeptly balances model interpretability with model flexibility and improve the estimation accuracy of the primary parameter.

Our proposed RTL approach marks a significant departure from existing statistical methods for transfer learning and semiparametric regression models. Unlike the traditional distance-based transfer learning frameworks, we account for model heterogeneity between the source and target domains by employing flexible representation functions and domain-specific parameters. These learned representation functions serve as conduits for knowledge transfer, capturing intrinsic information that is often the most challenging aspect to estimate in a model.

The rest of the article is organized as follows. We introduce the model framework and develop
the proposed RTL method  in Section~\ref{sec:model}. We provide the theoretical guarantees in Section~\ref{sec:theory}. We present the numerical studies including simulation, semi-synthetic data analysis based on MNIST hand writing dataset, and illustrate RTL using the Pennsylvania reemployment bonus experiment and housing rental information data in Section~\ref{sec:emp-results}. We give concluding remarks in Section~\ref{sec:discuss}, and relegate all technical proofs to the Supplementary Materials.

\section{Model and methodology} \label{sec:model}

In this section, we present our proposed Representation Transfer Learning (RTL) method. We denote the data by the triplet $(Y, \mathbf{X}, \mathbf{Z})$, where $Y \in \mathbb{R}$ is the response variable, $\mathbf{X} \in \mathbb{R}^d$ corresponds to the $d$-dimensional covariate of primary interest, and $\mathbf{Z} \in \mathbb{R}^q$ a  $q$-dimensional confounding variable. Typically, $\mathbf{X}$ is a low-dimensional treatment variable, implying that $d$ is small. We permit the confounding variable $\mathbf{Z}$ to be of a moderately high dimension, with its dimension $q$ allowed to grow as the sample size increases.
We consider $K$ distinct source domains, each with its own dataset $(Y_k, \mathbf{X}_k, \mathbf{Z}_k)$ for $k = 1, \ldots, K$. Additionally, we have the target domain data, which is denoted as $(Y_0, \mathbf{X}_0, \mathbf{Z}_0)$. Our approach is designed to leverage the information from these multiple source domains to enhance inference in the target domain via shared data representation.

\subsection{Model} 
We begin by considering distinct semiparametric regression models for each source domain and the target domain:
\begin{align}
\text{Sources:} \quad
Y_k& = \bbeta_{k}^\top \bX_k + g_k(\bZ_k) + \varepsilon_k, \quad k=1, \ldots, K,
  \label{s1}    \\
\text{Target:} \quad   Y_0 & =  \bbeta_{ 0}^\top \bX_0 + g_0(\bZ_0) + \varepsilon_0.
  \label{eq:model}
\end{align}
In these models, for $k=0, 1, \ldots, K$, $\bbeta_{k}$ represents the effects associated with $\bX_k$, $g_k: \mathbb{R}^{q} \to \mathbb{R}$ denotes an unspecified nonparametric function capturing the potential nonlinear impact of the confounding variable $\bZ_k$, and $\varepsilon_k$ is the random noise component with $E(\varepsilon_k)=0$ and $E(\varepsilon_k^2)=\sigma_k^2$. This model allows us to systematically address the influence of both the covariates of interest and the confounding variables across different domains.

Our main goal is to conduct statistical inference on the parameter $\bbeta_{0}$, which measures the effect of the covariate of interest, $\bX_0$, on the outcome variable $Y_0$. This task is challenging for two main reasons: the limited availability of data from the target domain and the complex influence that nonparametric estimation of nuisance functions, related to multivariate confounding variables, has on the estimation of $\bbeta_{0}$. Although the use of flexible neural networks to approximate these functions may appear to be a feasible approach, it complicates the inference process for $\bbeta_{0}$. Furthermore, this method in itself does not overcome the fundamental obstacle known as the ``curse of dimensionality", which arises during the nonparametric estimation of a multidimensional function.

Transfer learning offers a solution to the ``curse of dimensionality" in the target domain by utilizing data from multiple sources. We are particularly interested in scenarios where the combined sample size from the source domains significantly exceeds that of the target domain. To enable the effective transfer of knowledge from the source domains to the target domain, it is crucial to establish specific assumptions about the relationship between the source and target data models. Our approach capitalizes on the source data to assist in estimating the relevant function within the target data model. This method assumes the existence of a latent representation of the confounding effect that is invariant across both the source and target data.

Specifically, we propose the following expression for the confounding effects:
\begin{align*}
g_k(\bZ) =  \bgamma_k^\top \bR(\bZ), \quad k=0, 1, \ldots, K,
\end{align*}
where $\bR: \mathbb{R}^q \to \mathbb{R}^p$ functions as a representation of the confounding variables, and $\bgamma_k$ represents domain-specific coefficients. This composite model structure aligns with the approaches suggested by \citet{du2020few} and \citet{tripuraneni2020theory}. The representation $\bR$ can be interpreted as a set of basis functions, with $\bgamma$ acting as the corresponding weights. This strategy differs from traditional basis expansion techniques, such as spline methods, which rely on a predetermined set of basis functions for approximating nonparametric functions. Instead, our approach estimates the representation function $\bR$ from the data.

By focusing on the differences in the coefficients $(\bbeta, \bgamma)$ across the source and target domains, we can effectively capture domain heterogeneity. This approach simplifies the challenging tasks of function estimation and heterogeneity detection. Consequently, we posit that the representation function $\bR$ is a shared element across different domains, representing the transferable knowledge from source tasks to the target task.

The above discussion leads to the proposed RTL model as follows:
\begin{align}
  \text{Sources:} \quad   Y_k&=\bbeta_{k}^\top \bX_k  + \bgamma_{k}^\top \bR(\bZ_k) +\varepsilon_k,\ k=1, \ldots, K,   \label{srtr1}\\
  \text{Target:}  \quad    Y_0&=\bbeta_{0}^\top \bX_0 +\bgamma_{0}^\top \bR (\bZ_0) +\varepsilon_0, \label{srtr2}
  \end{align}
where $\bbeta_{k}$ and $\bgamma_{k}$ are source-specific coefficients of dimensions $d$ and $p$, respectively. The function $\bR: \mathbb{R}^q \rightarrow \mathbb{R}^p$ serves as the shared representation function across domains.

\subsection{Estimation method}

Based on the RTL models (\ref{srtr1}) and (\ref{srtr2}), at the population level, our proposed
RTL method proceeds in two steps:
\begin{description}
\item[Step P1:]
In the source domain,  we consider the minimizers of the population risk function
\begin{equation}\label{deff0}
 \{(\bbeta_{\ast k}, \bgamma_{\ast k})_{k=1}^K, \bR_*\}\in
  \argmin\limits_{\{(\bbeta_k, \bgamma_k)_{k=1}^K, \bR\}}\frac{1}{K}
  \sum_{k=1}^{K}\mathbb{E}\{Y_k-\bbeta_k^\top \bX_k - \bgamma_k^\top \bR(\bZ_k)\}^2,
\end{equation}
where $\bR_*$ is the shared representation function that will be used in the target domain,
the coefficients $(\bbeta_{\ast k}, \bgamma_{\ast k})_{k=1}^K$ take into account possible heterogeneity
across source domains and the target domain. {The confounding effects $\bgamma_{\ast k}^\top \bR_*(\bZ_k)$ are not separable and thus are not identifiable, as discussed in Section~\ref{sec:theory}. Fortunately, the effect $\bbeta_{\ast k}$ is uniquely identifiable.}

\item [Step P2:]
In the target domain, given the representation function $\bR_*$ from the source domain, we solve
\begin{equation}
\label{defft}
\{\bbeta_{\ast 0}, \bgamma_{\ast 0}\} = \argmin_{(\bbeta_0, \bgamma_0)}
\mathbb{E} \{Y_0 - \bbeta_0^\top \bX_0 - \bgamma_0^\top \bR_*(\bZ_0)\}^2.
\end{equation}
\end{description}

Now suppose we have a random sample of independent and identically distributed observations from the target domain, denoted as $\{(Y_{0i}, \bX_{0i}, \bZ_{0i}), i=1, \ldots, n_0\}$. Additionally, we have access to the datasets from $K$ source domains,  $\{(Y_{ki}, \bX_{ki}, \bZ_{ki}), i=1, \ldots, n_k\},$ where $k=1, \ldots, K.$  Let $N=n_1+\cdots n_K$ be the combined sample size of the source domains.
Although there are no explicit constraints on the sample sizes across the target and source domains, it is usually the case that $N$ significantly exceeds the sample size $n_0$ of the target domain.
Our main goal is to leverage the data from these source domains to enhance the estimation accuracy within the target domain. This can be achieved by using the empirical version of the formation at the population given in (\ref{deff0}) and (\ref{defft}).
We first focus on estimating the representation function $\bR$ using the source data, followed by estimating the regression parameters within the target model using the target data. These two steps correspond to the Steps P1 and P2 at the population level and are
as follows:

\begin{description}
\item [Step E1: Estimation of the shared representation function.]
This step involves estimating a shared representation function $\bR$, which is formulated as an optimization problem:
\begin{align}\label{estMul}
  \{(\widehat{\bbeta}_k,\widehat{\bgamma}_k)_{k=1}^K,\widehat{\bR}\} = \argmin_{\{(\bbeta_k,\bgamma_k)_{k=1}^K, \bR\in \mathcal{R}\} }
  \Big\{\frac{1}{K}\sum_{k=1}^{K}\frac{1}{n_k}\sum_{i=1}^{n_k}
  (Y_{ki}- \bbeta_k^\top \bX_{ki} - \bgamma_k^\top \bR(\bZ_{ki}))^2
  \Big\},
\end{align}
where the estimation of the representation function $\bR$ is conducted over a specified class of neural networks, denoted as $\mathcal{R}$. This step is crucial for capturing the underlying
representations shared across the source domains.

\item [Step E2: Estimation of parameters in the target model.]
After estimating the representation function $\widehat{\bR}$ from the source data, the next step is to estimate the parameters within the target model. This is achieved by solving
\begin{equation}\label{estTar}
 \{\widehat{\bbeta}_0,\widehat{\bgamma}_0\}=\argmin_{\bbeta_0, \bgamma_0}
 \frac{1}{n_0} \sum_{i=1}^{n_0}
  \{Y_{0i}-\bbeta_0^\top \bX_{0i}-\bgamma_0^\top\widehat{\bR}(\bZ_{0i})\}^2.
\end{equation}
Given that $\widehat{\bR}$ remains fixed in this step, the pair $(\widehat{\bbeta}_0,\widehat{\bgamma}_0)$ is effectively obtained through a least squares estimation process.
\end{description}

The proposed RTL method, which involves pre-training on multiple source domains before transferring the estimated representations to the target domain, 
enhances data and computational efficiency. The abundance of source data ensures that the representation function $\bR$ can be estimated at a
much faster convergence rate than that when only target domain data is available.

\subsection{Implementation}
We approximate the representation functions by feedforward neural networks defined as:
\begin{equation*}
  \bR(\bz)=\bA_D\sigma(\bA_{D-1}\sigma(\cdots\sigma(\bA_0\bz+\bb_0)\cdots)+\bb_{D-1})+\bb_D,
\end{equation*}
where $\bA_i\in \mbR^{p_{i+1}\times p_i}$ and $\bb_i\in \mbR^{p_{i+1}}$ for $i=0,\ldots, D$, $p_0=q$ is the dimension of the input variables, $p_{D+1}=p$ is the dimension of the output layer, and $\sigma(\cdot)$ is the activation function. We consider the ReLU activation function $\sigma(x)=\max\{0,x\}$, applied component-wise.
The parameters of the representation function $\bR(\cdot)$ are denoted as
$\btheta=\{\bA_0,\ldots,\bA_D, \bb_0,\ldots,\bb_{D}\}$.
The number $W=\max\{p_0,\ldots,p_D\}$ and $D$ are the width and depth of the neural network, respectively. The weight matrices together with the bias vectors contain $S=\sum_{i=0}^{D}p_{i+1}(p_i+1)$ entries in total. The parameters including weights and biases
are assumed to be bounded by a constant $B_{\theta}>0.$
 We denote the set of  the neural network functions defined above by
 $\mathcal{R} =  \calN \calN (W,D,B_{\btheta}).$

Figure~\ref{fig:nn} illustrates the architecture of a demo neural network with $d=3$, $q=5$ and $p=3.$ We train the representation network and the linear layer in an iterative style for $400$ epochs.
In each epoch, we first update the representation network and then the linear layer.
The weights in the representation network are optimized using the SGD optimizer with a learning rate of $10^{-3}$ and batch size of $n_k$.
The weights in the last layer are obtained using the least square estimation.
We use early stopping method during the training process, 
and use other i.i.d. observations as a validation dataset for model selection with a sample size of $30\%$ of training dataset.
That is, we select the model with a minimum prediction error on the validation set for evaluation.

\begin{figure}[ht]
  \centering
  \includegraphics[width=4.5 in, height=2 in]{./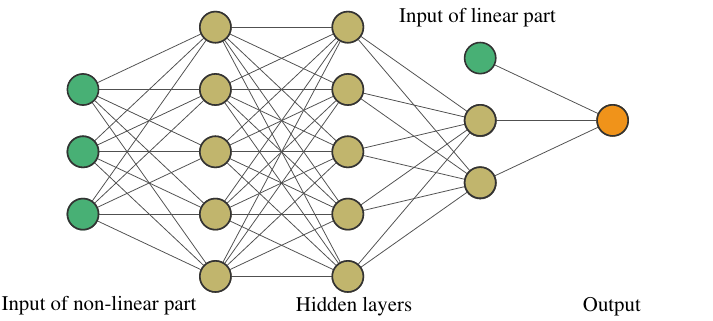}
  \caption{The architecture of a partially linear neural network with input dimension $d=1$, $q=3$, representation dimension $p=2$, depth $D=2$ and width $W=5$.}
  \label{fig:nn}
\end{figure}

To train the representation network across different datasets, we
compute the total loss function (\ref{estMul}) on each epoch and then backward the gradients to update the weights $\btheta$ in the representation network.
After the estimated representation $\widehat{\bR}$ is computed,  we obtain the estimator of $\bbeta_{\ast 0}$ and $\bgamma_{\ast 0}$ by solving (\ref{estTar}).

\section{Theoretic results} \label{sec:theory}
In this section, we study the theoretical properties of the proposed RTL method.
We first provide sufficient conditions under which the model parameters are identifiable.
Next, we derive the convergence rate of the estimated representation function in terms of the
source data sample size. Then we show that the estimator of the parameter of main interest in the target domain model is
asymptotically normal. We also provide a consistent estimator of the asymptotic covatiance matrix. These results make it possible to conduct statistical inference about the main parameter in the target domain model.

\subsection{Identifiability}
Identifiability is a fundamental question in statistical modeling problems.
Usually, the parameters in a model are required to be uniquely identifiable so that their consistent estimation is possible. In the proposed model, this requires careful consideration because of the term $\bgamma_{\ast k}^{\top} \bR_\ast(\bZ_k)$ representing the confounding effect in the model. Since $\bgamma_{\ast k}$ and $\bR_\ast$ are both unknown, they are not identifiable in the usual sense. In this subsection, we provide a set of conditions to guarantee the identifiability of the parameter of main interest $\bbeta_{\ast k}$ and the confounding effects  represented by $\bgamma_{\ast k}^{\top} \bR_{\ast}(\bZ_k)$.

We first state the definition of a notion of identifiability, linear identifiability, for $\bgamma_k$ and $\bR$.

\begin{definition}
  The data representations are said to be linearly identifiable if there exists an invertible matrix $\Lambda$ such that $\bR^{\prime}(\bZ)=\Lambda^{-1}\bR(\bZ)$ and $\bgamma_k^{\prime}=\Lambda\tr \bgamma_k$ for all $\bZ\in \calZ$ and $k\in[K].$
\end{definition}

Based on this definition, we have $\bgamma_k^{\prime\top}\bR^{\prime} = \bgamma_k^\top \Lambda \Lambda^{-1}\bR=\bgamma_k^\top \bR.$ Therefore, although $\bgamma_k$ and $\bR$ are only linearly identifiable, the confounding effects, represented by $\bgamma_k^\top \bR,$ are uniquely identifiable in the usual sense.

We impose the following conditions to ensure the identifiability of parameters and representations.
\begin{condition} \label{assumKL}
   The matrix $\mathbb{E}[\{\bX_k- \mathbb{E}(\bX_k|\bZ_k)\}\{\bX_k- \mathbb{E}(\bX_k|\bZ_k)\}^\top ]$ is invertible.
\end{condition}

\begin{condition}\label{assumind}
 (a)  There exists 
 $\{k_i\}_{i=1}^{p}\subseteq[K]$ such that the coefficients $\{\bgamma_{\ast k_i}\}_{i=1}^p$ are linearly independent.
 (b)  There exists $\bZ_1,\ldots,\bZ_p\in \calZ$ such that the matrix $
  \left[\bR_\ast(\bZ_1),\ldots,\bR_\ast(\bZ_p)\right]$ is invertible.
\end{condition}
Condition~\ref{assumKL} is a common assumption in regression analysis in the presence of confounding variables,  which assumes that the main variables $\bX_k$ have significant variation across different tasks after projecting out all variation that can be explained by the nuisance variables $\bZ_k$, for each $k\in[K].$
When confounding variables are present, they can introduce bias or distortions that obscure the true relationship between the variables of interest. By requiring that $\bX_k$ maintains significant variation independent of these confounders, it ensures that the effects of $\bX_k$ on the response variable $Y_k$ can be properly estimated.

Condition \ref{assumind} (a)  requires that the support of the distribution for the coefficients $\{\bgamma_{\ast k}\}_{k=1}^K$, is sufficiently rich.
Similar assumption was also imposed in the analysis of panel data \citep{ahn2001gmm,bai2009panel,moon2015linear}.
Condition \ref{assumind} (b)
stipulates  that $\bR_\ast$ exhibits a sufficient degree of variability. This variability is essential to ensure that the image of $\bR_\ast$—the set of all possible outputs it can generate—does not become confined within a proper subspace of its potential range. In simpler terms, the function must be versatile enough in its transformations to avoid being restricted to a limited portion of the space it operates within.

\begin{theorem}\label{theidenti}
  Suppose Conditions \ref{assumKL}-\ref{assumind} 
  hold.
  Let $\{(\bbeta_{k}, \bgamma_{k})_{k=1}^K, \bR\}$ and $\{(\bbeta_{k}^{\prime}, \bgamma_{k}^{\prime})_{k=1}^K, \bR^{\prime}\}$
  be sets of parameters satisfy (\ref{deff0}).
  Then $\bbeta_{k}^{\prime}=\bbeta_{k}$ and there exists an invertible matrix $\Lambda$ such that
  $\bR^{\prime}=\Lambda^{-1}\bR$ and $\bgamma_{k}^{\prime}=\Lambda\tr \bgamma_{k}$ for $k\in[K]$.
\end{theorem}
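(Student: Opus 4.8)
The plan is to first collapse the two competing minimizers into a single pointwise functional identity, and then extract the three conclusions in order: $\bbeta_k' = \bbeta_k$, then the existence and invertibility of $\Lambda$ with $\bR' = \Lambda^{-1}\bR$, and finally $\bgamma_k' = \Lambda^\top\bgamma_k$. Writing $f_k(\bX_k,\bZ_k) = \bbeta_k^\top\bX_k + \bgamma_k^\top\bR(\bZ_k)$, each summand of (\ref{deff0}) decomposes as $\mathbb{E}\{Y_k - f_k\}^2 = \mathbb{E}\,\mathrm{Var}(Y_k\mid\bX_k,\bZ_k) + \mathbb{E}\{\mathbb{E}(Y_k\mid\bX_k,\bZ_k) - f_k\}^2$, since $Y_k - \mathbb{E}(Y_k\mid\bX_k,\bZ_k)$ is orthogonal to every $(\bX_k,\bZ_k)$-measurable function. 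Under the model the conditional mean lies in the fitted class, so its second term vanishes and the population risk attains its irreducible lower bound; consequently any minimizer must zero out every second term simultaneously, forcing each $f_k$ to equal $\mathbb{E}(Y_k\mid\bX_k,\bZ_k)$ almost surely. First I would record the resulting identity $\bbeta_k^\top\bX_k + \bgamma_k^\top\bR(\bZ_k) = \bbeta_k'^\top\bX_k + \bgamma_k'^\top\bR'(\bZ_k)$ a.s. for each $k\in[K]$; everything afterward is linear algebra driven by the Conditions.

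For $\bbeta_k' = \bbeta_k$, set $\boldsymbol{\delta}_k = \bbeta_k' - \bbeta_k$, so that $\boldsymbol{\delta}_k^\top\bX_k = \bgamma_k^\top\bR(\bZ_k) - \bgamma_k'^\top\bR'(\bZ_k)$ a.s., whose right-hand side is a function of $\bZ_k$ alone. Taking conditional expectation given $\bZ_k$ and subtracting removes this function, leaving $\boldsymbol{\delta}_k^\top\{\bX_k - \mathbb{E}(\bX_k\mid\bZ_k)\} = 0$ a.s.; multiplying by $\{\bX_k - \mathbb{E}(\bX_k\mid\bZ_k)\}$ and taking expectations gives $\mathbb{E}[\{\bX_k - \mathbb{E}(\bX_k\mid\bZ_k)\}\{\bX_k - \mathbb{E}(\bX_k\mid\bZ_k)\}^\top]\,\boldsymbol{\delta}_k = 0$, so Condition~\ref{assumKL} forces $\boldsymbol{\delta}_k = 0$. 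I expect this partialling-out step to be routine.

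The crux, and the step I expect to be the main obstacle, is producing one invertible $\Lambda$ valid across all $k$ rather than a separate transformation per domain. With $\bbeta_k' = \bbeta_k$ the identity reduces to $\bgamma_k^\top\bR(\bZ) = \bgamma_k'^\top\bR'(\bZ)$ for all $\bZ$ in the support and all $k$. By Condition~\ref{assumind}(a) I would choose indices $k_1,\dots,k_p$ with $\Gamma = [\bgamma_{k_1},\dots,\bgamma_{k_p}]$ invertible; stacking the $p$ identities for these indices gives $\Gamma^\top\bR(\bZ) = \Gamma'^\top\bR'(\bZ)$ with $\Gamma' = [\bgamma_{k_1}',\dots,\bgamma_{k_p}']$, hence $\bR(\bZ) = M\bR'(\bZ)$ where $M = (\Gamma^\top)^{-1}\Gamma'^\top$. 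Evaluating at the points $\bZ_1,\dots,\bZ_p$ of Condition~\ref{assumind}(b) gives $[\bR(\bZ_1),\dots,\bR(\bZ_p)] = M\,[\bR'(\bZ_1),\dots,\bR'(\bZ_p)]$; the left factor is invertible, so $M$ has full rank and I set $\Lambda = M$, giving $\bR' = \Lambda^{-1}\bR$. The delicate point is that $\Lambda$ is built from only the $p$ anchor indices yet must reproduce $\bgamma_k' = \Lambda^\top\bgamma_k$ for every $k$; to close this I substitute $\bR = \Lambda\bR'$ into $\bgamma_k^\top\bR(\bZ) = \bgamma_k'^\top\bR'(\bZ)$ to obtain $(\Lambda^\top\bgamma_k - \bgamma_k')^\top\bR'(\bZ) = 0$ for all $\bZ$, and observe that $[\bR'(\bZ_1),\dots,\bR'(\bZ_p)] = \Lambda^{-1}[\bR(\bZ_1),\dots,\bR(\bZ_p)]$ is invertible so its columns span $\mathbb{R}^p$, forcing $\bgamma_k' = \Lambda^\top\bgamma_k$ for all $k\in[K]$ and completing the proof.
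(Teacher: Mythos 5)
Your proposal is correct and follows essentially the same route as the paper: the orthogonality of $\bX_k-\mathbb{E}(\bX_k\mid\bZ_k)$ to functions of $\bZ_k$ combined with Condition~\ref{assumKL} gives $\bbeta_k'=\bbeta_k$, and stacking $p$ linearly independent $\bgamma$'s from Condition~\ref{assumind}(a) into an invertible matrix, then invoking Condition~\ref{assumind}(b) to get invertibility of the resulting $\Lambda$, is exactly the paper's argument (note only that the paper applies Conditions~\ref{assumind}(a)--(b) to the starred minimizer $\{\bgamma_{\ast k},\bR_\ast\}$ for which they are literally stated, whereas you apply them to an arbitrary one; this is absorbed by taking one of your two parameter sets to be the starred one, or by composing two such transformations). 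Your explicit closing step verifying $\bgamma_k'=\Lambda^\top\bgamma_k$ for every $k$, not just the $p$ anchor indices, is a detail the paper leaves implicit.
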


Theorem \ref{theidenti} shows that the representation function $\bR$ is identifiable up to a multiplicative matrix transformation if Conditions \ref{assumKL}-\ref{assumind} are satisfied.

In the following, we use a simple example to illustrate the identifiability of our proposed model.
We set the representation dimension as $p=2$, $3$, and $5$, the dimension of non-linear part as $q=p$, and the dimension of linear part as $d=1$.
The data generating process is set as $Y = \beta X + \bgamma^\top \bR(\bZ) + \epsilon$, where $\bbeta$ and $\bgamma$ are generated from standard normal distribution, $X\in\mathbb{R}$ and $\bZ\in\mathbb{R}^q$ are from standard normal distribution and $\epsilon\sim\mathcal{N}(0, 0.3^2)$. The representation functions $\bR(\cdot)$ are generated by the following univariate functions: $\sin(\pi x)$, $\cos(\pi x)$, $2 \sqrt{|x|} - 1$, $(1 - |x|)^2$, $1 / (1 + \exp(-x))$ and $- \sin(x)$.
The linear coefficients are heterogeneous in the source dataset.
We set the sample size in source dataset as $n_{k}=2000$ for all $k=1,\dots,K$ and let $K=8$.
The estimated representation function is transformed by a linear transformation which is identified by minimizing the distance between the transformed representation and the true one.
Figure~\ref{fig:toy-example} shows the transformed learned representation (solid line) by the proposed method and the true representation function (dashed line).

\begin{figure}[ht]
  \centering
  \includegraphics[width=0.32\linewidth]{./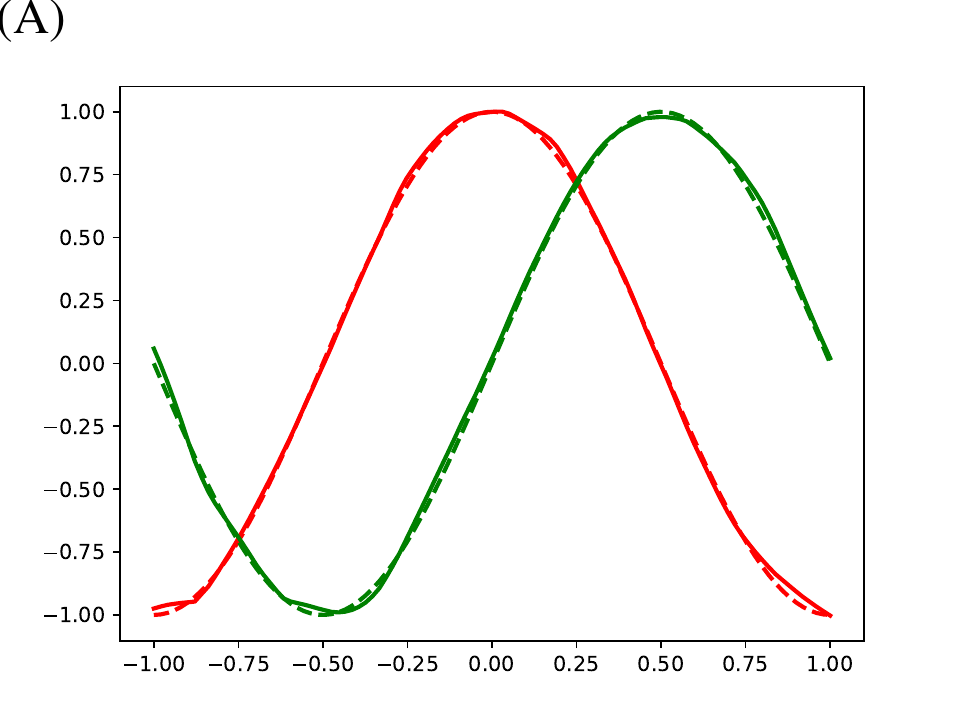}
  \includegraphics[width=0.32\linewidth]{./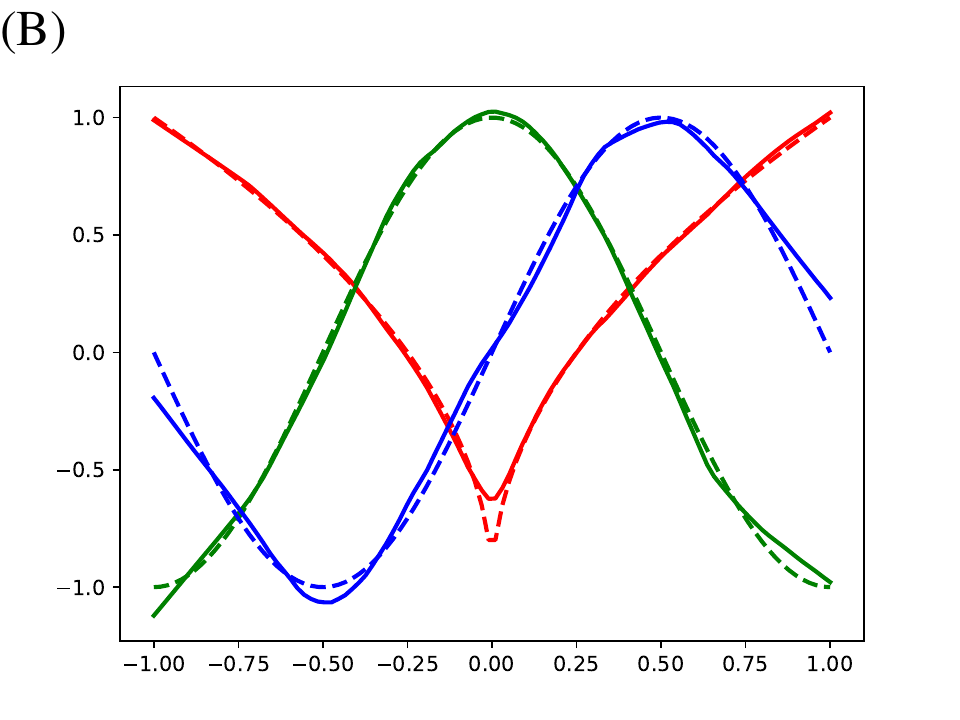}
  \includegraphics[width=0.32\linewidth]{./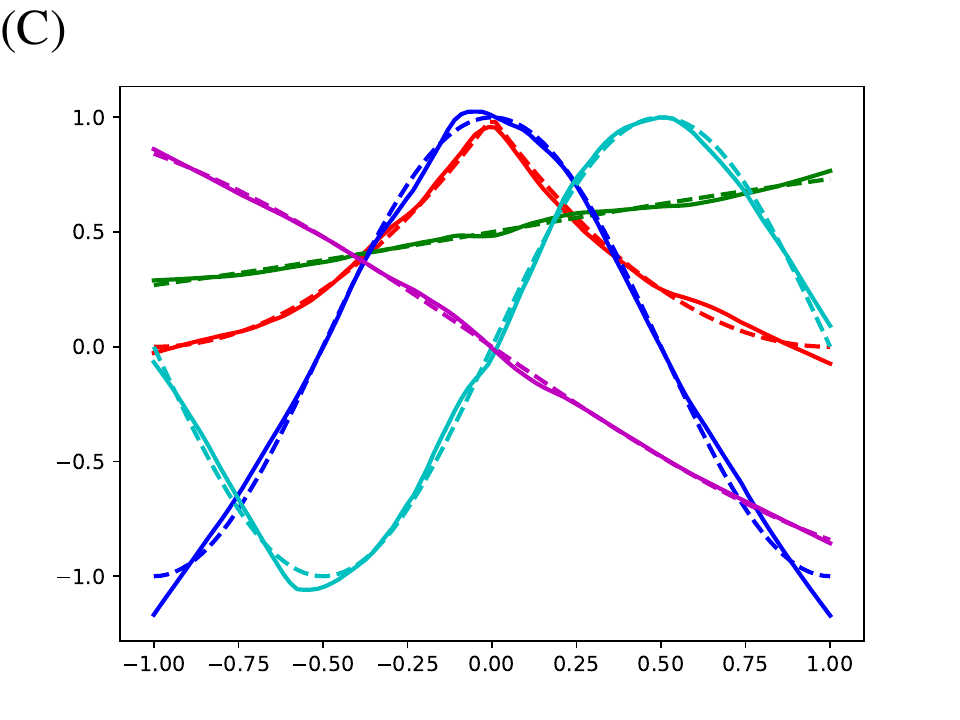}
  \caption{Demonstration of learned representation by RTL.
  The left panel uses $\sin(\pi x)$ and $\cos(\pi x)$ as the true representation functions.
  The middle panel uses $2 \sqrt{|x|} - 1$, $\sin(\pi x)$ and $\cos(\pi x)$.
  The right panel uses $(1 - |x|)^2$, $1 / (1 + \exp(-x))$, $- \sin(x)$, $\sin(\pi x)$ and $\cos(\pi x)$.
  The solid line represents the learned representation function and the dashed line represents the true representation function.}
  \label{fig:toy-example}
\end{figure}

\subsection{Convergence rate}
\label{sec:rate of h}
Based on the identifiability of the representations, we derive the convergence rate of the estimated representation function in this section.
We impose a set of regularity conditions to guarantee the consistency of the estimated representation function.

\begin{condition}\label{conb}
  (a) For any $\bbeta\in \calB$, there exists a constant $B_{\beta}$ such that $\|\bbeta\|_2\leq B_{\beta}$, where for any $d$-vector ${\ba}$, $\|{\ba}\|_2=\sqrt{\sum_{j=1}^d a_j^2}$.
  For any $\bgamma\in \Gamma$, there exists a constant $B_{\gamma}$ such that $\|\bgamma\|_2\leq B_{\gamma}$.  (b) The covariate $\bX\in\calX\subseteq \mathbb{R}^d$ satisfies that $\|\bX\|_2\leq B_X$.
  The representation function $\bR(\cdot)\in \calR$ satisfies $\|\bR(\bZ)\|_2\leq B_R$ for any $\bZ\in\calZ$.
  (c) The response variable $Y_{k}\in \calY\subseteq \mathbb{R}$ is subexponentially distributed for $k\in[K]$.
\end{condition}

Condition \ref{conb} includes standard assumptions in regression problems \citep{tripuraneni2020theory,jiao2023deep}. We further assume that the representation function is in the H\"{o}lder class.
\begin{definition}
  Let $\kappa=s+\nu>0$, $\nu\in(0,1]$ and $s=\lfloor\kappa\rfloor\in \mbN_0$, where $\lfloor\kappa\rfloor$ denotes the largest integer strictly smaller than $\kappa$ and $\mbN_0$ denotes the set of nonnegative integers. For a finite constant $B_0>0$, define the H\"{o}lder class $\calH^{\kappa}(\calZ, B_0)$ as
  \begin{align*}
    &\calH^{\kappa}(\calZ, B_0)\\&=\left\{R(\cdot):\calZ\rightarrow \mbR: \max\limits_{\|\omega\|_1\leq s} \|\partial^\bomega R\|_{\infty}\leq B_0, \max\limits_{\|\bomega\|_1=s}\sup_{\bZ\neq \bZ^\prime} \frac{|\partial^\bomega R(\bZ)-\partial^\bomega R(\bZ^\prime)|}{\|\bZ-\bZ^\prime\|_2^{\nu}}\leq B_0\right\},
  \end{align*}
  where $\partial^\bomega =\partial^{\omega_1}\cdots\partial^{\omega_q}$ with $\bomega=(\omega_1,\ldots,\omega_q)\tr \in \mbN_0^q$, and $\|\bomega\|_1=\sum_{j=1}^q \omega_j$.
\end{definition}
\begin{condition}\label{con5}
  Each element of the representation function $\bR$ belongs to the H\"{o}lder class $\calH^{\alpha}(\calZ, B_0)$.
\end{condition}

\begin{condition}\label{conRieZ}
{
The support $\calZ$ of the representation function $\bR: \mathbb{R}^q \to  \mathbb{R}^p$ belongs to a compact $p$-dimensional Riemannian manifold isometrically embedded in $\mathbb{R}^p$ with $p \le q.$
}
\end{condition}

\begin{condition}\label{con4}
{
The dimensions $\{p, q\}$ of $\{\bZ, \bR\}$ satisfy the following condition:
\begin{align*}
  \underline{n}^{-1/2} p^{1/2} \log N=o(1)\;\mbox{and}\;
p{(D+2+\log q)^{1/2}}\prod_{i=0}^D (p_i+1) (\log N)^2N^{-1/2} =o(1),
\end{align*}
where $D$ is the depth of the neural network, $\underline{n}=\min_{1\le k\le K} n_k,$
and $N=\sum_{k=1}^Kn_k.$
}
\end{condition}

Condition~\ref{conRieZ} is a low-dimensional manifold condition on $\calZ$.
Actually, Condition \ref{conRieZ} is not necessary for the convergence rate of the representation function. But it would guarantee a higher convergence rate of the representation function when $q$ is very large.
Condition~\ref{con4} pertains to the dimensionality of the model in relation to the sample sizes. This condition accommodates the presence of a moderately high-dimensional covariate vector, allowing the dimensions $\{p, q\}$ to increase indefinitely, provided that their rate of divergence meets the specified constraints. While this condition is met in numerous applications, it does not cover sparse, high-dimensional scenarios where the number of covariates exceeds the sample size.

For representation functions $\bR$ and $\bR^{\prime}$, denote
$d_2({\bR},\bR^{\prime})=(\expect \|{\bR}(\bZ)-\bR^{\prime}(\bZ)\|_2^2)^{1/2}.$
Let
$\Delta_N =  p^{1/2}N^{-{s}/({2s+p\log q})}+s_1^{1/2}(\log N)^2 N^{-1/2}$, where
$s_1=\max\{Kd,Kp,S\}.$  Typically, the size of the neural network $S > \max\{Kd, Kp\},$
thus $s_1$ is simply the network size used in the estimation.

{
\begin{theorem}\label{thediffh}
  Suppose Conditions \ref{assumKL}--\ref{con4} hold, there exists an invertible matrix $\Lambda_{\ast}$ such that
  \begin{align*}
    d_2(\widehat{\bR},\bR^{\ast})=O_p(\Delta_N),
  \end{align*}
  where $\bR^\ast=\Lambda_{\ast}^{-1}\bR_\ast$.
\end{theorem}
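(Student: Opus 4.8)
The plan is to bound $d_2(\widehat{\bR},\bR^\ast)$ by first controlling the excess prediction risk of the source-domain fit through an approximation--estimation decomposition, and then converting that prediction bound into a representation-distance bound via the identifiability apparatus behind Theorem~\ref{theidenti}. Since the model is correctly specified with mean-zero noise independent of $(\bX_k,\bZ_k)$, the population excess risk of any feasible candidate equals
\[
\frac{1}{K}\sum_{k=1}^{K}\mathbb{E}\big\{(\bbeta_{\ast k}-\bbeta_k)^\top\bX_k+\bgamma_{\ast k}^\top\bR_\ast(\bZ_k)-\bgamma_k^\top\bR(\bZ_k)\big\}^2.
\]
First I would fix a ReLU network $\bR_0\in\mathcal{R}$ that best approximates $\bR_\ast$ and, since the empirical minimizer $\{(\widehat{\bbeta}_k,\widehat{\bgamma}_k)_{k=1}^K,\widehat{\bR}\}$ attains the smallest value of the empirical risk \eqref{estMul}, compare it against $\bR_0$ paired with $(\bbeta_{\ast k},\bgamma_{\ast k})$. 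Adding and subtracting the population risk turns this basic inequality into an excess-risk bound consisting of an approximation term (the population risk of $\bR_0$) and an empirical-process term (the uniform deviation between empirical and population risk).

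For the approximation term, I would invoke ReLU approximation theory for H\"{o}lder-class functions, Condition~\ref{con5}, supported on a compact $p$-dimensional Riemannian manifold, Condition~\ref{conRieZ}. Results of the type in \citet{jiao2023deep} give a network of the prescribed width and depth for which $d_2(\bR_0,\bR_\ast)=O(p^{1/2}N^{-s/(2s+p\log q)})$, the first term of $\Delta_N$; the manifold assumption is exactly what replaces the ambient dimension $q$ by the intrinsic dimension $p$ in the exponent, thereby dampening the curse of dimensionality.

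For the empirical-process term, I would bound the localized uniform deviation of the empirical risk over the feasible class, which consists of the network class $\mathcal{R}=\calN\calN(W,D,B_{\btheta})$ and the finite-dimensional coefficient blocks $(\bbeta_k,\bgamma_k)_{k=1}^K$. Using the covering-number estimate for ReLU networks, $\log\mathcal{N}\lesssim S\log(\cdot)$, together with the subexponential tail of $Y_k$ from Condition~\ref{conb}(c) to handle the unbounded response, a chaining/Bernstein argument yields a stochastic error of order $s_1^{1/2}(\log N)^2N^{-1/2}$ with $s_1=\max\{Kd,Kp,S\}$, the second term of $\Delta_N$, which Condition~\ref{con4} forces to be $o(1)$. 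Combining the two pieces and using Condition~\ref{assumKL} to project out the $\bX_k$ (hence the $\bbeta$) contribution leaves
\[
\frac{1}{K}\sum_{k=1}^{K}\mathbb{E}\big\{\widehat{\bgamma}_k^\top\widehat{\bR}(\bZ_k)-\bgamma_{\ast k}^\top\bR_\ast(\bZ_k)\big\}^2=O_p(\Delta_N^2).
\]

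The final and most delicate step is to pass from this prediction bound to $d_2(\widehat{\bR},\bR^\ast)$. Mirroring the proof of Theorem~\ref{theidenti}, I would use Condition~\ref{assumind}(a) to pick $p$ source domains $k_1,\dots,k_p$ whose coefficients are linearly independent, stack them into the invertible $\Gamma_\ast=[\bgamma_{\ast k_1},\dots,\bgamma_{\ast k_p}]$, and set $\bR^\ast=\Lambda_\ast^{-1}\bR_\ast$ with $\Lambda_\ast$ built from $\Gamma_\ast$. The displayed bound controls each $\mathbb{E}\{\widehat{\bgamma}_{k_i}^\top\widehat{\bR}(\bZ)-\bgamma_{\ast k_i}^\top\bR_\ast(\bZ)\}^2$, so $\widehat{\Gamma}^\top\widehat{\bR}$ is $L^2$-close to $\Gamma_\ast^\top\bR_\ast$; inverting and showing $\widehat{\Gamma}=[\widehat{\bgamma}_{k_1},\dots,\widehat{\bgamma}_{k_p}]$ is well-conditioned (via consistency of the $\bgamma$ estimates and Condition~\ref{assumind}(b)) transfers the rate to $\widehat{\bR}$ itself. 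I expect this disentanglement to be the main obstacle: it requires simultaneously controlling the random estimated matrix $\widehat{\Gamma}$, keeping its smallest singular value bounded away from zero so that inversion does not inflate the rate, and ensuring the constant absorbed into $\Lambda_\ast$ is independent of $N$. Tracking the $p$-, $K$-, and $s_1$-dependence cleanly through all of these steps so the final bound matches $\Delta_N$ is the most technical part of the argument.
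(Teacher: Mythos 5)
Your overall architecture matches the paper's proof closely: a basic inequality against a best network approximant paired with the true coefficients, the approximation rate $p^{1/2}N^{-s/(2s+p\log q)}$ from the ReLU/H\"older/manifold results of \citet{jiao2023deep} (the paper's $\tbR$ plays the role of your $\bR_0$), a localized empirical-process bound of order $s_1^{1/2}(\log N)^2N^{-1/2}$ obtained from covering numbers of $\calN\calN(W,D,B_{\btheta})$ (the paper's Lemmas~\ref{lemog} and \ref{bR}, via bracketing entropy and Rademacher/Gaussian complexities plus Theorem 3.4.1 of van der Vaart and Wellner), and the orthogonal decomposition under Condition~\ref{assumKL} to strip out the $\bbeta$ contribution. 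Where you genuinely diverge is the final disentanglement step. The paper does \emph{not} invert a stacked matrix of estimated coefficients $\widehat{\Gamma}=[\wbgamma_{k_1},\dots,\wbgamma_{k_p}]$; instead it runs a sup--inf ``task diversity'' argument (Lemma 6 of \citet{tripuraneni2020theory}), bounding $\sup_{\bgamma'}\inf_{\bgamma}\frac{1}{K}\sum_k\rmE[\{\wbR^\top\bgamma_k-\bR^{\ast\top}\bgamma_k'\}^2]$ by a constant times $\frac{1}{K}\sum_k\rmE[\{\wbR^\top\wbgamma_k-\bR^{\ast\top}\bgamma_k^\ast\}^2]$, where the constant depends only on the \emph{population} coefficients $\{\bgamma_{\ast k}\}$ through Condition~\ref{assumind}(a). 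This buys two things your route has to work for. First, it avoids the circularity you flag: lower-bounding the smallest singular value of the random $\widehat{\Gamma}$ requires consistency of the $\wbgamma_k$, which is itself only available after the disentanglement (in the paper it is a by-product of $d_2(\widehat{\bpsi},\bpsi^\ast)\to 0$, established beforehand via Theorem 5.7 of van der Vaart together with the identifiability theorem). Second, your inversion produces $\wbR\approx(\widehat{\Gamma}^\top)^{-1}\Gamma_\ast^\top\bR_\ast$, i.e.\ closeness to a \emph{data-dependent} linear transform of $\bR_\ast$, whereas the theorem asserts a fixed $\Lambda_\ast$ with $\bR^\ast=\Lambda_\ast^{-1}\bR_\ast$; you would need an extra step showing $(\widehat{\Gamma}^\top)^{-1}\Gamma_\ast^\top$ stabilizes. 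Your route can likely be completed, but the sup--inf argument keyed to the fixed $\Lambda_\ast$ coming out of Theorem~\ref{theidenti} is the cleaner path and is what the paper uses. One further small point: passing from the average excess risk over $k$ to control of the individual terms for $k_1,\dots,k_p$ costs a factor of $K$, harmless only because $K$ is treated as fixed.
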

}
Theorem~\ref{thediffh} establishes the convergence rate of the estimated representation.
The rate is determined by two terms. The first term represents the approximation error,
which minimizes the distance from $\bR^\ast$ to $\calR$.
The second term represents the stochastic error.

\subsection{Asymptotic normality}

In this section, we establish the asymptotic normality of the estimated primary parameter within the target domain. It is a common scenario in transfer learning that the total sample size $N$ from the source domains significantly exceeds the sample size $n_0$ in the target domain. Our derivation of asymptotic distribution is conducted with this disparity in sample sizes taken into consideration.

We need the following condition.

\begin{condition}\label{inv}
  The matrix $\bJ_0=\expect[\{\bX_0-\bfm^\ast(\bZ_0)\}\{\bX_0-\bfm^\ast(\bZ_0)\}\tr ]$ is invertible and $\expect[\{\bX_0-\bfm^\ast(\bZ_0)\}\tr \{\bX_0-\bfm^\ast(\bZ_0)\}]<\infty$, where $\bfm^\ast(\bZ_0)=\ExRaz$.
\end{condition}
Condition \ref{inv} is fairly standard in the semi-parametric regression literature, which
is needed for constructing a semi-parametric efficient estimator of $\bbeta_0.$
We note that the independence between $\bX_k$ and $\bZ_k$ is not required for $k\in[K]$, throughout the paper.

To remove the confounding effect of $\bR^{\ast}(\bZ_0)$, we consider finding an $d\times p$
matrix $\bmu^\ast$ that satisfies the orthogonality equation,
\begin{align*}
    \expect\left[\left\{\bX_0-\bmu\bR^\ast(\bZ_0)\right\} \bR^{\ast\top}(\bZ_0) \right]=\bf{0}.
\end{align*}
This is equivalent to find a  $\bmu^\ast$ that is a  minimizer of $\expect[\|\bX_0-\bmu\bR^\ast(\bZ_0)\|_2^2]$. Moreover, the efficient score for $\bbeta_{\ast 0}$ is $\{\bX_0-\bmu^{\ast}\bR^{\ast}(\bZ_0)\}\epsilon_0$.
The following theorem establishes the asymptotic normality of $\bbeta_{\ast 0}$.

\begin{theorem}\label{thenorm}
  Suppose Conditions \ref{assumKL}-\ref{inv} hold, we have
  \begin{align}\label{temnorm0}
    \sqrt{n_0}(\wbbetaz-\bbeta_{\ast 0})&=\bJ_0^{-1}\Big[\frac{1}{\sqrt{{n}_0}}\sum_{i=1}^{n_0}
    \{\bxiz-\bmu^{\ast}\bRaiz\}\epsiz\Big]+O_p(\sqrt{n_0}\Delta_N^2).
  \end{align}
  where $\Delta_N =  p^{1/2}N^{-{s}/({2s+p\log q})}+s_1^{1/2}(\log N)^2 N^{-1/2}$.
Therefore, if $n_0^{1/2} \Delta_N^2\to 0$ as $N \to \infty$, that is, $n_0^{1/2}s_1(\log N)^4 N^{-1}\to 0$ and ${n_0}^{1/2}pN^{-{2s}/({2s+p\log q})}\to 0$ as $N \to \infty$,
 we have,
 \begin{align}\label{temnorm}
 \sqrt{n_0}(\wbbetaz-\bbeta_{\ast 0})    \stackrel{D}{\to} N(0,\sigma^2_0 {\bJ_0}^{-1}),  \text{ as } n_0 \to \infty \text{ and } N \to \infty.
  \end{align}
\end{theorem}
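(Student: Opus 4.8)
The plan is to obtain a closed form for $\widehat{\bbeta}_0$ via the Frisch--Waugh--Lovell (partialling-out) device applied to the least-squares problem (\ref{estTar}), and then to extract a single Gaussian leading term while controlling two distinct sources of error: the target-sample noise and the discrepancy $\widehat{\bR}-\bR^\ast$ inherited from Step~E1. Concretely, let $\widehat{\bmu}$ be the target-sample regression coefficient of $\bX_0$ on $\widehat{\bR}(\bZ_0)$, let $\widetilde{\bX}_{0i}=\bX_{0i}-\widehat{\bmu}\,\widehat{\bR}(\bZ_{0i})$ be the associated residuals, and set $\widehat{\bJ}_0=n_0^{-1}\sum_i\widetilde{\bX}_{0i}\widetilde{\bX}_{0i}^\top$. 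Eliminating $\widehat{\bgamma}_0$ from the normal equations gives
\[
\widehat{\bbeta}_0=\widehat{\bJ}_0^{-1}\Big(n_0^{-1}\sum_i \widetilde{\bX}_{0i}Y_{0i}\Big).
\]
Rewriting the target model through the identifiable $\bR^\ast=\Lambda_\ast^{-1}\bR_\ast$ of Theorem~\ref{thediffh}, namely $Y_{0i}=\bbeta_{\ast0}^\top\bX_{0i}+\bgamma_0^{\ast\top}\bR^\ast(\bZ_{0i})+\varepsilon_{0i}$ with $\bgamma_0^\ast=\Lambda_\ast^\top\bgamma_{\ast0}$, and using the sample orthogonality $n_0^{-1}\sum_i\widetilde{\bX}_{0i}\widehat{\bR}(\bZ_{0i})^\top=\mathbf 0$, I would arrive at
\[
\widehat{\bbeta}_0-\bbeta_{\ast0}=\widehat{\bJ}_0^{-1}\Big\{n_0^{-1}\sum_i\widetilde{\bX}_{0i}\varepsilon_{0i}-n_0^{-1}\sum_i\widetilde{\bX}_{0i}\boldsymbol{\delta}_i^\top\bgamma_0^\ast\Big\},
\]
where $\boldsymbol{\delta}_i=\widehat{\bR}(\bZ_{0i})-\bR^\ast(\bZ_{0i})$.

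The crux is the second (``bias'') average. I would write $\widetilde{\bX}_{0i}=\boldsymbol{\zeta}_{0i}+(\bmu^\ast-\widehat{\bmu})\bR^\ast(\bZ_{0i})-\widehat{\bmu}\,\boldsymbol{\delta}_i$ with the population residual $\boldsymbol{\zeta}_{0i}=\bX_{0i}-\bmu^\ast\bR^\ast(\bZ_{0i})$, which splits the bias into three pieces. Since the source and target samples are independent, I would condition on the source data, under which $\widehat{\bR}$, $\boldsymbol{\delta}_i$ and $\widehat{\bmu}$ act as \emph{fixed} functions of $\bZ_{0i}$; this decouples the two error sources and reduces the analysis to target-sample moments of deterministic transformations, avoiding empirical-process control over the growing class $\mathcal{R}$ for this step. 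Invoking Condition~\ref{inv} (which identifies $\bmu^\ast\bR^\ast(\bZ_0)$ with $\bfm^\ast(\bZ_0)=\expect[\bX_0\mid\bZ_0]$, so $\expect[\boldsymbol{\zeta}_0\mid\bZ_0]=\mathbf 0$), the piece $n_0^{-1}\sum_i\boldsymbol{\zeta}_{0i}\boldsymbol{\delta}_i^\top\bgamma_0^\ast$ is conditionally mean-zero with variance $O(n_0^{-1}\expect\|\boldsymbol{\delta}\|_2^2)$, hence of smaller stochastic order. The remaining two pieces are genuine products of two small factors: $\widehat{\bmu}-\bmu^\ast=O_p(\Delta_N+n_0^{-1/2})$ (controlled by the same partialling-out argument and Theorem~\ref{thediffh}) times an $O_p(\Delta_N)$ cross-moment, and $\widehat{\bmu}$ times $n_0^{-1}\sum_i\boldsymbol{\delta}_i\boldsymbol{\delta}_i^\top$. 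The conditioning trick is exactly what converts the $L_2$ bound $d_2(\widehat{\bR},\bR^\ast)=O_p(\Delta_N)$ into $n_0^{-1}\sum_i\|\boldsymbol{\delta}_i\|_2^2=O_p(\Delta_N^2)$, because conditionally $\expect[n_0^{-1}\sum_i\|\boldsymbol{\delta}_i\|_2^2]=d_2(\widehat{\bR},\bR^\ast)^2$. Collecting terms, the dominant bias contribution is $O_p(\Delta_N^2)$, which after scaling by $\sqrt{n_0}$ produces the $O_p(\sqrt{n_0}\Delta_N^2)$ remainder in (\ref{temnorm0}).

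For the stochastic term I would replace $\widetilde{\bX}_{0i}$ by $\boldsymbol{\zeta}_{0i}$ in $n_0^{-1}\sum_i\widetilde{\bX}_{0i}\varepsilon_{0i}$; the replacement error is again a small factor times a conditionally mean-zero average ($n_0^{-1}\sum_i\bR^\ast(\bZ_{0i})\varepsilon_{0i}$ or $n_0^{-1}\sum_i\boldsymbol{\delta}_i\varepsilon_{0i}$, both using $\expect[\varepsilon_0\mid\bZ_0]=0$), so $n_0^{-1}\sum_i\widetilde{\bX}_{0i}\varepsilon_{0i}=n_0^{-1}\sum_i\boldsymbol{\zeta}_{0i}\varepsilon_{0i}+o_p(n_0^{-1/2})$. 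A Lindeberg central limit theorem for the i.i.d.\ mean-zero summands $\boldsymbol{\zeta}_{0i}\varepsilon_{0i}$, whose covariance equals $\sigma_0^2\bJ_0$ under homoscedasticity and Condition~\ref{inv}, gives $n_0^{-1/2}\sum_i\boldsymbol{\zeta}_{0i}\varepsilon_{0i}\to N(\mathbf 0,\sigma_0^2\bJ_0)$. A law of large numbers (via the same conditioning and the consistency of $\widehat{\bmu}$ and $\widehat{\bR}$) yields $\widehat{\bJ}_0\to\bJ_0$ in probability, and Slutsky's theorem assembles these into (\ref{temnorm0}). The stated rate condition $n_0^{1/2}\Delta_N^2\to0$ (equivalently the two displayed conditions) renders the remainder $o_p(1)$, delivering (\ref{temnorm}).

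The main obstacle is precisely the bias analysis: showing the cross term $n_0^{-1}\sum_i\widetilde{\bX}_{0i}\boldsymbol{\delta}_i^\top\bgamma_0^\ast$ is $O_p(\Delta_N^2)$ rather than the naive $O_p(\Delta_N)$. This rests on a Neyman-orthogonality mechanism requiring (i) the population residual $\boldsymbol{\zeta}_0$ to be orthogonal to all functions of $\bZ_0$, which is where Condition~\ref{inv}'s identification of $\bfm^\ast$ enters, and (ii) a rate for $\widehat{\bmu}-\bmu^\ast$ no worse than $\Delta_N$. The subtleties I would watch are the randomness of $\widehat{\bR}$ as an element of the growing network class $\mathcal{R}$, which I neutralize by conditioning on the independent source sample, and the uniform boundedness from Condition~\ref{conb} needed to pass from $L_2$ to empirical moments, together with Condition~\ref{con4} guaranteeing that the various lower-order mean-zero terms are negligible relative to the $\sqrt{n_0}$-scaled leading term.
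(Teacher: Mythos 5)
Your proposal is correct and follows essentially the same route as the paper: the Frisch--Waugh--Lovell partialling-out you describe is exactly the paper's ``empirical orthogonal score'' obtained by combining the two normal equations with the definition of $\wbmu$, and both arguments then split the error into a bias term driven by $\widehat{\bR}-\bR^{\ast}$ (bounded as $O_p(n_0^{-1/2}\Delta_N+\Delta_N^2)$ using $\wbmu-\bmu^{\ast}=O_p(\Delta_N+n_0^{-1/2})$ and source--target independence) and a noise term whose leading part $n_0^{-1}\sum_i\{\bxiz-\bmu^{\ast}\bR^{\ast}(\bZ_{0i})\}\epsiz$ is handled by the CLT and Slutsky. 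Your write-up is in fact more explicit than the paper's about why the cross term is $O_p(\Delta_N^2)$ rather than $O_p(\Delta_N)$ (the conditional mean-zero property of $\bX_0-\bfm^{\ast}(\bZ_0)$ given the source sample), which the paper compresses into its displays (\ref{tem1}) and (\ref{tem3}).
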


Through the data augmentation by the abundant source dataset, we prove that the estimator for $\bbeta_{\ast 0}$ attains $\sqrt{n_0}$-consistent and asymptotic normality.
The asymptotic expression of $\wbbetaz-\bbeta_{\ast 0}$ in Theorem \ref{thenorm} indicates the estimator $\wbbetaz$ attains the information bound, so it is semiparametrically efficient. When the variance term $\sigma_0^2\bJ_0^{-1}$ is unknown, we use a plug-in estimator. Based on Theorem \ref{thenorm}, a natural estimator of $\bmu^\ast$ is given by solving the equation
 $ \sum_{i=1}^{n_0}\bxiz \widehat{\bR}\tr (\bZ_{0i}) -\bmu \sum_{i=1}^{n_0}\wbRiz\widehat{\bR}\tr (\bZ_{0i}) =0 ,$
which leads to
\begin{equation}\label{estmu}
  \wbmu=\sum_{i=1}^{n_0}\bxiz\widehat{\bR}\tr (\bZ_{0i}) \left\{\sum_{i=1}^{n_0}\wbRiz\widehat{\bR}\tr (\bZ_{0i}) \right\}^{-1}.
\end{equation}
Combining \eqref{temnorm} and \eqref{estmu}, we can estimate the variance of $\wbbetaz$  by
$
  \widehat{\bSigma} 
  =\wbJz^{-1} \widehat{\mathbf{A}}\wbJz^{-1},
$
where
\begin{align*}
\widehat{\mathbf{A}}& =
 \frac{1}{n_0}\sum_{i=1}^{n_0}\left\{(Y_{i0}- \wbbetaz\tr\bxiz- \hat{\bgamma}_0\tr \widehat{\bR}(\bZ_{0i}) )^2 (\bxiz-\wbmu\wbRiz)(\bxiz-\wbmu\wbRiz)\tr\right\}, \\
 \wbJz &=\frac{1}{n_0} \sum_{i=1}^{n_0}(\bxiz-\wbmu\wbRiz)(\bxiz - \wbmu\wbRiz)\tr.
 \end{align*}

The next corollary shows that  $\widehat{\bSigma}$ is consistent.
\begin{corollary}\label{corvar}
  Under Conditions \ref{assumKL}-\ref{inv}, if $\epsiz$ and the components of $\bX_0-\ExbRaz$ have bounded fourth moments, then
  \begin{align*}
   \widehat{\bSigma} 
   \pover\sigma^2_0 {\bJ_0}^{-1}.
  \end{align*}
\end{corollary}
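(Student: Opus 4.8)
The plan is to reduce the statement to two separate convergence results and then invoke the continuous mapping theorem. Writing $\widehat{\bSigma}=\wbJz^{-1}\widehat{\mathbf{A}}\,\wbJz^{-1}$, it suffices to show that $\wbJz\pover\bJ_0$ and $\widehat{\mathbf{A}}\pover\sigma^2_0\bJ_0$; since matrix inversion is continuous on the set of invertible matrices and $\bJ_0$ is invertible by Condition~\ref{inv}, these two facts give $\widehat{\bSigma}\pover\bJ_0^{-1}(\sigma_0^2\bJ_0)\bJ_0^{-1}=\sigma_0^2\bJ_0^{-1}$. The structural feature I would exploit throughout is that $\widehat{\bR}$ is estimated solely from the source data and is therefore independent of the target sample $\{(Y_{0i},\bX_{0i},\bZ_{0i})\}_{i=1}^{n_0}$. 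Conditioning on the source data, $\widehat{\bR}$ acts as a fixed function, so Theorem~\ref{thediffh} transfers to empirical target averages: for instance $\frac{1}{n_0}\sum_{i=1}^{n_0}\|\widehat{\bR}(\bZ_{0i})-\bR^\ast(\bZ_{0i})\|_2^2$ has conditional mean $d_2(\widehat{\bR},\bR^\ast)^2=O_p(\Delta_N^2)=o_p(1)$, and a conditional Markov argument upgrades this to an $O_p(\Delta_N^2)$ bound for the empirical average itself.

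First I would establish $\wbmu\pover\bmu^\ast$. From \eqref{estmu}, $\wbmu$ is a ratio of the empirical cross-moment $\frac{1}{n_0}\sum_i\bX_{0i}\widehat{\bR}\tr(\bZ_{0i})$ and the empirical Gram matrix $\frac{1}{n_0}\sum_i\widehat{\bR}(\bZ_{0i})\widehat{\bR}\tr(\bZ_{0i})$. Replacing $\widehat{\bR}$ by $\bR^\ast$ costs only $O_p(\Delta_N)$ by Cauchy--Schwarz together with the boundedness in Condition~\ref{conb} and the conditioning argument above, and the remaining averages converge by the law of large numbers to $\expect[\bX_0\bR^{\ast}\tr(\bZ_0)]$ and $\expect[\bR^\ast(\bZ_0)\bR^{\ast}\tr(\bZ_0)]$ respectively; their ratio is exactly the projection coefficient $\bmu^\ast$. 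With $\wbmu\pover\bmu^\ast$ in hand, I would control $\wbJz$ by writing the fitted residual vector $\widehat{\mathbf{v}}_i=\bX_{0i}-\wbmu\widehat{\bR}(\bZ_{0i})$ and the oracle version $\mathbf{v}_i=\bX_{0i}-\bmu^\ast\bR^\ast(\bZ_{0i})$, and decomposing $\widehat{\mathbf{v}}_i-\mathbf{v}_i=-(\wbmu-\bmu^\ast)\widehat{\bR}(\bZ_{0i})-\bmu^\ast\{\widehat{\bR}(\bZ_{0i})-\bR^\ast(\bZ_{0i})\}$. The leading term $\frac{1}{n_0}\sum_i\mathbf{v}_i\mathbf{v}_i\tr$ converges to $\bJ_0$ by the law of large numbers (the bounded fourth-moment hypothesis guarantees integrability), while the cross terms and remainder are bounded via Cauchy--Schwarz by products of $\|\wbmu-\bmu^\ast\|$, the $O_p(\Delta_N)$ target $L_2$ error of $\widehat{\bR}$, and bounded moments of $\mathbf{v}_i$, all of which vanish.

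The convergence $\widehat{\mathbf{A}}\pover\sigma_0^2\bJ_0$ is the most delicate step and where the main effort lies. Using the target model $Y_{0i}=\bbeta_{\ast 0}\tr\bX_{0i}+\bgamma_{\ast 0}\tr\bR^\ast(\bZ_{0i})+\epsiz$, I would expand the fitted residual as $\widehat{e}_i:=Y_{0i}-\wbbetaz\tr\bX_{0i}-\widehat{\bgamma}_0\tr\widehat{\bR}(\bZ_{0i})=\epsiz+(\bbeta_{\ast 0}-\wbbetaz)\tr\bX_{0i}+\{\bgamma_{\ast 0}\tr\bR^\ast(\bZ_{0i})-\widehat{\bgamma}_0\tr\widehat{\bR}(\bZ_{0i})\}$, where the last bracket is the error in recovering the identifiable confounding effect and is negligible in target $L_2$ norm by Theorem~\ref{thediffh} and the consistency of $\widehat{\bgamma}_0$. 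Squaring and substituting $\widehat{\mathbf{v}}_i$ for $\mathbf{v}_i$, the dominant term is $\frac{1}{n_0}\sum_i\epsiz^2\mathbf{v}_i\mathbf{v}_i\tr$, which converges to $\expect[\epsiz^2\mathbf{v}\mathbf{v}\tr]=\sigma_0^2\bJ_0$ by the law of large numbers under the bounded fourth-moment condition on $\epsiz$ and on the components of $\bX_0-\bfm^\ast(\bZ_0)$, together with the model's noise structure. Every remaining term carries at least one factor of an estimation error---$\wbbetaz-\bbeta_{\ast 0}=O_p(n_0^{-1/2})$, $\wbmu-\bmu^\ast$, or the $O_p(\Delta_N)$ target $L_2$ error of $\widehat{\bR}$---and I would bound each by repeated Cauchy--Schwarz against the finite fourth moments, so that all cross and higher-order terms are $o_p(1)$.

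I expect the main obstacle to be the rigorous handling of the terms containing $\widehat{\bR}-\bR^\ast$, because Theorem~\ref{thediffh} delivers control only in the integrated $d_2$ distance rather than uniformly over $\calZ$. The key device to overcome this is the independence of $\widehat{\bR}$ from the target data: conditioning on the source sample converts the population bound $d_2(\widehat{\bR},\bR^\ast)=O_p(\Delta_N)$ into the empirical bound $\frac{1}{n_0}\sum_i\|\widehat{\bR}(\bZ_{0i})-\bR^\ast(\bZ_{0i})\|_2^2=O_p(\Delta_N^2)$, after which every remainder can be controlled by Cauchy--Schwarz without any uniform-consistency requirement. Once the two limits $\wbJz\pover\bJ_0$ and $\widehat{\mathbf{A}}\pover\sigma_0^2\bJ_0$ are in place, the continuous mapping theorem yields $\widehat{\bSigma}\pover\sigma_0^2\bJ_0^{-1}$, completing the argument.
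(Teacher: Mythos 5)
Your proposal is correct and follows essentially the same route as the paper's proof: decompose the plug-in estimator into $\wbJz$ and $\widehat{\mathbf{A}}$, show the leading term converges by the law of large numbers under the fourth-moment hypothesis, absorb every remainder via Cauchy--Schwarz against factors of $\|\wbmu-\bmu^\ast\|$, $\wbbetaz-\bbeta_{\ast 0}$, and the $L_2$ error of $\widehat{\bR}$, and finish with the continuous mapping theorem. Your explicit conditioning-on-the-source-data device for converting the population bound $d_2(\widehat{\bR},\bR^\ast)=O_p(\Delta_N)$ into an empirical bound over the target sample is in fact slightly more careful than the paper, which asserts a pointwise bound on $\|\widehat{\bR}(\bZ)-\bR^\ast(\bZ)\|_2$ without this intermediate step.
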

Theorem \ref{thenorm} and Corollary \ref{corvar}  shows that
the distribution of $\wbbetaz-\bbeta_{\ast 0})$ can be approximated by a normal distribution whose covariance matrix can be consistently estimated,
provide a theoretical basis for making statistical inference about the parameter of main interest in the target domain.

\subsection{Benefits from source data}
We now discuss the benefits of source data for estimating the primary parameter $\bbeta_0$ in the
target domain.

Suppose only target dataset were available. The basic semiparametric partially linear model is \citep{engle1986semiparametric},
$$
Y_0=\bbeta_{0}^\top \bX_0 + g_0(\bZ_0) +\varepsilon_0.
$$
Consider the least squares estimator
  \begin{align*}
      \{\widetilde{\bbeta}_0, \widetilde{g}_0\} = \argmin_{\bbeta, g_0} \frac{1}{n_0} \sum_{i=1}^{n_0}
\{Y_{0i}-\bbeta_0^\top \bX_{0i}-g_0 (\bZ_{0i})\}^2.
  \end{align*}
There is an extensive literature on the asymptotic properties of the least squares estimators
in the semiparametric regression model using various approximation methods such as splines
for dealing with the nonparametric component, see for example, \citet{hardle2000partially} and the references therein.
Under the conditions given in Section \ref{sec:theory}, it holds that \citep{hardle2000partially, farrell2021deep}
\begin{align} \label{diffgg}
      \expect \|\widetilde{g}_0 (\bZ) - g_{0}(\bZ) \|= O_p(n_0^{-{s}/({2s+ q})}).
\end{align}
The convergence rate in (\ref{diffgg}) is optimal \citep{stone1980optimal}.
Furthermore,
  \begin{align*}
    \sqrt{n_0} ( \widetilde{\bbeta}_0 - \bbeta_{*0})
= \bJ_0^{-1} \left[\frac{1}{\sqrt{n_0}}\sum_{i=1}^{n_0} \{\bxiz-\expect(\bxiz|\bziz)\}\epsiz\right] + \sqrt{n_0}\, O_p(n_0^{-{2s}/({2s+q})}).
  \end{align*}
Therefore, to ensure asymptotic normality of $\widetilde{\bbeta}_0,$ we must have
$
n_0^{1/2-{2s}/({2s+q})} \to 0.
$
This necessitates the condition $n_0^{-{s}/({2s+q})}=o(n_0^{-1/4})$. Fulfilling this requirement can be difficult, particularly when dealing with a multi-dimensional confounding variable and lacking source data. For instance, under a standard regularity condition where $g_0$ possesses continuous second-order derivatives and assuming $q=10$, we have $O(n_0^{-2/(4+10)})=O(n_0^{-1/7})$. Consequently, the condition $n_0^{-{s}/({2s+q})}=o(n_0^{-1/4})$ may prove to be quite restrictive.
In contrast, with the inclusion of source data, Theorem \ref{thediffh} indicates that if the conditions $s_1^{1/2}(\log N)^2 N^{-1/2}=o(n_0^{-1/4})$ and $p^{1/2}N^{-{s}/({2s+p\log q})}=o(n_0^{-1/4})$ are met, then the estimator of $\bR$ will achieve a convergence rate faster than $n_0^{-1/4}$. Given that $s_1=\max\{Kd, Kp, S\}$, we can set $s_1=S$ for a sufficiently large network used in the analysis. These conditions are satisfied if the network size $S$ is less than $(\log N)^{-2} N^{1/2}/n_0^{1/4}$ and the total sample size from the source domains $N$ exceeds $n_0^{(2s+p\log q)/(4s)}$. Hence, a sufficiently large amount of source data can ensure the satisfaction of these conditions.

\section{Numerical studies} 
\label{sec:emp-results}

In this section, we evaluate the performance of RTL via numerical studies.
We first describe the simulation results and then illustrate the applications of RTL
on two real-world datasets.

\subsection{Simulation studies}\label{subsec:sim-study}

In this section, we evaluate the finite sample performance of RTL using simulated data.  We generate data under various designs and compare our method with
the existing approaches.

\subsubsection{Data generating models}
We consider the data generation models as described in \eqref{s1} and \eqref{eq:model}.
We consider the following two scenarios:

\begin{description}
  \item [(a)] Homogeneous models: In this scenario, the source and target domain models are the same. Thus, $\bbeta_{*k} = \bbeta$ and $\bgamma_{*k} = \bgamma$ for all $k=0, 1, \dots, K$.
  The coefficients of $\bbeta$ and $\bgamma$ are specified by i.i.d. drawn from standard normal distribution;
\item [(b)] Heterogeneous models: In this scenario, the source and target domain models are different.  The elements of $\bbeta_{*k}$ and $\bgamma_{*k}$ are specified by  drawing i.i.d. random numerbers from standard normal distribution for all $k=0, 1, \dots, K$.
\end{description}
Covariates $\bX_k$ and $\bZ_k$ are drawn from i.i.d. uniform distribution on $[-1,1]$.
We consider two types of representation functions $\bR(\cdot)$:
\begin{description}
  \item [(a)] (Additive Model) $\bR(\bZ) = [f_1(z_1), f_2(z_2), \dots, f_r(z_r)]\tr$ where $f_i$'s are univariate functions;
  \item [(b)] (Additive Factor Model) $\bR(\bZ) = [f_1(\tilde{z}_1), f_2(\tilde{z}_2), \dots, f_r(\tilde{z}_r)]\tr$ where $f_i$'s are univariate functions and $\widetilde{\bZ} = \bm{B} \bZ$ for some transformation matrix $\bm{B}$. We generate $\bm{B}$ by drawing i.i.d. random numbers from $N(0, 1/q)$.
\end{description}

\subsubsection{Evaluation}
The performance of the estimated regression function
$\widehat{\mu}(\bX, \bZ)= \bX\tr \widehat{\bbeta}_0 +\widehat{\bgamma}_0\tr \widehat{\bm R}(\bZ)
$
is evaluated according to the prediction error and estimation error.
The prediction performance is evaluated by the empirical mean squared error
computed on a test set of size $n_\text{test}$
generated from the target data distribution, i.e.,
$
  \widehat{\operatorname{MSE}}_0
  = n_{\text{test}}^{-1} \sum_{i=1}^{n_\text{test}} \left\{ \widehat{\mu}(\bX_i, \bZ_i) - \mu(\bX_i, \bZ_i) \right\}^2 ,
$
which is an estimator of the mean squared error $\operatorname{MSE} = \mathbb{E}\{ \left[ \widehat{\mu}(\bX, \bZ) - \mu(\bX, \bZ) \right]^2 \}$.
The estimation error is reported on the linear part of the target data,
$
  \operatorname{Err}_{\beta_0}
  = \| \widehat{\bbeta}_0 - \bbeta_0 \|_2 ,
$
where $\widehat{\bbeta}_0$ is the estimator of $\bbeta_0$.

\subsubsection{The effect of the source data sample size} \label{subsubsec:exp1}

Given that the main objective of transfer learning is to use the information from source data to enhance the analysis of target data, we initially assess the performance of RTL as the sample size of the source data varies. In the experiments conducted here, the dimension of the linear component $\bX$ is fixed at $d=5$, and the dimension of the non-linear component $\bZ$ is set at $q=10$. We consider $K=6$ source datasets in total. Additionally, the dimension of the representation function $\bR$ is set to $5$.

{Let the univariate functions $f_i$'s be randomly chosen from $\sin(z_1)$, $2\sqrt{|z_2|} - 1$, $(1 - |z_3|)^2$, $1 / \{1 + \exp(-z_4)\}$, $\cos(\pi z_5 / 2)$.}
The dimension of representation function in the working model, denoted as $r,$ is set as $r=1,3,5,7$, and $9$.
When $r=5$, the representation dimension is the same as the true representation function
used in the data generating model.
The under-estimating and over-estimating models are also considered when we set $r=1, 3$ and $r=7, 9$, respectively.
The sample size in the source dataset  $n_0 = 10, 200, 400, 600, 800, 1000$, and  $1200$, and the sample size in the target dataset is set fixed as $50$.

\begin{figure}[H]
  \centering
  \begin{minipage}{\linewidth}
    \centering
    \includegraphics[width=2.8 in, height=2.5 in]{./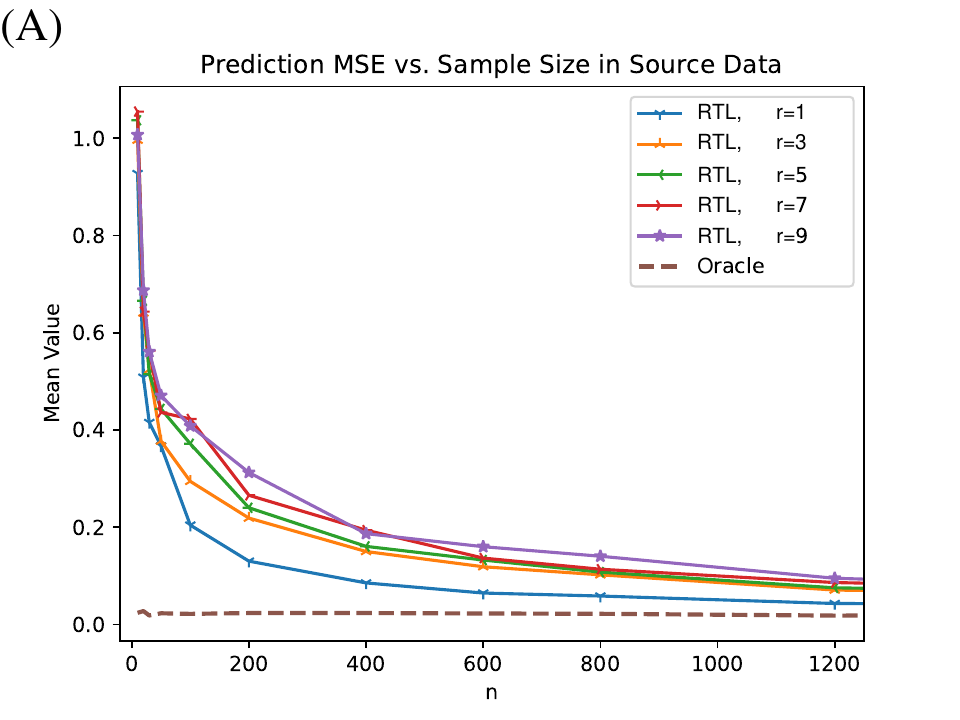}
    \includegraphics[width=2.8 in, height= 2.5 in]{./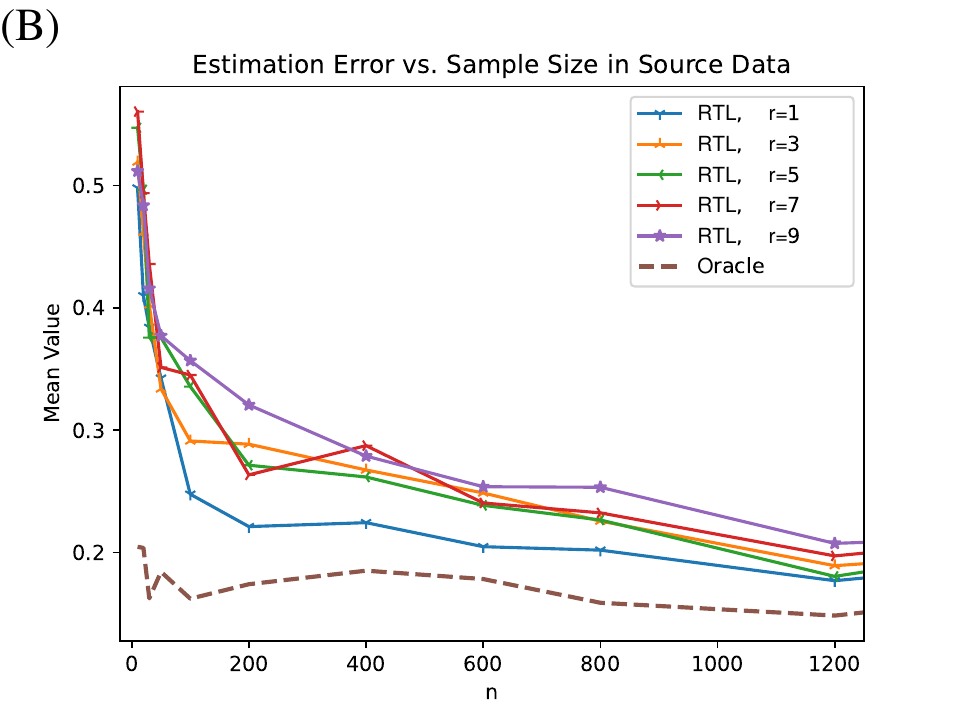}
    \caption{Additive Model with homogeneous coefficients}
    \label{fig:sim.additive.homogeneous}
  \end{minipage}
\end{figure}

\begin{figure}[H]
  \begin{minipage}{\linewidth}
    \centering
    \includegraphics[width=2.8 in, height= 2.5 in]{./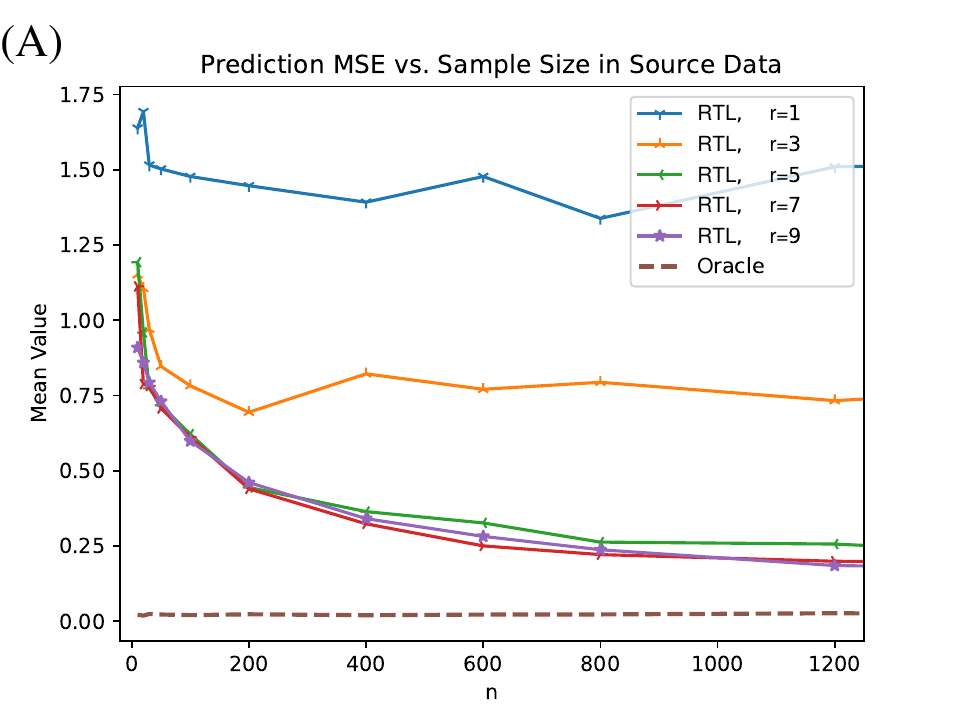}  
    \includegraphics[width=2.8 in, height= 2.5 in]{./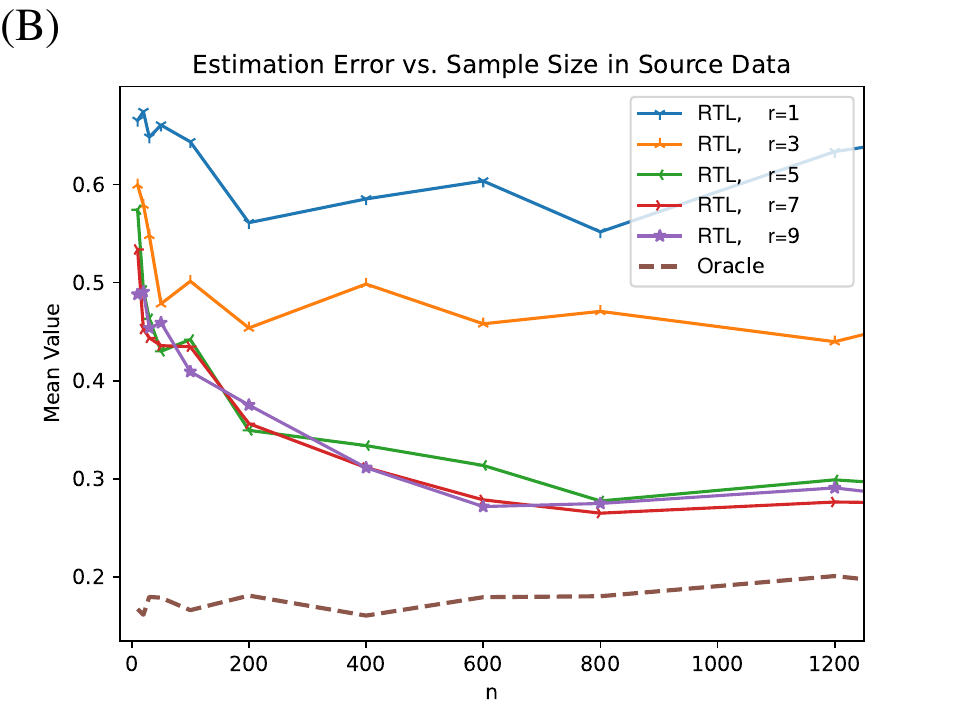}
    \caption{Additive Model with heterogeneous coefficients}
    \label{fig:sim.additive.heterogeneous}
  \end{minipage}
\end{figure}

\begin{figure}[H]
  \centering
   \begin{minipage}{\linewidth}
     \centering
     \includegraphics[width=2.8 in, height= 2.5 in]{./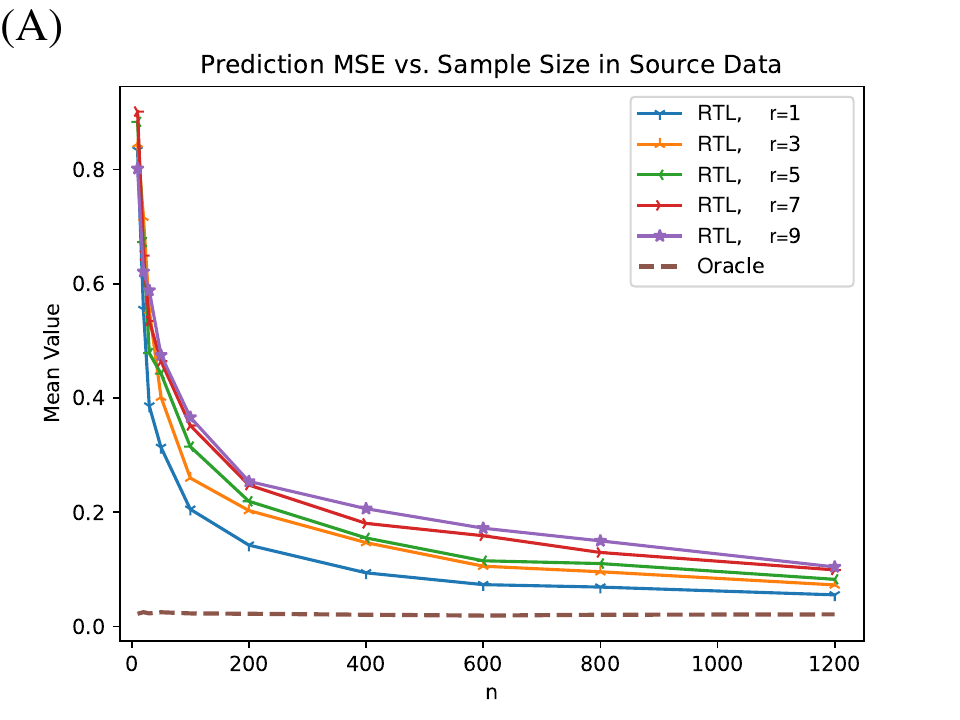} 
     \includegraphics[width=2.8 in, height= 2.5 in]{./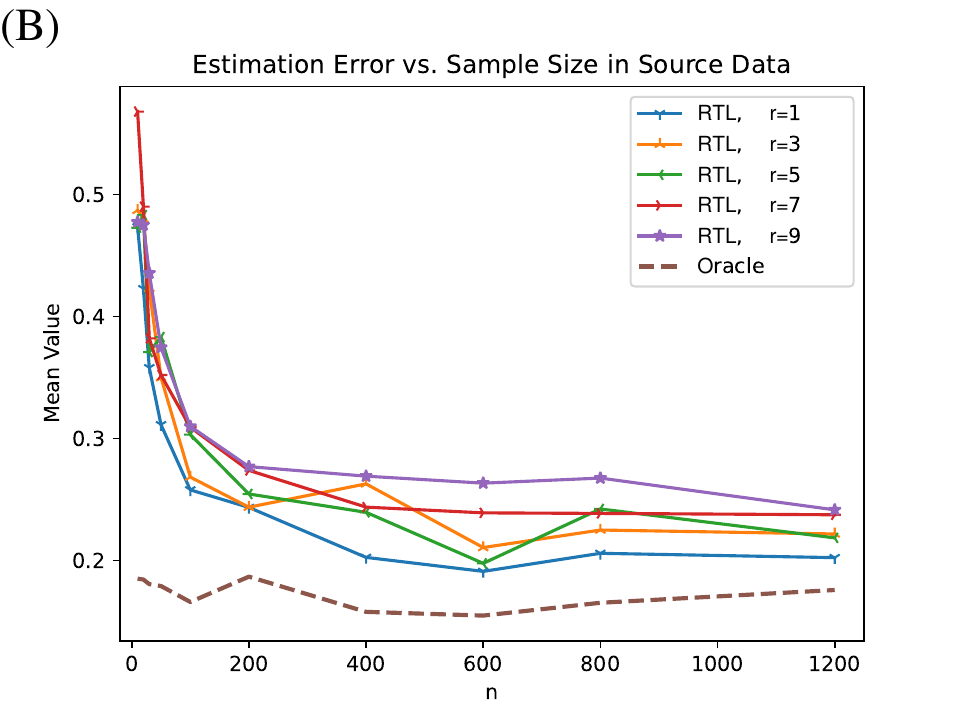}
     \caption{Additive Factor Model with homogeneous coefficients}
     \label{fig:sim.deep.homogeneous}
   \end{minipage}

   \begin{minipage}{\linewidth}
     \centering\includegraphics[width=2.8 in, height= 2.5 in]{./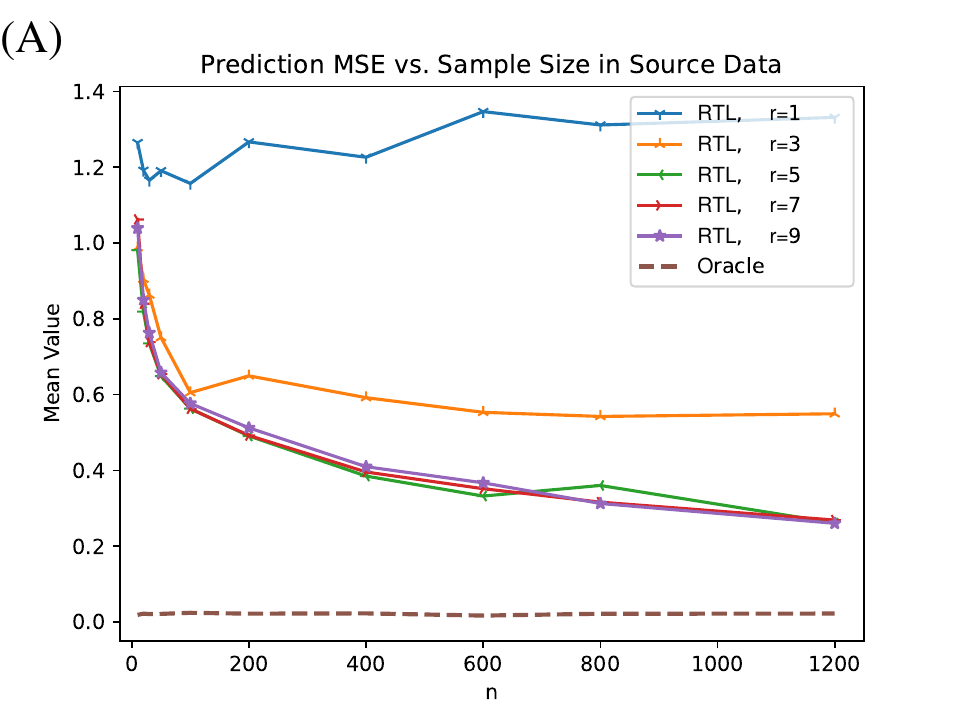} 
     \includegraphics[width=2.8 in, height= 2.5 in]{./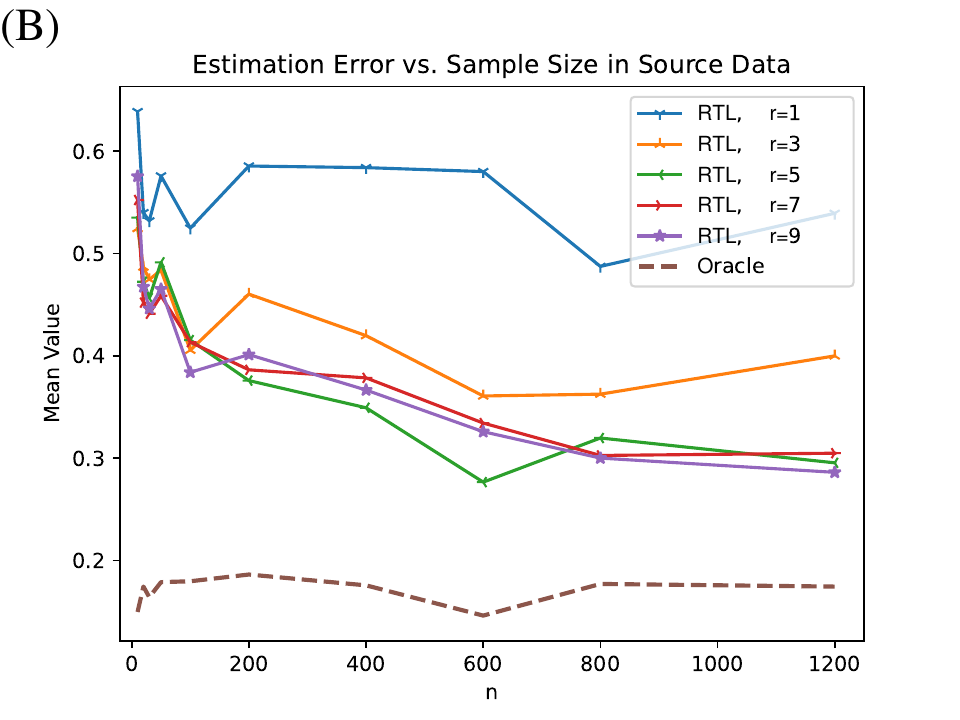}
     \caption{Additive Factor Model with heterogeneous coefficients}
     \label{fig:sim.deep.heterogeneous}
   \end{minipage}
 \end{figure}

 We repeat the experiments $50$ times and report the average performance.
{We  also report the `Oracle' method which uses the true representation function in the target data. }
Figures~ \ref{fig:sim.additive.homogeneous} to  \ref{fig:sim.deep.heterogeneous} present prediction MSEs for RTL with the number of representation functions $r=1,3,5,7,9.$ The true value of $r=5$ in
the generating model.
For the Additive Model Design, the depth of the neural network is set as $D=4$ and the width is set as $W=300$. The results are shown in Figures~\ref{fig:sim.additive.homogeneous} and \ref{fig:sim.additive.heterogeneous}. For the Additive Factor Model Design, the depth of the neural network is set as $D=6$ and the width is set as $W=500$.
The results are shown in Figures~\ref{fig:sim.deep.homogeneous} and \ref{fig:sim.deep.heterogeneous}.

The experimental results indicate that as the sample size increases, the performance of the RTL method approaches that of the {oracle} estimator, provided that the dimension of the representation function is close to or exceeds the true dimension of the representation function in the data-generating model. However, if the chosen dimension of the representation function is less than the true dimension present in the generating model, the proposed method exhibits suboptimal performance. Consequently, in practical applications, it is advisable to set the dimension of the representation function to a higher rather than lower value to ensure better performance.

\subsubsection{Comparison} \label{subsec:numerical_comparison}
We consider both the additive model design and the additive factor model design as previously described. Additionally, we explore a more intricate deep model design to simulate the data generation process, as illustrated in Figure~\ref{fig:deep_model}. In this model, the functions $f_i$ and $h_i$ are selected randomly from a pool of functions that includes $\sin(x)$, $-\cos(x)$, $\cos(2x)$, $\sin(\pi x)$, $\cos(\pi x)$, $2\sqrt{x+0.5}-1$, $(1-|x-0.5|)^2$, $1 / {1 + \exp(x)}$, $\tan(x+0.1)$, $\log(x+1.5)$, $\exp(x)$, $x^2$, and $\arctan(x)$. For instance, the output of the first node in the second layer is computed as $f_1(z_1 + z_2)$. We utilize $K=20$ source datasets and $1$ target dataset, and we assess two configurations regarding the model's dimension and sample size. In the first configuration, each source dataset comprises $200$ samples, and the dimension of the non-linear component is set to $q=20$. In the second configuration, the sample size for each source dataset is increased to $400$, while the dimension of the non-linear component is reduced to $q=10$. For both configurations, the target dataset consists of $n_0=50$ samples, with the dimension of the non-linear component fixed at $d=5$.

\begin{figure}[H]
  \centering
  \begin{tikzpicture}[every node/.style = {align=center}]

    \node (f1) at (0, -1) {$f_{1}()$};
    \node (z1) at (-0.5, -2) {$z_1$};
    \node (z2) at (0.5, -2) {$z_2$};

    \node (f2) at (2, -1) {$f_{2}()$};
    \node (z3) at (1.5, -2) {$z_3$};
    \node (z4) at (2.5, -2) {$z_4$};

    \node (f3) at (4, -1) {$f_{3}()$};
    \node (z5) at (3.5, -2) {$z_{5}$};
    \node (z6) at (4.5, -2) {$z_{6}$};

    \node (f4) at (6, -1) {$f_{4}()$};
    \node (z7) at (5.5, -2) {$z_{7}$};
    \node (z8) at (6.5, -2) {$z_{8}$};

    \node (f5) at (7, -1) {$f_{5}()$};
    \node (z9) at (7.5, -2) {$z_{9}$};
    \node (z10) at (8.5, -2) {$z_{10}$};

    \node (f6) at (8, -1) {$f_6()$};

    \node (h1) at (1, 0) {$h_1()$};
    \node (h2) at (3, 0) {$h_2()$};
    \node (h3) at (5, 0) {$h_3()$};
    \node (h4) at (7, 0) {$h_4()$};
    \node (h5) at (8, 0) {$h_5()$};

    \draw (h1) -- (f1);
    \draw (h1) -- (f2);
    \draw (h2) -- (f2);
    \draw (h2) -- (f3);
    \draw (h3) -- (f3);
    \draw (h3) -- (f4);
    \draw (h4) -- (f4);
    \draw (h4) -- (f5);
    \draw (h5) -- (f5);
    \draw (h5) -- (f6);

    \draw (f1) -- (z1);
    \draw (f1) -- (z2);
    \draw (f2) -- (z3);
    \draw (f2) -- (z4);
    \draw (f3) -- (z5);
    \draw (f3) -- (z6);
    \draw (f4) -- (z7);
    \draw (f4) -- (z8);
    \draw (f5) -- (z8);
    \draw (f5) -- (z9);
    \draw (f6) -- (z9); 
    \draw (f6) -- (z10);
  \end{tikzpicture}
  \caption{The architecture of a deep model with $q=10$ and $p=5$ used in Exp 2.}
  \label{fig:deep_model}
\end{figure}
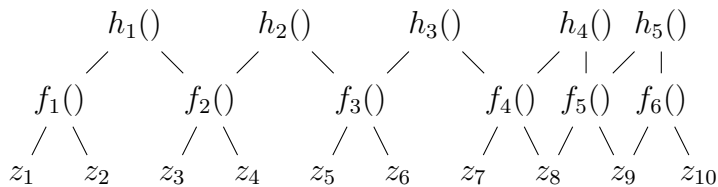

We consider the following competitor methods.
\begin{description}
  \item [(a)] The ``Pool" method is a parametric pooling method which estimates the coefficients using a combined loss. Pooled regression (PR) assumes that all parameters across different individuals are the same. All datasets are pooled together.
  \item [(b)] The ``MAP" method represents the model averaging transfer learning method \citep{zhang2024prediction}.
  \item [(c)] The ``Trans-lasso" method represents the high-dimensional linear regression transfer learning method \citep{li2022transfer}.
  \item [(d)] The ``Meta" method represents the meta-analysis method where the coefficients from different datasets are weighted based on the inverse variance of estimations.
  \item [(e)] The ``STL" method represents the neural network method which  only uses the target data.
\end{description}

To adapt these methods for non-linear models, we express the nonparametric component as a linear combination of cubic spline basis functions. The optimal number of knots is determined through the use of validation samples. For our proposed RTL method, we define the dimension of the representation space to be $p=5$, which reflects the true underlying dimension. The outcomes of this comparison are depicted in Figures~\ref{fig:sim.pred_comparison} and \ref{fig:sim.est_comparison}, where the left side corresponds to the scenario with $n_k=400$ and $q=10$, and the right side pertains to the scenario with $n_k=200$ and $q=20$.
It can be seen that RTL has lower prediction and estimation errors than the existing methods, including Pool, MAP, Trans-Lasso, Meta, and STL. These results show the superiority of our proposed RTL method in terms of both prediction accuracy and estimation quality.

\begin{figure}[H]
  \centering
  \begin{subfigure}{0.45\linewidth}
    \centering
    \caption*{\tiny$\bf{(A)}$\raggedright}
    \includegraphics[width=2.8 in, height=2.4 in]{./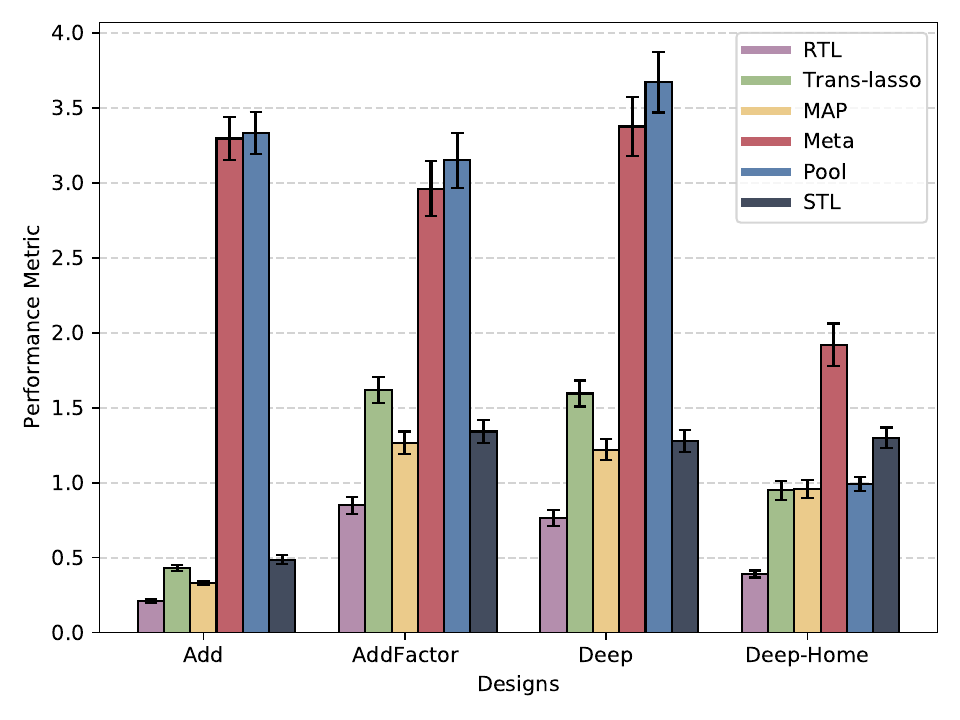}
  \end{subfigure}
  \begin{subfigure}{0.45\linewidth}
    \centering
    \caption*{\tiny$\bf{(B)}$\raggedright}
    \includegraphics[width=2.8 in, height=2.4 in]{./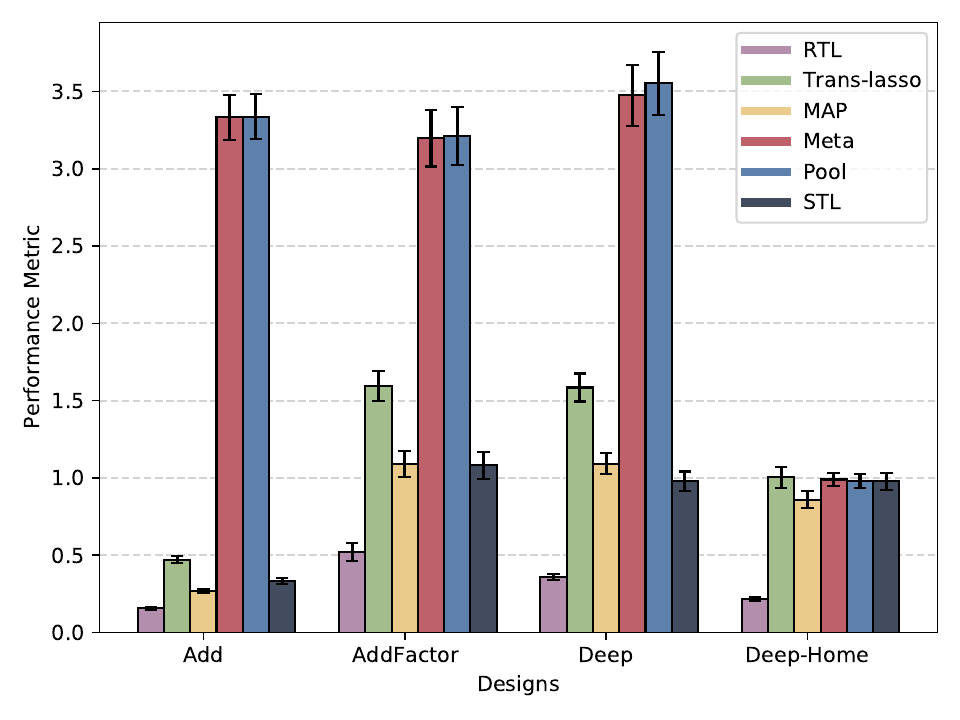}
  \end{subfigure}
  \caption{Prediction performance comparison of different methods across different designs.
  (A) The left-hand-side corresponding to the case of $n_k=200$ and $q=20$; (B) the right-hand-side corresponding to the case of $n_k=400$ and $q=10$. It can be seen that RTL has lower prediction errors than the existing methods, including Pool, MAP, Trans-Lasso, Meta, and STL.
  }
  \label{fig:sim.pred_comparison}
\end{figure}

\begin{figure}[H]
  \centering
  \begin{subfigure}{0.45\linewidth}
    \caption*{\tiny$\bf{(A)}$\raggedright}
    \includegraphics[width=2.8 in, height=2.4 in]{./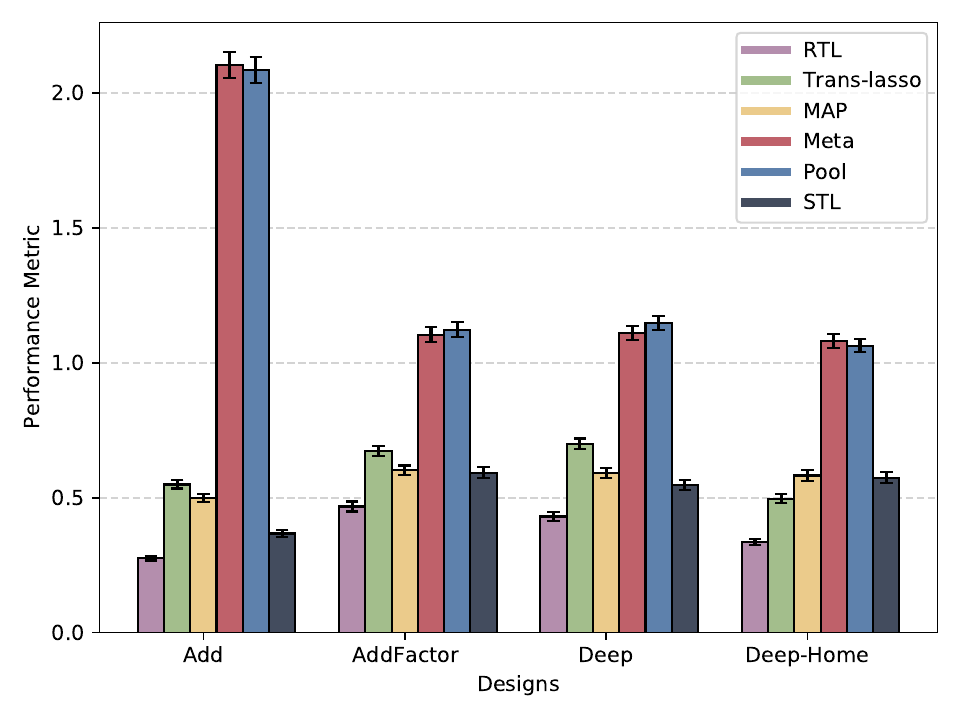}
  \end{subfigure}
  \centering
  \begin{subfigure}{0.45\linewidth}
    \caption*{\tiny$\bf{(B)}$\raggedright}
    \includegraphics[width=2.8 in, height=2.4 in]{./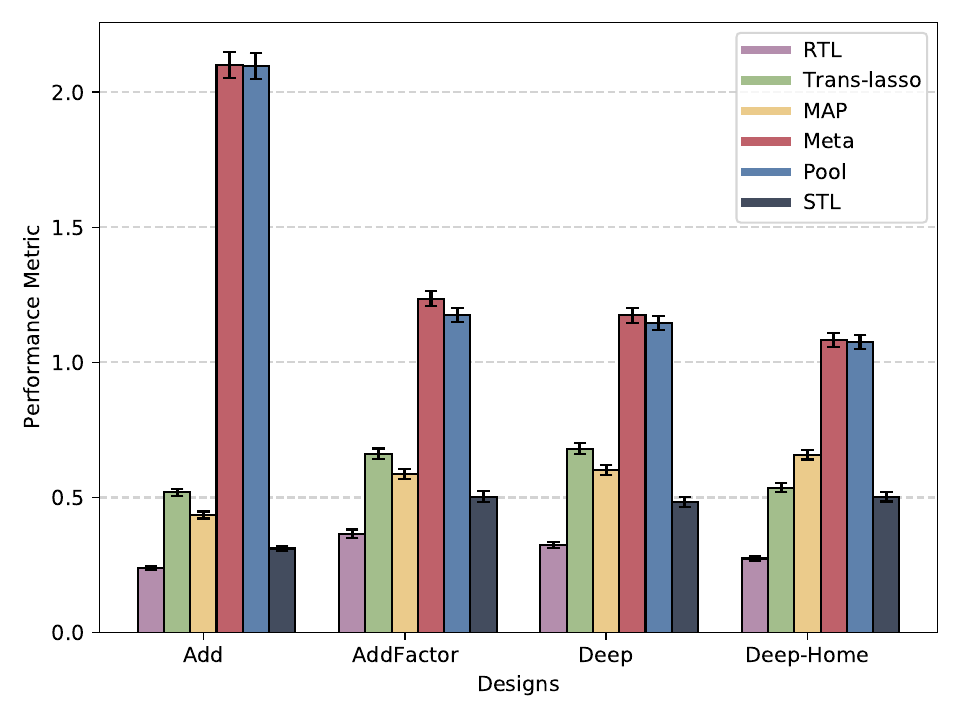}
  \end{subfigure}
  \caption{Estimation performance comparison between RTL and Pool, MAP, Trans-Lasso,
  Meta, and STL across different designs.
  (A) The left-hand-side corresponding to the case of $n_k=200$ and $q=20$;
  (B) the right-hand-side corresponding to the case of $n_k=400$ and $q=10$.
  It can be seen that RTL has lower estimation errors than the existing methods, including Pool, MAP, Trans-Lasso, Meta, and STL.}
  \label{fig:sim.est_comparison}
\end{figure}

\begin{table}[H]
  \centering
  \small
  \caption{Illustration of asymptotic variance and normality. }
  \label{tab:asy_variance}
  \begin{tabular}{crm{1.5cm}m{2cm}m{2.5cm}}
    \toprule
    Design & Avg. Bias & SD  &  SE  & Normality \\
    \midrule
    \multirow{2}{*}{\shortstack{Additive\\$n_k=200$}} & 0.0040 & 0.3016 & $0.2953_{\ 0.0459}$  & \includegraphics[width=1.5cm]{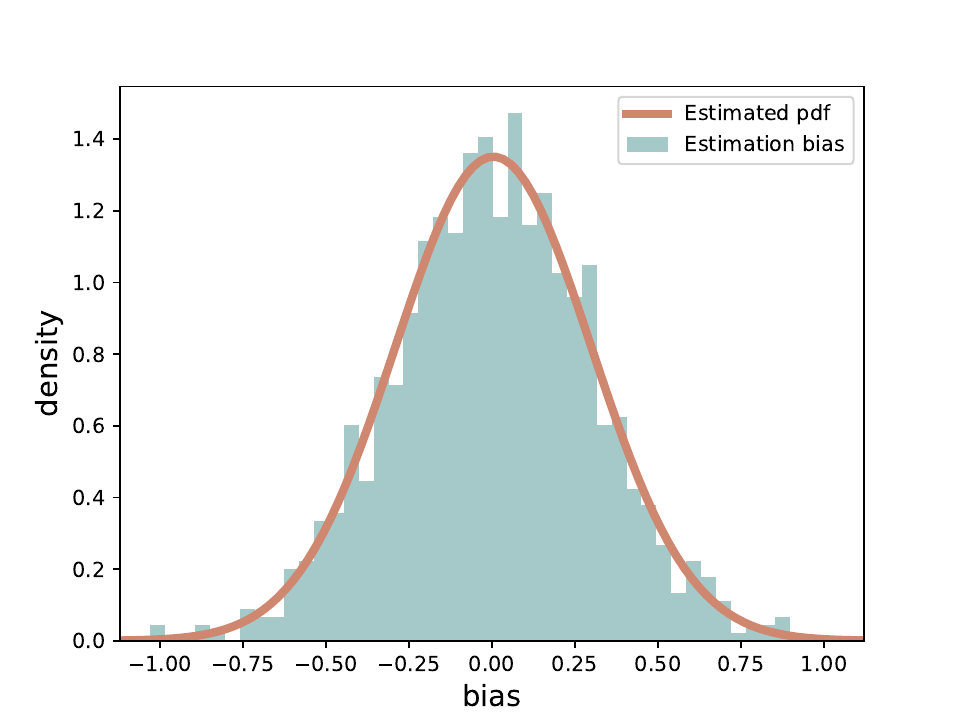}\\
    \multirow{2}{*}{\shortstack{Additive\\$n_k=400$}} & -0.0146 & 0.2321 & $0.2241_{\ 0.0386}$ & \includegraphics[width=1.5cm]{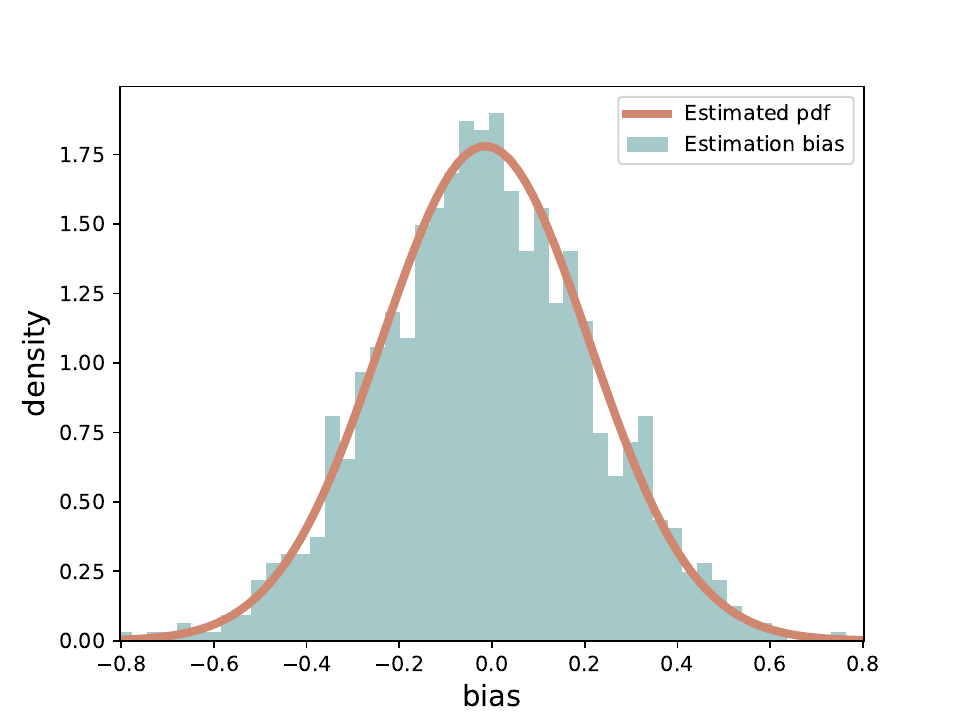}\\
    \multirow{2}{*}{\shortstack{Add-Factor\\$n_k=200$}} & 0.0001 & 0.2945 & $0.2880_{\ 0.0462}$ & \includegraphics[width=1.5cm]{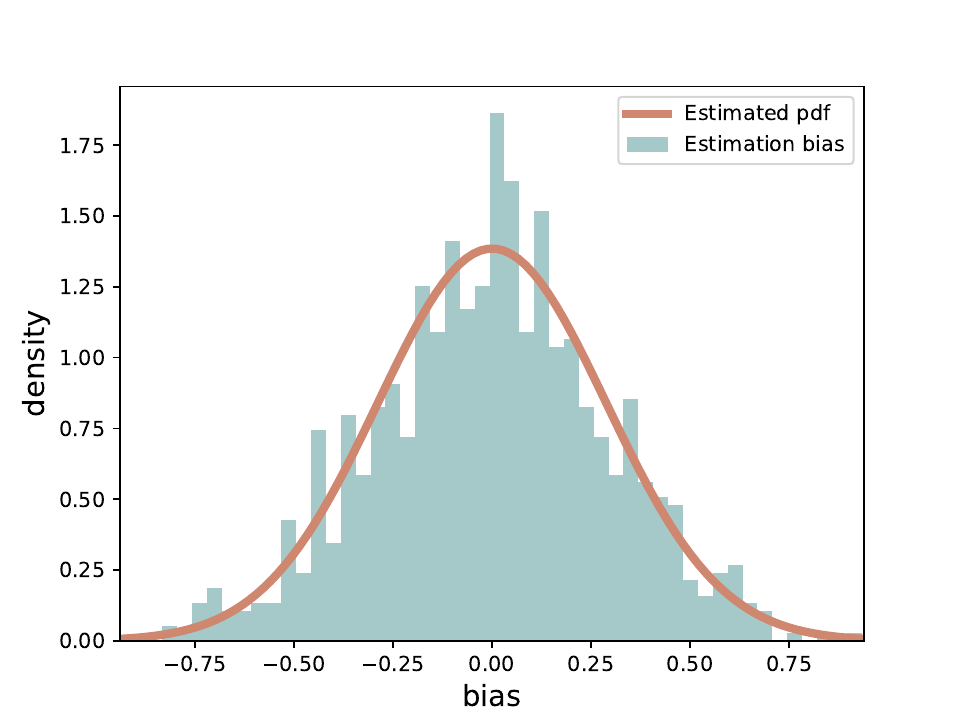}\\
    \multirow{2}{*}{\shortstack{Add-Factor\\$n_k=400$}} & -0.0153 & 0.2135 & $0.2044_{\ 0.0345}$ & \includegraphics[width=1.5cm]{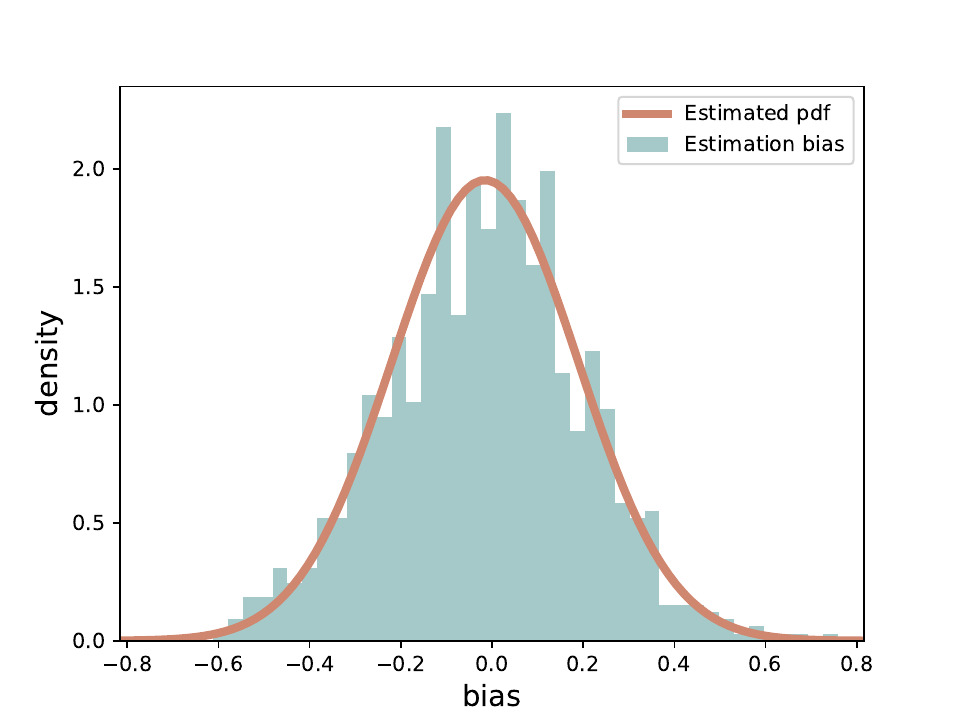} \\
    \multirow{2}{*}{\shortstack{Deep\\$n_k=200$}} & -0.0080 & 0.2155 & $0.2140_{\ 0.0350}$ & \includegraphics[width=1.5cm]{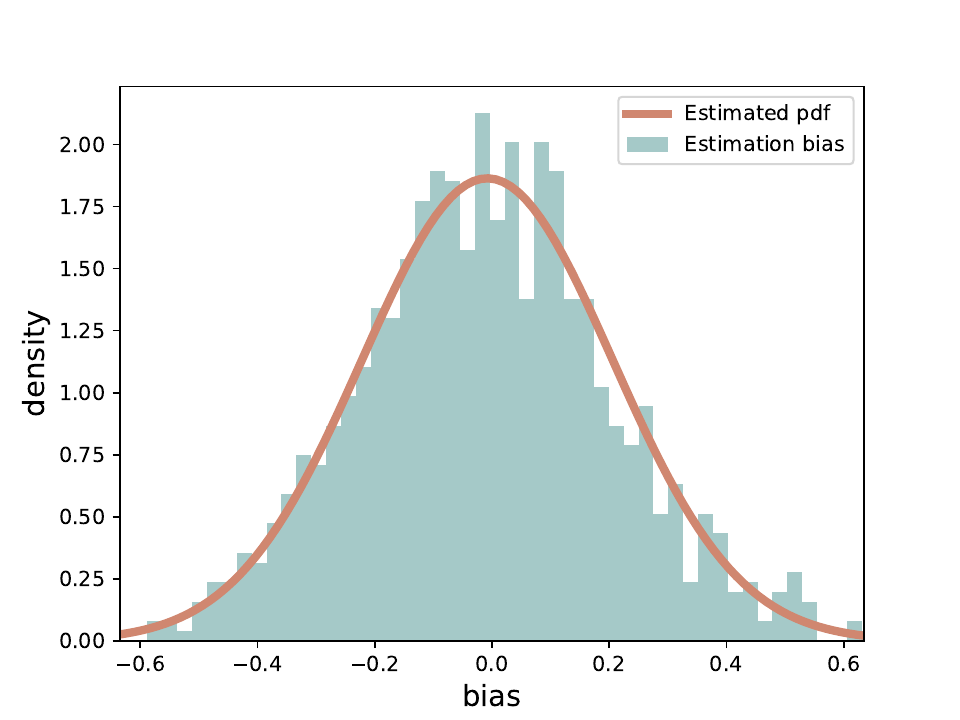}\\
    \multirow{2}{*}{\shortstack{Deep\\$n_k=400$}} & -0.0056 & 0.1350 & $0.1350_{\ 0.0228}$ & \includegraphics[width=1.5cm]{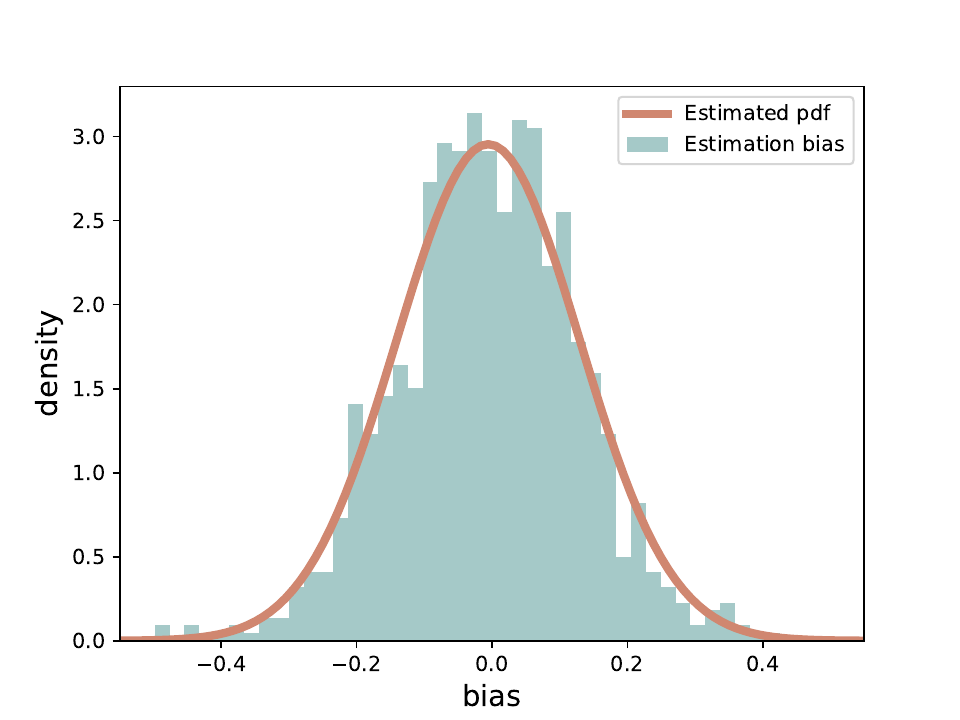}\\
    \multirow{2}{*}{\shortstack{Deep(Homo)\\$n_k=200$}} & -0.0061 & 0.1567 & $0.1527_{\ 0.0242}$ & \includegraphics[width=1.5cm]{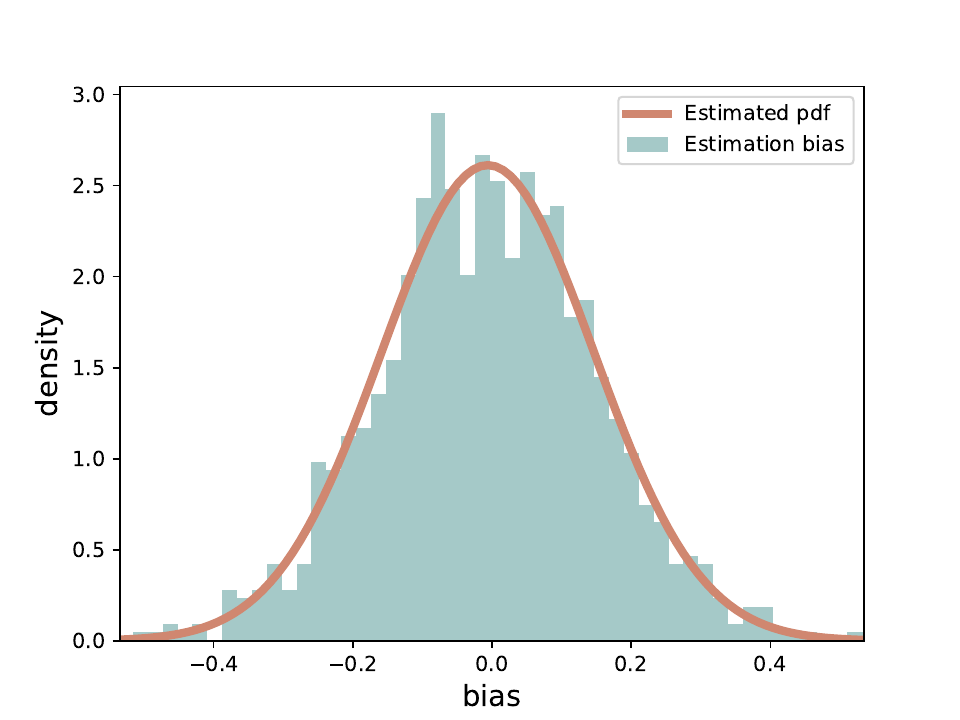} \\
    \multirow{2}{*}{\shortstack{Deep(Home)\\$n_k=400$}} & -0.0003 & 0.1158 & $0.1117_{\ 0.0184}$ & \includegraphics[width=1.5cm]{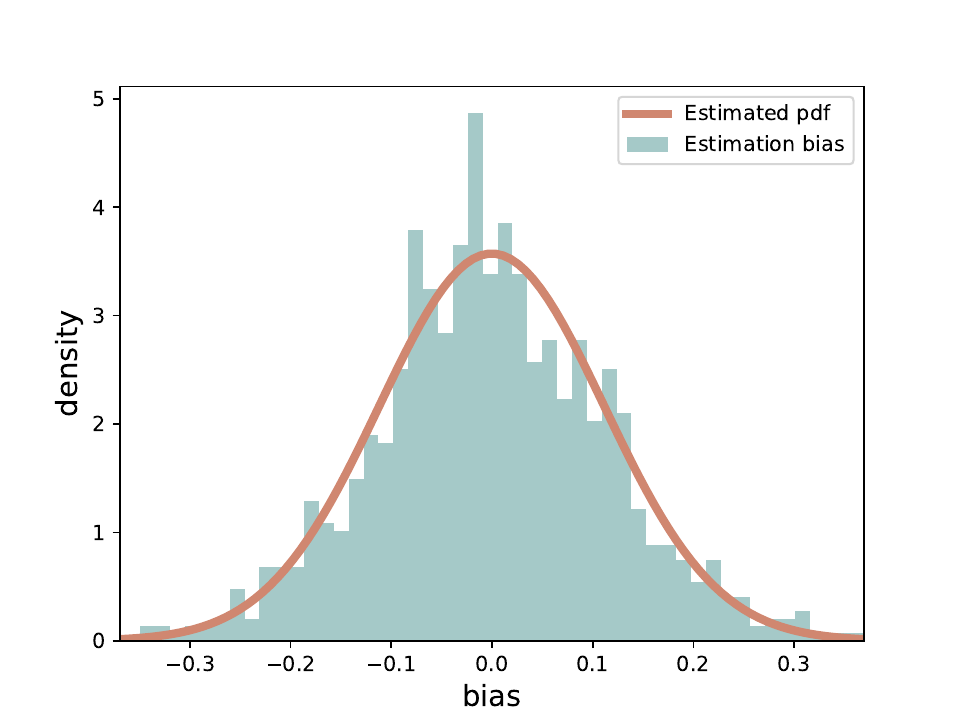} \\
    \bottomrule
  \end{tabular}
\end{table}

\subsubsection{Assessment of variance estimation and asymptotic normality}

To demonstrate the estimated asymptotic variance, we employ the same experimental setup as described in Section~\ref{subsec:numerical_comparison} with the exception that the parameters $\bbeta_{*0}$ and $\bgamma_{*0}$ are both set to $\bm{1},$ and the number of source dataset is set to $6$. Our analysis concentrates on the transformed coefficients $\theta = \bm{\alpha}^{\top}\bbeta_{*0}$, where $\bm{\alpha} = \bm{1} / \sqrt{p}.$

Table~\ref{tab:asy_variance} presents the results, including the average bias and standard deviation (SD) of a combined coefficient, the mean of the estimated variance (SE), and an illustration of normality, all based on $1000$ repetitions. The last column of Table~\ref{tab:asy_variance} displays
the  histograms of the estimation biases alongside the asymptotic normal distributions with the estimated means and variances.
The results suggest that the distribution  RTL estimator is well approximated by normal distribution.

\subsection{Semi-synthetic MNIST data}\label{subsec:synthetic}
In this section, we apply RTL to a semi-synthetic arithmetic dataset constructed from the MNIST dataset \citep{lecun1998mnist}. The MNIST dataset is a collection of handwritten digits, comprising $70,000$ samples with 10 class labels each represented by a $28 \times 28$ grayscale image.
For our semi-synthetic scenario, we define the data generating process as follows:
\begin{align}
  Y = \beta X + \bgamma^\top \bR\bigl(\bZ\bigr) + \epsilon,
\end{align}
where $X$ is a random sample drawn from the standard normal distribution, $\bZ$ is a digit image sampled from the MNIST dataset, $\beta \in \mathbb{R}$ and $\bgamma \in \mathbb{R}^{10}$ are unknown coefficients to be estimated, $\bR \in \mathbb{R}^{10}$ represents the one-hot encoded label corresponding to the input image, and $\epsilon$ is noise that follows the standard normal distribution.
To illustrate, suppose we set $\beta = 1$, $\bgamma = \bm{1}$, $X=1$, $\epsilon=0.2$, and let $\bZ$ correspond to the image $\vcenteredinclude{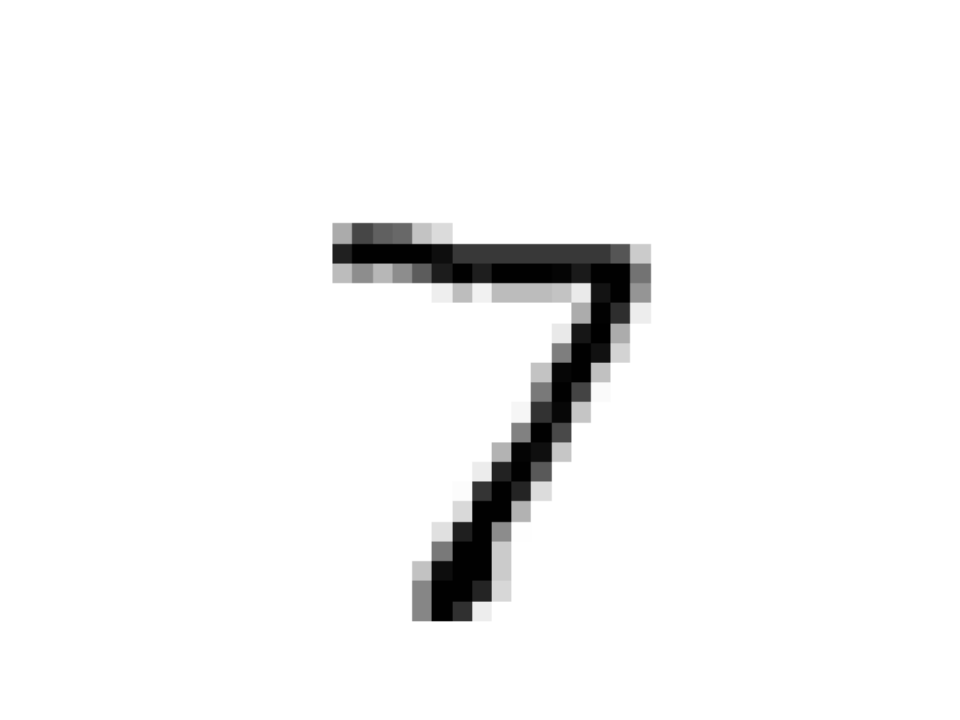}$. The resulting value of $Y$ would be $1 + \bm{1}^\top \bR\bigl(\vcenteredinclude{figs/digits.pdf}\bigr) + 0.2$, which equals $8.2$ if the image represents the digit `7'.

We adopt the following experimental setup:
The total number of source datasets is fixed at $10$. Each source dataset is composed of a training subset, which accounts for $40\%$ of the MNIST training set, and a separate validation subset consisting of $500$ samples randomly selected from the same training set. The coefficients $\beta$ and $\gamma_1$ are randomly assigned for each source dataset, and we define $\bgamma$ as $\gamma_1 \bm{1}$ across all source data. The target dataset is limited to $100$ samples, also drawn from the MNIST training set.

For the representation learning, we utilize a neural network with $7$ hidden layers, which includes $5$ convolutional layers and $2$ fully connected layers.
{More detailed information on the architecture of the neural network is provided in the Supplementary Materials.}
The output dimension of the representation network is set to $10$. We adopt an iterative training approach for the representation network and the subsequent linear layer, spanning $10$ epochs. The learning rate is established at $10^{-4}$, and we use a batch size of $128$.

The average performance of the estimated representation network, in terms of prediction, classification, and estimation based on 5 replications, is presented in Table~\ref{tab:res_mnist}.
It is important to note that the prediction error is evaluated on a test set derived from the MNIST test set, which comprises $10,000$ images.
The classification accuracy is assessed using the MNIST training set.
For this evaluation, the estimated representation network is augmented with two additional linear layers with ReLU activation functions, and the final output undergoes a transformation via a logarithmic softmax function.

\begin{table}[H]
  \centering
  \small
  \caption{The prediction and estimation errors of RTL in the synthetic MNIST data analysis.}
  \label{tab:res_mnist}
  \begin{tabular}{cccc}
    \toprule
    Method & Prediction Error & Estimation Error & Classification Accuracy \\
    \midrule
    RTL & $0.2163 (0.0675)$ & $0.0701 (0.0716)$ & $98.69\% (0.24\%)$ \\
    \bottomrule
  \end{tabular}
\end{table}

The results show that RTL has a good performance in the synthetic MNIST data analysis for the prediction and estimation in target domain.
The high classification accuracy indicates that the estimated representation network is able to capture the label information of the input images.

\subsection{Rental data} \label{subsec:real-world}

We demonstrate the application of the proposed RTL method using an apartment rental dataset from three major Chinese cities: Beijing, Shanghai, and Shenzhen. This dataset was obtained from a publicly accessible website, available at \url{http://www.idatascience.cn/}. The number of available rental apartments across various districts in these cities is detailed in Table~\ref{tab:housing_data}. Additionally, the variables included in the dataset are given in Table~\ref{tab:housing_variables}. The main goal of our analysis is to assess the influence of key factors, such as neighborhood characteristics and the proximity of schools, on rental prices.

\begin{table}[ht]
  \centering
  \scriptsize
  \caption{
  The number of  apartments for rent by district in Beijing, Shanghai and Shenzhen}
  \label{tab:housing_data}
  \begin{tabular}{lr|lr|lr}
  \toprule
  \multicolumn{2}{c|}{Beijing} & \multicolumn{2}{c|}{Shanghai} & \multicolumn{2}{c}{Shenzhen} \\
  \midrule
  Haidian     & 528  &  Pudong    & 1333  &  Nanshan   & 1524  \\
  Chaoyang    & 1241 &  Xuhui     & 566   &  Futian    & 1169 \\
  Changping   & 310  &  Changning & 432   &  Bao'an    & 1108 \\
  Dongcheng   & 315  &  Putuo     & 416   &  Longgang  & 857  \\
  Xicheng     & 308  &  Huangpu   & 393   &  Luohu     & 857  \\
  Fengtai     & 347  &  Baoshan   & 365   &  Longhua   & 778  \\
  Shijingshan & 269  &  Longhuaa  & 360   &  Buji      & 735  \\
  Mentougou   & 264  &  Jing'an   & 349   &  Guangming & 714  \\
  Fangshan    & 249  &  Yangpu    & 316   &  Yantian   & 543  \\
  Shunyi      & 225  &  Tongzhou  & 223   &            &      \\
  Jiading     & 302  &  Hongkou   & 207   &            &      \\
  Daxing      & 291  &  Fengxian  & 204   &            &      \\
  Huairou     & 162  &            &       &            &      \\
  \bottomrule
  \end{tabular}
\end{table}

\begin{table}[H]
  \centering
  \scriptsize
  \caption{The description of variables in the apartment rental dataset.}
  \label{tab:housing_variables}
  \begin{tabular}{ll}
    \toprule
    Variable & Description  \\
    \midrule
    (y) price & monthly rent of the apartment \\
    \midrule
    (z) room & number of rooms \\
    (z) hall & number of halls \\
    (z) toilet & number of toilets \\
    (z) hasbed & has bed \\
    (z) haswardrobe & has wardrobe \\
    (z) hasac & has air conditioner \\
    (z) hasgas & has gas \\
    (z) floor & 4 categories based on height \\
    (z) totalfloor & total floors of the building \\
    (z) numhospital & number of hospitals (within 3km) \\
    \midrule
    (x) neighborhood & neighborhood of the apartment \\
    (x) numschool & number of schools (within 3km) \\
    \bottomrule
  \end{tabular}
\end{table}

For the apartment rental dataset encompassing three major Chinese cities, Beijing, Shanghai, and Shenzhen, it is important to recognize that these cities, being in distinct regions (north, east, and south) of China, have unique rental markets.  Consequently, pooling the data from Beijing, Shanghai, and Shenzhen for analysis without considering regional differences could lead to questionable conclusions. For instance, what characterizes a neighborhood in Beijing may be different from those in Shanghai or Shenzhen. Additionally, the availability of an air conditioner might influence rental prices differently across these cities due to their varying climatic conditions. Additionally, considering the vast size of these cities, there can be significant variations in rental market dynamics even between different districts within the same city. Therefore, pooling data from different districts within a single city is not advisable.
On the other hand, despite the geographical distinctions and the heterogeneity across districts within these cities, there are inherent similarities within their rental markets. These commonalities make it plausible to apply knowledge about factors affecting rental prices from one city to another. Thus, while regional specificities should not be overlooked, there is merit in exploring the transferability of insights across these diverse urban rental markets.

Given this context, transfer learning is a reasonable approach to analyzing data from a specific district in one city, using data from other districts as source data. This method helps overcome the limitations of small sample sizes for district-specific data and enhances the analysis by leveraging the broader patterns and insights from across the dataset. Thus, while respecting regional specificities, transfer learning offers a way to explore the transferability of insights across these diverse urban rental markets.

In our analysis, we focus on the effects  of two variables: \textit{neighborhood} and
\textit{numschool} (the number of schools within a 3km radius), which are widely recognized as having a significant influence on rental prices in China. We use other factors as confounding variables that may also affect rental prices, albeit to a lesser degree.
We examine four target datasets from four randomly selected districts. These districts include Changping in Beijing, Yantian in Shenzhen, and both Putuo and Fengxian in Shanghai.
When analyzing a specific target dataset, such as the one from Changping in Beijing, we incorporate all the remaining data as source data. This enables us to quantify the effects of neighborhood characteristics and the proximity to schools on rental prices, while also taking into account the wider context provided by the comparative data from other districts.

Using the proposed RTL method for the semiparametric regression model as described in Section \ref{sec:model}, we calculate the $95\%$ confidence intervals for the coefficients of \textit{neighborhood} and \textit{numschool} based on the results in Section \ref{sec:theory}. The findings are presented in Table~\ref{tab:housing_variables_est}. These estimated coefficients provide a quantitative assessment of the impact these variables have on rental prices. Moreover, the fact that these confidence intervals do not include zero indicates that the district where the house is located and the proximity to schools significantly increase the monthly rent, when other factors are held constant.

We also consider the prediction performance via randomly dividing the target dataset into training, validation and testing sets. The training set consists of $30\%$ of the data and the testing set consists of other $30\%$ of the data, while the remaining is allocated to the validation set. The prediction performance is evaluated using the mean squared error on the testing set. Figure~\ref{fig:house_pred} displays the prediction errors the proposed RTL and the existing methods Pool, MAP, Trans-Lasso, Meta, and STL on housing rental information data. It shows that RTL outperforms these existing methods in terms of prediction error.

\begin{table}[H]
  \centering
  \small
  \caption{The estimated coefficients and confidence intervals for the variables of main interest in the housing rental data.}
  \label{tab:housing_variables_est}
  \begin{tabular}{llrrr}
    \toprule
    District & Variable &  Estimate & SE & $95\%$ CI \\
    \midrule
    \multirow{2}{*}{Changping} & neighborhood       & $38.08$  & $3.37$   & $[31.48, 44.68]$ \\
                               & numschool  & $16.00$  & $3.93$   &  $[8.30, 23.70]$ \\
    \midrule
    \multirow{2}{*}{Putuo}     & neighborhood       & $59.53$  & $6.18$   & $[47.41, 71.64]$ \\
                               & numschool  & $12.23$  & $2.85$   & $[6.64, 17.82]$ \\
    \midrule
    \multirow{2}{*}{Fengxian}  & neighborhood      & $9.80$  & $3.07$   & $[3.79, 15.82]$ \\
                               & numschool  & $9.36$ & $2.47$   & $[4.52, 14.21]$ \\
    \midrule
    \multirow{2}{*}{Yantian}   & neighborhood       & $66.70$  & $3.92$   & $[59.03, 74.38]$ \\
                               & numschool  & $41.03$  & $5.42$   & $[30.41, 51.65]$ \\
    \bottomrule
  \end{tabular}\\
\end{table}

\begin{figure}[H]
  \centering
  \begin{subfigure}{0.23\linewidth}
    \centering
    \caption*{\tiny$\bf{(A)}$ Changping\raggedright}
    \includegraphics[width=\linewidth]{./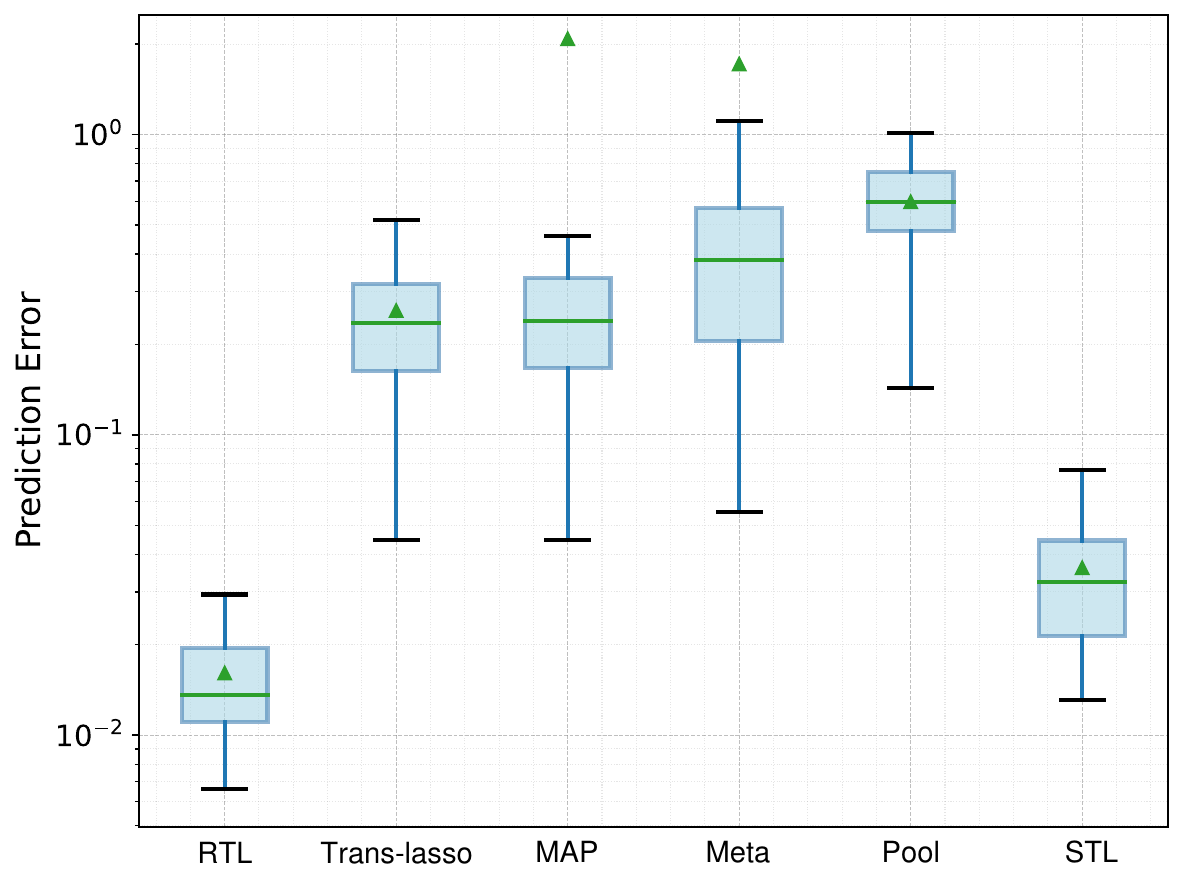}
  \end{subfigure}
  \begin{subfigure}{0.23\linewidth}
    \centering
    \caption*{\tiny\textbf{(B)} Putuo\raggedright}
    \includegraphics[width=\linewidth]{./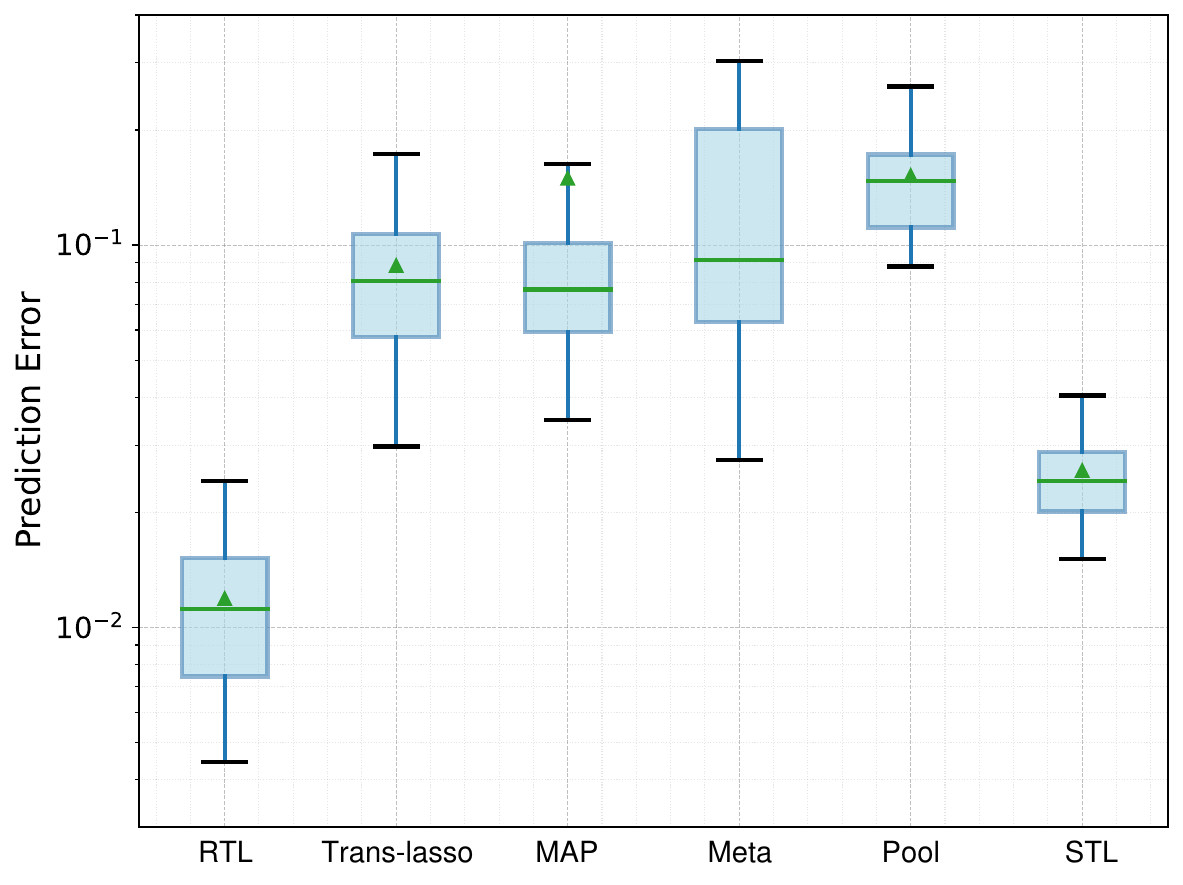}
  \end{subfigure}
  \begin{subfigure}{0.23\linewidth}
    \centering
    \caption*{\tiny\textbf{(C)} Fengxian\raggedright}
    \includegraphics[width=\linewidth]{./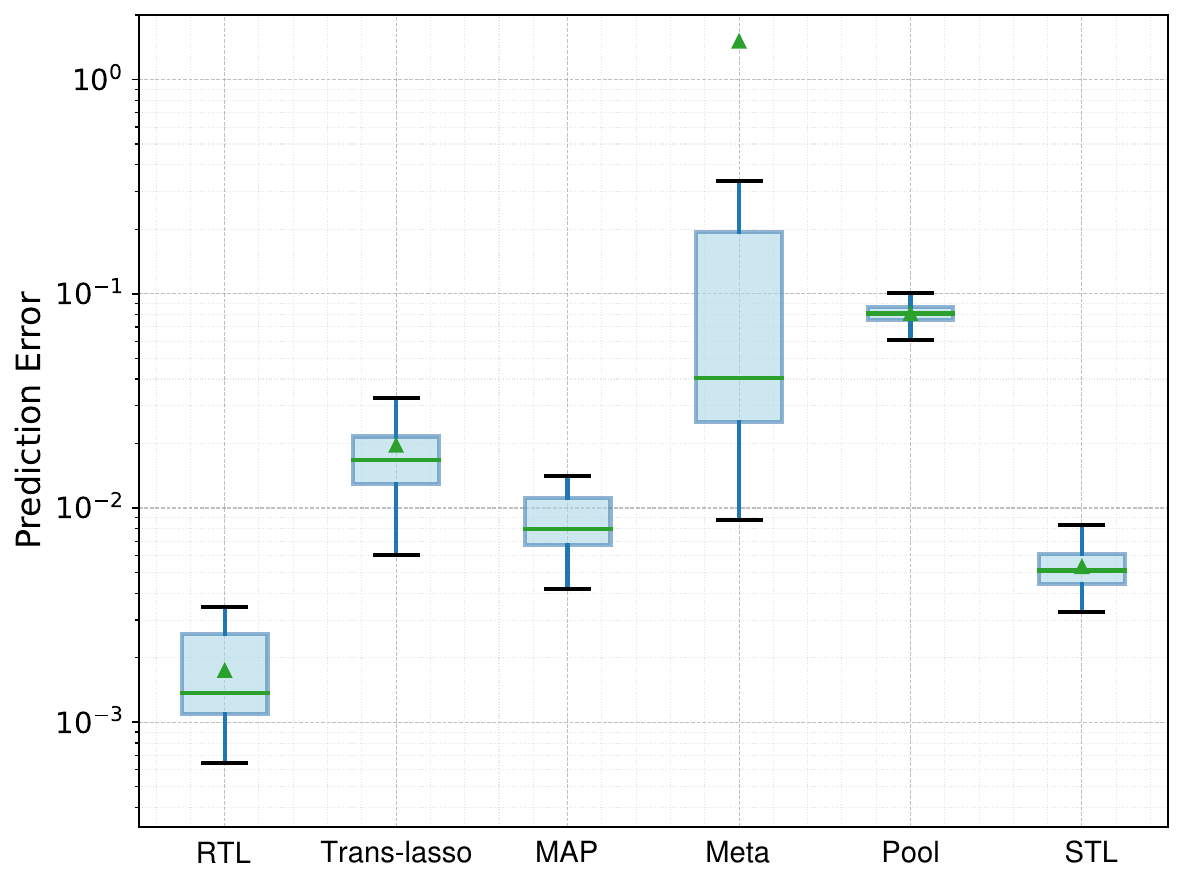}
  \end{subfigure}
  \begin{subfigure}{0.23\linewidth}
    \centering
    \caption*{\tiny\textbf{(D)} Yantian\raggedright}
    \includegraphics[width=\linewidth]{./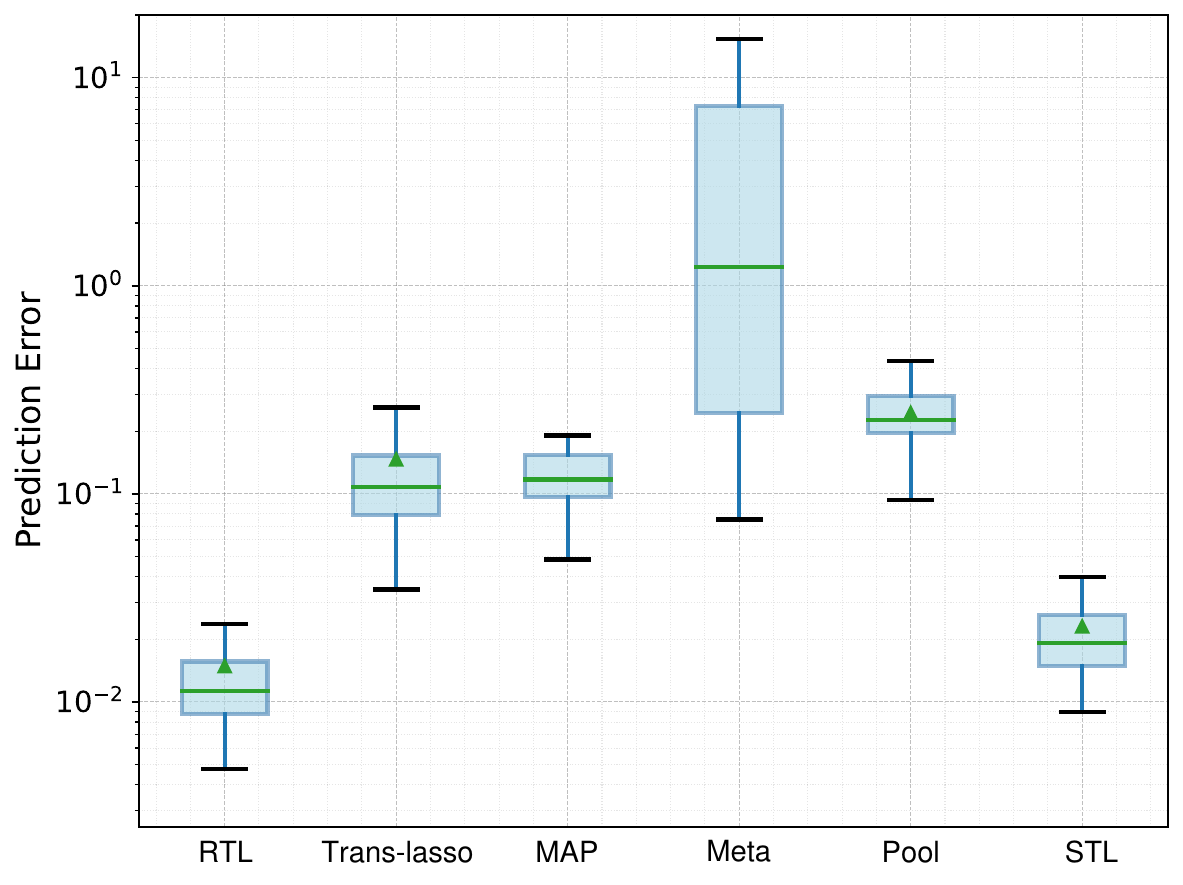}
  \end{subfigure}
  \caption{Prediction performance between the proposed RTL and the existing methods Pool, MAP, Trans-Lasso, Meta, and STL on the housing rental dataset.
  }
  \label{fig:house_pred}
\end{figure}

We further assess the prediction performance by randomly splitting the target dataset into training, validation, and testing sets. Specifically, the training and testing sets each consists of 30\% of the data, with the remaining 40\% designated as the validation set. We evaluate the prediction performance using the mean squared error (MSE) on the testing set.
Figure~\ref{fig:house_pred} illustrates the prediction errors of our proposed RTL method compared to existing methods, Pool, MAP, Trans-Lasso, Meta, and STL. The results demonstrate that RTL has lower prediction error than these existing methods, indicating its superior performance in predicting rental prices.

\section{Discussion} \label{sec:discuss}
In this work, we introduce a new approach to transfer learning within the context of semiparametric regression inference.The essence of our strategy lies in the transfer of knowledge from the source domains to the target domain via a representation function. Our goal is to enhance both prediction accuracy and estimation precision in the target domain by leveraging data from multiple source domains. The key idea of our method is the learning of a shared representation across various source tasks, which is then applied to a target task. We address data heterogeneity between the source and target domains by incorporating domain-specific parameters in their respective models. This strategy facilitates the integration of varied data representations while maintaining model interpretability and adaptability to heterogeneous datasets.

Our proposed RTL method has the potential to be adapted for use with other models, including semiparametric generalized linear and classification models. However, there are several challenging issues that warrant further investigation within our proposed framework. Firstly, a pivotal hyperparameter in our approach is the number of representations, for which the optimal selection remains an open question. The determination of this parameter significantly affects the model's performance and its ability to generalize. Our simulation studies indicate that the method performs adequately as long as the number of representations falls within a reasonable range. This observation suggests that it is helpful to consider a cross-validation type method for selecting this hyperparameter, which could provide a systematic approach to further enhance model performance. Secondly, our findings are based on a moderately high-dimensional regime of the model. Although this scenario is relevant in many applications, extending our method to handle sparse, high-dimensional settings is challenging. In such scenarios, where the model's dimensionality may exceed the sample size, it becomes crucial to integrate regularization techniques into the model fitting objective function via a penalty term to ensure effective model performance. Moreover, while our current model uses a linear mapping to integrate representations, there is potential to explore more flexible approaches. For instance, transitioning from task-specific linear functions to nonlinear functions could allow for the capture of more complex non-linear relationships between the representations and the target responses. Pursuing advancements in this area could significantly improve the model's capacity and its adaptability to more complex data. We hope to address these issues in our future work.

\bibliographystyle{apalike}
{\singlespacing
\bibliography{RTL-bib.bib}
}

\newpage
\appendix
\setcounter{section}{0}
\renewcommand{\thesection}{\Alph{section}}
\setcounter{equation}{0}
\renewcommand{\theequation}{A.\arabic{equation}}
\setcounter{page}{1}
\renewcommand{\thepage}{\arabic{page}}

\def\thetable{S\arabic{table}}
\def\thefigure{S\arabic{figure}}
{\centering
\textbf{\LARGE Appendix}

\section{The network architecture for the semi-synthetic MNIST data}

The architecture of the network used in the semi-synthetic data analysis is detailed in Figure~\ref{fig:struc_mnist}.
\begin{figure}[H]
  \centering
  \includegraphics[width=\linewidth]{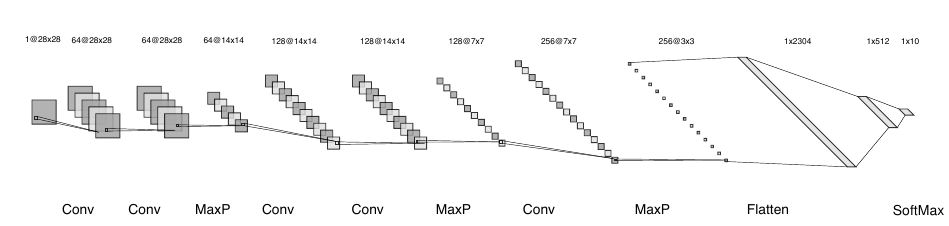}
  \caption{Structure of representation network used for synthetic data analysis. The convolution (Conv) transforms, max pooling (MaxP) transforms, tensor flattening (Flatten) and softmax transform are labeled in the bottom. The dimension of the output in each layer is labeled in the top.}
  \label{fig:struc_mnist}
\end{figure}

\section{Proofs of theoretical results}
Before we present the proofs of the results stated in the paper, we first introduce the definition of Rademacher and Gaussian complexity for $\calR$.
\begin{definition}
  We define the empirical  and population Rademacher complexities for a class of functions $\calR$ containing function $\bR: \mbR^q \rightarrow \mbR^p$ over $n$ data points, $(\bZ_1,\ldots,\bZ_n)$ as,
  \begin{equation*}
    \wcalF_{n}(\calR)=\rmE_{\varepsilon}\left[\sup\limits_{{\bR}\in \calR}\frac{1}{n}\sum_{j=1}^p\sum_{i=1}^{n}\varepsilon_{ij}{R_j}(\bZ_i)\right],
  \end{equation*}
  and
  \begin{equation*}
    \calF_{n}(\calR)=\rmE_{\bz}[\wcalF_{n}(\calR)],
  \end{equation*}
  respectively, where ${\rmE}_{\varepsilon}(\cdot)$ refers to the expectation operator taken over the randomness $\varepsilon_{ij}$'s, $\varepsilon_{ij}$'s are independent Rademacher random variables and $R_j(\cdot)$ is the $j$th element of $\bR(\cdot)$. Analogously, the empirical  and population Gaussian complexities are defined as
  \begin{equation*}
    \wcalG_{n}(\calR)=\rmE_{\iota}\left[\sup\limits_{{\bR}\in \calR}\frac{1}{n}\sum_{j=1}^p\sum_{i=1}^{n}\iota_{ij}{R_j}(\bZ_i)\right],
  \end{equation*}
  and
  \begin{equation*}
    \calG_{n}(\calR)=\rmE_{\bz}[\wcalG_{n}(\calR)],
  \end{equation*}
respectively, where $\iota_{ij}$'s are independent standard Gaussian random variables.
\end{definition}

\subsection{Auxiliary Lemmas}
Define $\ell(\calD_k;\bbeta_k,\bgamma_k,\bR)=\{Y_k-\bX_k\tr \bbeta_k-\bR\tr(\bZ_k) \bgamma_k\}_2^2$, where $\calD_k=(\bX_k,\bZ_k,Y_k)$.
Let $\Psi_{\ell,\delta}=\bigl\{\sum_{k=1}^K\ell(\calD_{k},\bpsi)/K-\sum_{k=1}^K\ell(\calD_{k},\bpsi^{\ast})/K,\bpsi\in \Psi_{\delta}\bigr\},$ where $\Psi_{\delta}=\{\bpsi: {\delta}/{2}\leq d_2(\bpsi,\bpsi^{\ast})\leq \delta,\delta>0, \bpsi\in\Psi\}$, $\bpsi=\{(\bbeta_{k},\bgamma_k)_{k=1}^K,\bR\}$, and  $\bpsi^{\ast}=\{(\bbeta_{\ast k},\bgamma_k^{\ast})_{k=1}^K,\bR^\ast\}$. Denote $\mathbb{P}_N$ and $\mathbb{P}$ as the empirical and probability measure of $\{\calD_{n,k}\}_{k=1}^K$ and $\{\calD_{k}\}_{k=1}^K$, where $\calD_{n,k}=\bigl\{\calD_{ki}=(\bxik,\bzik,Y_{ki}),i=1,\ldots, n_k\bigr\}$. We further define $\mathbb{G}_N=\sqrt{N}(\mathbb{P}_N-\mathbb{P})$. Denote the population $2$-norm  for function class $\calB^{\otimes K} +\Gamma^{\otimes K}(\calR)$ as,
\begin{align*}
d_2(\bpsi,\bpsi^{\prime})=\left(\frac{1}{K}\sum_{k=1}^K \rmE\left[\left(\bX_k^{\top}\bbeta_k+\bR^{\top}(\bZ_k)\bgamma_k-\bX_k^{\top}\bbeta_k^\prime-\bR^{\prime\top}(\bZ_k)\bgamma_{k}^{\prime}\right)^2\right]\right)^{1/2},
\end{align*}
where $\bpsi^{\prime}=\{(\bbeta_k^{\prime},\bgamma_k^{\prime})_{k=1}^K, \bR^{\prime}\}$. We use $\lesssim$, $\gtrsim$ and $\asymp$ to denote less than, greater than, and equal to up to a universal constant. For simplicity in notation, we remove the parts with the same parameters from the distance ${d}_2$ in the following contents.
\begin{lemma}\label{lemog}
Suppose Condition \ref{conb} holds, we have
\begin{align*}
  \rmE^{\ast}\|\mathbb{G}_N\|_{\Psi_{\ell,\delta}}\lesssim \delta \sqrt{s_1\log(s_2/\delta)}+\frac{s_1}{\sqrt{N}}\log(s_2/\delta),
\end{align*}
where $\rmE^{\ast}$ is the outer measure and $s_2={12B_{\gamma}(D+1)(B_{R}+1)(2B_{\theta})^{D+2}(\prod_{j=0}^D p_j)}{\left(\prod _{j=1}^D p_j!\right)^{-1/S}}$.
\end{lemma}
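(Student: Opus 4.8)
The goal is to bound the expected supremum of the empirical process $\mathbb{G}_N$ indexed by the increment class $\Psi_{\ell,\delta}$, so the plan is to invoke a chaining argument (Dudley's entropy integral or a localized maximal inequality) over this class. First I would establish that every function in $\Psi_{\ell,\delta}$ is bounded and, more importantly, that the map $\bpsi \mapsto \ell(\calD_k;\bpsi) - \ell(\calD_k;\bpsi^{\ast})$ is Lipschitz in the residual $\bX_k^\top\bbeta_k + \bR^\top(\bZ_k)\bgamma_k$ with a Lipschitz constant controlled by the boundedness of $\bbeta_k,\bgamma_k,\bX_k,\bR$ assumed in Condition \ref{conb}. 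This lets me pass from the squared-loss increments to increments of the underlying linear-plus-representation functions, so that a covering-number bound for $\calB^{\otimes K}+\Gamma^{\otimes K}(\calR)$ suffices.

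The key technical input is a covering-number (or metric-entropy) estimate for the neural network class $\calR = \calN\calN(W,D,B_\theta)$ together with the finite-dimensional coefficient blocks. I would bound the covering number of $\Psi_\delta$ in the $d_2$-metric by splitting into the $Kd$ parameters of the $\bbeta_k$'s, the $Kp$ parameters of the $\bgamma_k$'s, and the network weights $\btheta$, whose count is $S$; the total effective dimension is $s_1=\max\{Kd,Kp,S\}$. The network piece is the delicate one: a standard Lipschitz-in-parameters argument for ReLU networks gives that perturbing each weight by $\eta$ perturbs the output by at most a factor growing like $(D+1)(2B_\theta)^{D}\prod_{j} p_j$ times $\eta$, which is exactly where the constant $s_2$ in the statement originates—the factor $\left(\prod_{j=1}^D p_j!\right)^{-1/S}$ being the normalization that arises when one converts the product of layer widths into a per-parameter radius inside the logarithm. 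This yields a bound of the form $\log N(\epsilon,\Psi_\delta,d_2)\lesssim s_1\log(s_2/\epsilon)$.

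With the entropy bound in hand, I would feed it into Dudley's integral $\int_0^{\delta}\sqrt{\log N(\epsilon,\Psi_\delta,d_2)}\,d\epsilon \lesssim \int_0^\delta \sqrt{s_1\log(s_2/\epsilon)}\,d\epsilon$. Evaluating this integral produces the leading term $\delta\sqrt{s_1\log(s_2/\delta)}$. The second term $s_1 N^{-1/2}\log(s_2/\delta)$ comes from the higher-order (Bernstein-type) correction in a localized maximal inequality, accounting for the fact that the increments are bounded but not sub-Gaussian at the smallest scales; equivalently it is the contribution of the $\sqrt{N}$-normalization acting on the entropy at the resolution $\delta$. I would state the maximal inequality in the form of a Bernstein-Orlicz or van der Vaart–Wellner bound that splits into a sub-Gaussian chaining term and a sub-exponential remainder, matching the two summands in the claim.

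The main obstacle I anticipate is controlling the covering number of the ReLU network class sharply enough to produce the specific constant $s_2$, rather than a cruder bound. The subtlety is that naive weight-perturbation bounds accumulate multiplicatively across the $D$ layers, giving a radius that scales with $\prod_{j=0}^D p_j$ and $(2B_\theta)^{D+2}$; getting the combinatorial normalization $\left(\prod_{j=1}^D p_j!\right)^{-1/S}$ correct requires carefully counting how the per-layer Lipschitz factors distribute over the $S$ coordinates of the weight space. I would therefore spend the bulk of the effort on a layer-by-layer induction establishing the output Lipschitz constant in terms of the weights, then converting this into a coordinatewise covering of the weight box $[-B_\theta,B_\theta]^S$ and tracking how the multiplicative layer factors enter the logarithm. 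Once that constant is pinned down, combining it with Condition \ref{conb}'s boundedness (which also guarantees the sub-exponential tails needed to invoke the localized inequality) and integrating gives the stated bound.
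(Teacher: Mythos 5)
Your proposal follows essentially the same route as the paper: decompose the covering of the composite class $\calB^{\otimes K}+\Gamma^{\otimes K}(\calR)$ into the $Kd$, $Kp$, and $S$ parameter blocks, bound the network covering number via a Lipschitz-in-parameters argument (the paper cites Theorem 3 of Shen et al.\ 2023 for the constant $s_2$ rather than re-deriving it), evaluate the entropy integral to get the term $\delta\sqrt{s_1\log(s_2/\delta)}$, and obtain the second term as the Bernstein-type correction $\calJ_{[\,]}^2/(\delta^2\sqrt{N})$ from the van der Vaart--Wellner maximal inequality (Lemma 3.4.2). The approach and the role of each ingredient match the paper's proof.
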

\begin{proof}[Proof of Lemma \ref{lemog}]
Using the triangle inequality, we can decompose the distance on function class $\calB^{\otimes K}+\Gamma^{\otimes K}(\calR)$ into a distance over $\calB^{\otimes K}$, $\Gamma^{\otimes K}$, and $\calR$. 
We have
\begin{align}\label{decomp}
  &{d}_{2}(\{(\bbeta_k^\prime,\bgamma_k^\prime)_{k=1}^K,\bR^\prime\},\{(\bbeta_k,\bgamma_k)_{k=1}^K,\bm{R}\})\nonumber\\
  \leq&d_{2}(\{(\bbeta_k^\prime,\bgamma_k^\prime)_{k=1}^K,\bR^\prime\},\{(\bbeta_k,\bgamma_k^\prime)_{k=1}^K,\bR^\prime\})\nonumber\\& +{d}_{2}(\{(\bbeta_k,\bgamma_k^\prime)_{k=1}^K,\bR^\prime\},\{(\bbeta_k,\bgamma_k)_{k=1}^K,\bR^\prime\})\nonumber\\&+{d}_{2}(\{(\bbeta_k,\bgamma_k)_{k=1}^K,\bR^\prime\},\{(\bbeta_k,\bgamma_k)_{k=1}^K,\bm{R}\})\nonumber\end{align}
    \begin{align}
  \leq&{d}_{2}(\{\bbeta_k^\prime\}_{k=1}^K,\{\bbeta_k\}_{k=1}^K)+{d}_{2}(\{\bgamma_k^\prime\}_{k=1}^K,\{\bgamma_k\}_{k=1}^K)\nonumber\\&+\max_{k\in[K]}\|\bgamma_k\|_2 {d}_{2}(\bR^\prime,\bR).
\end{align}
We then use a covering argument on each of the spaces  $\calB^{\otimes K}$, $\Gamma^{\otimes K}$, and $\calR$ to witness a covering of the composed space $\calB^{\otimes K}+\Gamma^{\otimes K}(\calR)$.
First, let $\calC_{\calB^{\otimes K}}$ be a $\tau_0$-covering for the function class $\calB^{\otimes K}$ of the norm ${d}_2$. Then for each $\bbeta \in \calC_{\calB^{\otimes K}}$, construct a $\tau_1$-covering, $\calC_{\calR}$ for the function class $\calR$ of the norm ${d}_2$. Last, for each $\bbeta \in \calC_{\calB^{\otimes K}}$ and $\bR \in \calC_{\calR}$, construct a $\tau_2$-covering $\calC_{\Gamma^{\otimes K}}$ for the function class $\Gamma^{\otimes K}$ of the norm ${d}_2$.  Using the decomposition of distance (\ref{decomp}), we can claim that set $$\calC_{\calB^{\otimes K}}\cdot \calC_{\Gamma^{\otimes K}(\calR)}=\cup_{\bbeta\in\calC_{\calB^{\otimes K}}}(\cup_{\bR\in \calC_{\calR}}(\calC_{\Gamma^{\otimes K}}),)$$
is a $(\tau_0+\max_{k\in [K]}\|\bgamma_k\|_2 \tau_1+\tau_2)$-covering for the function space $\calB^{\otimes K}+\Gamma^{\otimes K}(\calR)$ in the norm ${d}_2$.
 To see this, let $\{\bbeta_k\}_{k=1}^K\in \calB^{\otimes K},\{\bgamma_k\}_{k=1}^K\in \Gamma^{\otimes K}$, and $\bR\in\calR$ be arbitrary. Now let $\{\bbeta_k^\prime\}_{k=1}^K\in \calC_{\calB^{\otimes K}}$ be $\tau_0$ close to $\{\bbeta_k\}_{k=1}^K$; given this $\{\bbeta_k^\prime\}_{k=1}^K$, there exists  $\bR^\prime\in \calC_{\calR}$ be $\tau_1$ close to $\bR$; given this $\{\bbeta_k^\prime\}_{k=1}^K$ and $\bR^\prime$, there exists $\{\bgamma_k^\prime\}_{k=1}^K\in \calC_{\Gamma^{\otimes K}}$ such that $\{\bgamma_k^\prime\}_{k=1}^K$ be $\tau_2$ close to $\{\bgamma_k\}_{k=1}^K$. By the process of constructing $\{(\bbeta_k^\prime,\bgamma_k^\prime)_{k=1}^K,\bR^\prime\}$ and (\ref{decomp}), we have that,
\begin{align*}
  {d}_2(\{(\bbeta_k^\prime,\bgamma_k^\prime)_{k=1}^K,\bR^\prime\},\{(\bbeta_k,\bgamma_k)_{k=1}^K,\bm{R}\}) \leq \tau_0+\max_{k\in[K]}\|\bgamma_k\|_2 \tau_1+\tau_2.
\end{align*}
We now bound the cardinality of the coverings $\calC_{\calB^{\otimes K}}\cdot \calC_{\Gamma^{\otimes K}(\calR)}$.
\begin{align*}
  \left|\calC_{\calB^{\otimes K}}\cdot\calC_{\Gamma^{\otimes K}(\calR)}\right|
  \leq|\calC_{\calB^{\otimes K}}||\calC_{\calR}|\max_{\bR\in\calR}\left|\calC_{\Gamma^{\otimes K}_{\bR}}\right|.
\end{align*}
To control the cardinality of $\max_{\bR\in\calR}\left|\calC_{\Gamma^{\otimes K}_{\bR}}\right|$, note an $\epsilon$-covering of $\calC_{\Gamma^{\otimes K}_{\bR}}$ can be obtained from the cover $\calC_{\Gamma_{\bR}}\times\cdots\times \calC_{\Gamma_{\bR}}$. Hence,
\begin{align*}
  \left|\calC_{\calB^{\otimes K}}\right|
  \leq\left|\calC_{\calB}\right|^{K},\quad
  \max_{\bR\in\calR}\left|\calC_{\Gamma^{\otimes K}_{\bR}}\right|
\leq\left|\max_{\bR\in\calR}\calC_{\Gamma_{\bR}}\right|^{K}.
\end{align*}
Note that for any $\bpsi, \bpsi^{\prime} \in \Psi_{\delta}$, we have
\begin{align*}
  &\rmE\left[\frac{1}{K}\sum_{k=1}^K\ell(\calD_{k},\bpsi^{\prime})-\frac{1}{K}\sum_{k=1}^K\ell(\calD_{k},\bpsi)\right]\\
  =&\rmE\left[\frac{1}{K}\sum_{k=1}^K\ell(\calD_{k},\bpsi^{\prime})-\frac{1}{K}\sum_{k=1}^K\ell(\calD_{k},\bpsi^{\ast})\right]\\
  &+\rmE\left[\frac{1}{K}\sum_{k=1}^K\ell(\calD_{k},\bpsi)-\frac{1}{K}\sum_{k=1}^K\ell(\calD_{k},\bpsi^{\ast})\right]\\=&d_2^2(\bpsi,\bpsi^{\ast})+d_2^2(\bpsi^{\prime},\bpsi^{\ast})\lesssim\delta^2.
\end{align*}
By Theorem 3 in \citet{shen2023complexity}, we have
\begin{align*}
  \calN(\tau,{d}_{2},\calR)\leq &\calN(\tau,{d}_{\infty},\calR)\\\leq &\frac{\left(4(D+1)(B_{R}+1)(2B_{\theta})^{D+2}(\prod_{j=0}^D p_j)\tau^{-1}\right)^{S}}{\prod _{j=1}^D p_j!},
\end{align*}
where ${d}_{\infty}$ is the infinity norm. 
Furthermore, by the construction of covering net and Theorem 2.7.11 in \citet{van1996weak}, we have
\begin{align*}
  &\log \calN_{[]}(\tau,d_2,\Psi_{\ell,\delta})\\
  \lesssim &Kd\log\left(\frac{3\delta}{\tau}\right)+Kp\log\left(\frac{3\delta}{\tau}\right)\\
&+S\log\left(\frac{12B_{\gamma}(D+1)(B_{R}+1)(2B_{\theta})^{D+2}(\prod_{j=0}^D p_j)}{\tau\left(\prod _{j=1}^D p_j!\right)^{1/S}}\right).
\end{align*}
Using sub-additivity of the $\sqrt{}$ function, if $\delta\leq s_2/3$, then we have
\begin{align}\label{boudJ}
\calJ_{[ ]}(\delta,\Psi_{\ell,\delta}):=&\int_{0}^{\delta}\sqrt{1+\log \calN_{[ ]}(\tau,d_2,\Psi_{\ell,\delta})}\rmd\tau\nonumber\\
\lesssim&\int_{0}^{\delta}\sqrt{1+s_1\log(s_2/\tau)}\rmd\tau\nonumber\\
=&s_2\sqrt{\frac{s_1}{2}}\int_{\sqrt{2\log(s_2/\delta)}}^{\infty} v^2e^{-v^2/2}\rmd v\nonumber\\
\asymp&\delta \sqrt{s_1\log(s_2/\delta)}.
\end{align}
Under Condition \ref{conb}, combining (\ref{boudJ}) with lemma 3.4.2 in \citet{van1996weak} gives
\begin{align*}
  \rmE^{\ast}\|\mathbb{G}_N\|_{\Psi_{\ell,\delta}}\lesssim\calJ_{[ ]}(\delta,\Psi_{\ell,\delta})\left\{1+\frac{\calJ_{[ ]}(\delta,\Psi_{\ell,\delta})}{\delta^2\sqrt{N}}\right\}\lesssim \delta \sqrt{s_1\log(s_2/\delta)}+\frac{s_1}{\sqrt{N}}\log(s_2/\delta).
\end{align*}
\end{proof}

\begin{lemma}\label{bR}
Suppose Condition \ref{conb} holds, if $n_k\gtrsim d +\log K$, we have
\begin{align*}
  &\calF_{N}(\calB^{\otimes K} +\Gamma^{\otimes K}(\calR))\\
  \leq& 
  O\{\underline{n}^{-1/2}{d^{1/2}}\}+O\{\underline{n}^{-1/2}N^{-2}(p \log N)^{1/2}\}+O\left\{\underline{n}^{-1/2}p^{1/2} (\log N)\right\}
\end{align*}
  \begin{align*}
 & +O\{{N}^{-1/2}{(D+2+\log q)^{1/2}}p(\log N)^2\prod_{i=0}^D (p_i+1)\}\\&+O\left\{\underline{n}^{-1/2}N^{-2}\left(S\log\left(N (D+1)(2B_{\theta})^{D+2}\left(\prod_{j=0}^D p_j\right)\left(\prod _{j=1}^D p_j!\right)^{-1/S}\right)\right)^{1/2}\right\}.
\end{align*}
\end{lemma}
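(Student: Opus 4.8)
The plan is to bound the population Rademacher complexity $\calF_N(\calB^{\otimes K}+\Gamma^{\otimes K}(\calR))$ by splitting the composite class into its three constituent pieces — the linear heads $\calB^{\otimes K}$, the coefficient heads $\Gamma^{\otimes K}$, and the shared representation class $\calR$ — and controlling each separately before recombining. The natural starting point is the distance decomposition (\ref{decomp}) already established in the proof of Lemma \ref{lemog}, which shows that, up to the triangle inequality, a function in the composite class is the sum of a term depending only on $\{\bbeta_k\}_{k=1}^K$, a term depending only on $\{\bgamma_k\}_{k=1}^K$, and the composition term $\bgamma_k^\top\bR(\bZ_k)$, in which $\bR$ enters through the linear (hence Lipschitz, with constant $\|\bgamma_k\|_2\le B_{\gamma}$) map $\bR\mapsto\bgamma_k^\top\bR$. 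Combined with the subadditivity of the complexity, this reduces the problem to three sub-problems. A key bookkeeping point throughout is that the task-specific heads are seen only by their own $n_k\ge\underline{n}$ samples, whereas the shared representation $\bR$ is effectively pooled across all $N$ source observations; this dichotomy is what produces the factor $\underline{n}^{-1/2}$ in the head terms and $N^{-1/2}$ in the representation term.

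First I would dispatch the two finite-dimensional head classes. Under Condition \ref{conb}, since $\|\bbeta_k\|_2\le B_{\beta}$ and $\|\bX_k\|_2\le B_X$, the class $\calB$ restricted to a single task is a bounded $d$-dimensional linear class whose Rademacher complexity is $O(\sqrt{d/n_k})$; aggregating over the $K$ tasks and using $n_k\ge\underline{n}$ yields the term $O(\underline{n}^{-1/2}d^{1/2})$. The analogous argument for the $p$-dimensional coefficient class $\Gamma$, now acting on the bounded representation outputs $\|\bR(\bZ)\|_2\le B_R$, gives $O(\underline{n}^{-1/2}p^{1/2}(\log N))$, where the extra $\log N$ factor and the lower-order $\underline{n}^{-1/2}N^{-2}(p\log N)^{1/2}$ term arise from unioning over a cover of $\calR$ needed to control the interaction between $\bgamma_k$ and $\bR$, together with the discretization cutoff of the entropy integral taken at scale $N^{-2}$.

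The heart of the argument is the representation class $\calR$. Here I would reuse the covering-number bound for ReLU networks from Theorem~3 of \citet{shen2023complexity}, exactly as quoted inside the proof of Lemma \ref{lemog}, which bounds $\calN(\tau,d_\infty,\calR)$ by $\bigl(4(D+1)(B_R+1)(2B_{\theta})^{D+2}(\prod_{j=0}^D p_j)\tau^{-1}\bigr)^{S}/\prod_{j=1}^D p_j!$, so that $\log\calN(\tau,d_\infty,\calR)\lesssim S\log(c/\tau)$. Feeding this entropy bound into Dudley's entropy integral for the (subgaussian) Rademacher process and carrying out the chaining produces the dominant term $O(N^{-1/2}(D+2+\log q)^{1/2}p(\log N)^2\prod_{i=0}^D(p_i+1))$, in which the factor $p$ reflects the $p$-dimensional output and $(D+2+\log q)^{1/2}\prod_{i=0}^D(p_i+1)$ is the metric-entropy cost of the network depth and width; the residual $\underline{n}^{-1/2}N^{-2}(S\log(\cdots))^{1/2}$ term is the corresponding discretization error from truncating the entropy integral. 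Collecting the five pieces gives the stated bound.

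I expect the main obstacle to be the composition term $\bgamma_k^\top\bR(\bZ_k)$: because the shared representation enters the risk nonlinearly through the task-specific coefficients, one cannot simply add the complexities of $\Gamma^{\otimes K}$ and $\calR$. The resolution is a contraction/peeling argument — first fix a $d_\infty$-cover of $\calR$, bound the conditional complexity of $\{\bgamma_k^\top\bR\}$ over that cover using the Lipschitz constant $B_{\gamma}$, and then pay a separate chaining cost for $\calR$ itself — all while carefully tracking which error terms are governed by the task-local sample size $\underline{n}$ and which by the pooled size $N$. Keeping this accounting consistent across the three classes, rather than any single calculation, is the delicate part of the proof.
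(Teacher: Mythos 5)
Your overall architecture --- splitting $\calF_{N}(\calB^{\otimes K}+\Gamma^{\otimes K}(\calR))$ by subadditivity into head terms governed by the task-local sample sizes $n_k\ge\underline{n}$ and a shared-representation term governed by the pooled size $N$, with the $d$- and $p$-dimensional linear heads contributing $O(\underline{n}^{-1/2}d^{1/2})$ and $O(\underline{n}^{-1/2}p^{1/2})$ --- matches the paper. But there is a genuine gap in how you obtain the dominant representation term. You claim that feeding the covering-number bound of \citet{shen2023complexity} (metric entropy of order $S\log(c/\tau)$) into Dudley's entropy integral ``produces'' the term $O\{N^{-1/2}(D+2+\log q)^{1/2}p(\log N)^2\prod_{i=0}^D(p_i+1)\}$. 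It does not: chaining against an entropy of the form $S\log(c/\tau)$ yields a bound of order $\sqrt{S\log(\cdot)}/\sqrt{N}$, i.e.\ roughly $W\sqrt{D}/\sqrt{N}$ for width $W$, whereas the stated term carries the multiplicative factor $\prod_{i=0}^D(p_i+1)\approx W^{D+1}$, which is exponentially larger in $D$ and structurally different. That product factor is the signature of a norm-based bound: the paper obtains it from Theorem~2 of \citet{golowich2018size}, which gives $\calF_{N}(R_j)\le 2\prod_{i=0}^D(p_i+1)B_{R}\sqrt{D+2+\log q}/\sqrt{N}$ directly, with no chaining over $\calR$ at all. The Shen et al.\ covering number enters the paper's proof only in the auxiliary $N^{-2}$-weighted terms, via Lemma~9.1 of \citet{gyorfi2002distribution} applied to the supremum term of the composition bound --- not as ``discretization error from truncating the entropy integral'' as you describe. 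As written, your argument would prove a lemma with a different (and not interchangeable) leading term.

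A second, smaller issue is the composition $\bgamma_k^\top\bR(\bZ_k)$: your ``contraction/peeling'' sketch is in the right spirit but too vague to account for the specific logarithmic factors in the statement. The paper handles this step with the Gaussian-complexity chain rule (Theorem~7 of \citet{tripuraneni2020theory}), which is why the entire argument is run in terms of Gaussian complexities $\wcalG$ and converted back to Rademacher complexity at the end via $\wcalF_{N}\le\sqrt{\pi/2}\,\wcalG_{N}$; one of the $\log N$ factors in the final bound arises precisely from the reverse comparison $\wcalG_{N}(\calR)\le(\log N)\sum_{j}\wcalF_{N}(R_j)$, a mechanism absent from your proposal. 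To repair the proof, replace the Dudley step by the Golowich norm-based bound for each coordinate $R_j$, and make the composition step precise via the chain rule (or an equivalent contraction argument that explicitly tracks where the $\log N$ and $N^{-2}$ factors originate).
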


\begin{proof}[Proof of Lemma \ref{bR}]
 Under Condition \ref{conb}, by the definition of the empirical Gaussian complexity, we have
  \begin{align}\label{bgb}
    \wcalG_{n_k}(\calB)&=\rmE_{\iota}\left[\sup\limits_{\bbeta_k\in \calB}\frac{1}{n_k}\sum_{i=1}^{n_k}\iota_{ik}\bxik\bbeta_k\right]\nonumber
    \\
    &\leq \max\limits_{k\in[K]}\frac{B_{\beta}}{n_k}\sqrt{\rmE_{\iota}\left[\sum_{i=1}^{n_k}\|\iota_{ik}\bxik\|_2^2\right]}\nonumber\\&\leq\max\limits_{k\in[K]} \frac{B_{\beta}}{n_k}\sqrt{\sum_{i=1}^{n_k}\|\bxik\|_2^2}\nonumber\\
    &=\max\limits_{k\in[K]}\frac{B_{\beta}}{\sqrt{n_k}}\sqrt{{\rm tr}(\Sigma_{\bX_{k}})}\nonumber\\
    &=\max\limits_{k\in[K]}\frac{B_{\beta}}{\sqrt{n_k}}\sqrt{\sum_{j=1}^d\sigma_j(\Sigma_{\bX_{k}})},
  \end{align}
  where $\Sigma_{\bX_k}=\sum_{i=1}^{n_k}\bxik\bxik^\top/n_k$ and $\sigma_j(\Sigma_{\bX_k})$ is the $j$th largest eigenvalue of $\Sigma_{\bX_k}$.  Similarly, we can prove
  
\begin{align} \label{bgr}
  \wcalG_{n_k}(\Gamma) \leq \max\limits_{k\in [K]}\frac{B_\gamma}{\sqrt{n_k}}\sqrt{\sum_{j=1}^p\sigma_j(\Sigma_{\bR(\bZ_{k})})},
\end{align}
where $\sigma_j(\Sigma_{\bR(\bZ_{k})})$ is the $j$th largest eigenvalue of $\sum_{i=1}^{n_k}\bR(\bzik)\bR^{\top}(\bzik)/n_k$.
Furthermore, we can obtain that,
  \begin{align}\label{bgh}
    \wcalG_{N}(\calR)&=\frac{1}{N}\rmE_{\iota}\left[\sup\limits_{\bR\in \calR}\sum_{j=1}^p\sum_{k=1}^{K}\sum_{i=1}^{n_k}\iota_{ik}R_j(\bzik)\right]
    \leq \sum_{j=1}^p \wcalG_{N}(R_j)\leq (\log N)\sum_{j=1}^p \wcalF_{N}(R_j).
  \end{align}
By the definition of empirical Gaussian complexity, we can easily conclude
\begin{align}\label{decopG1}
  \wcalG_{N}(\calB^{\otimes K} +\Gamma^{\otimes K}(\calR))
  \leq \wcalG_{N}(\calB^{\otimes K}) +\wcalG_{N}(\Gamma^{\otimes K}(\calR)).
\end{align}
For simplicity in notation, denote $f_{ik}(\bgamma_k,\bR;\bgamma_k^\prime,\bR^\prime)=\left(\bR\tr (\bZ_{ki})\bgamma_k-\bR^{\prime\top} (\bZ_{ki}) \bgamma_k^\prime\right)^2$. Theorem 7 of \citet{tripuraneni2020theory} implies that
\begin{align}\label{decopG2}
  \wcalG_{N}(\Gamma^{\otimes K}(\calR))
  \leq & 4 \sup_{\{(\bgamma_k)_{k=1}^K,\bR\},\{(\bgamma_k^\prime)_{k=1}^K,\bR^\prime\}}\frac{1}{K}\sum_{k=1}^{K}\frac{1}{n_k}\sum_{i=1}^{n_k}f_{ik}(\bgamma_k,\bR;\bgamma_k^\prime,\bR^\prime)/N^2\nonumber\\& +128(\log N)\left[\left(\max_{k}\|\bgamma_k\|\right)\wcalG_{N}(\calR)+\max_{k}\wcalG_{n_k}(\Gamma)\right].
\end{align}
Similar to the calculation of the covering number in Lemma \ref{lemog}, we can obtain that
\begin{align*}
&\calN(\tau/3,d_{\infty},f_k(\Gamma(\calR)-\Gamma(\calR)))\nonumber\\
\leq &\left(\frac{48B_{R}^2B_{\gamma}}{\tau}\right)^{2p}\left( \frac{\left(192 B_{\gamma}^2 B_{R}(D+1)(B_{R}+1)(2B_{\theta})^{D+2}(\prod_{j=0}^D p_j)\right)^{2S}}{\tau^{2S} (\prod _{j=1}^D p_j!)^2} \right).
\end{align*}
Denote $f_k(\bgamma_k,\bR;\bgamma_k^\prime,\bR^\prime)=\left(\bR\tr (\bZ_{k})\bgamma_k-\bR^{\prime\top} (\bZ_{k}) \bgamma_k^\prime\right)^2$. By Lemma 9.1 of \cite{gyorfi2002distribution}, we have that, for any $\tau>0$
\begin{align*}
& P\left(\sup_{\{\bgamma_k,\bR\},\{\bgamma_k^\prime,\bR^\prime\}}\left|\frac{1}{n_k}\sum_{i=1}^{n_k}f_{ik}(\bgamma_k,\bR;\bgamma_k^\prime,\bR^\prime)- \mathbb{E}\left[f_k(\bgamma_k,\bR;\bgamma_k^\prime,\bR^\prime)\right]\right|> \tau\right)\nonumber\\
\leq &2 \calN(\tau/3,d_{\infty},f_k(\Gamma(\calR)-\Gamma(\calR))) \exp\left\{-\frac{n_k\tau^2}{18B_{\gamma}^2 B_{R}^2}\right\}.
\end{align*}
Denote $c=1/(18B_{\gamma}^2 B_{R}^2)$ and $C=2 \calN(1/(3n_k),d_{\infty},f_k(\Gamma(\calR)+\Gamma(\calR)))$. Note that ${\log C}/{(c n_k)}\geq 1/n_k$.  Then,
\begin{align}
&\mathbb{E}\left[\sup_{\{\bgamma_k,\bR\},\{\bgamma_k^\prime,\bR^\prime\}}\left\{\frac{1}{n_k}\sum_{i=1}^{n_k}f_{ik}(\bgamma_k,\bR;\bgamma_k^\prime,\bR^\prime)- \mathbb{E}\left[f_k(\bgamma_k,\bR;\bgamma_k^\prime,\bR^\prime)\right]\right\}\right]\nonumber\\
\leq &\sqrt{\mathbb{E}\left[\left(\sup_{\{\bgamma_k,\bR\},\{\bgamma_k^\prime,\bR^\prime\}}\left\{\frac{1}{n_k}\sum_{i=1}^{n_k}f_{ik}(\bgamma_k,\bR;\bgamma_k^\prime,\bR^\prime)- \mathbb{E}\left[f_k(\bgamma_k,\bR;\bgamma_k^\prime,\bR^\prime)\right]\right\}\right)^2\right]}\nonumber\\
\leq &\sqrt{\int_{0}^{\infty} P\left\{\left(\sup_{\{\bgamma_k,\bR\},\{\bgamma_k^\prime,\bR^\prime\}}\left\{\frac{1}{n_k}\sum_{i=1}^{n_k}f_{ik}(\bgamma_k,\bR;\bgamma_k^\prime,\bR^\prime)- \mathbb{E}\left[f_k(\bgamma_k,\bR;\bgamma_k^\prime,\bR^\prime)\right]\right\}\right)^2> \tau\right\} \rmd \tau}\nonumber\\
\leq & \sqrt{\frac{\log C}{c n_k}+ \int_{\frac{\log C}{c n_k}}^{\infty} 2 \calN(\tau/3,d_{\infty},f_k(\Gamma(\calR)-\Gamma(\calR))) \exp\left\{-\frac{n_k\tau}{18B_{\gamma}^2 B_{R}^2}\right\} \rmd \tau}\nonumber\\
\leq & \sqrt{\frac{\log C}{c n_k}+ \int_{\frac{\log C}{c n_k}}^{\infty} 2 \calN(1/(3n_k),d_{\infty},f_k(\Gamma(\calR)-\Gamma(\calR))) \exp\left\{-\frac{n_k\tau}{18B_{\gamma}^2 B_{R}^2}\right\} \rmd \tau}\nonumber\\
=& \sqrt{\frac{18B_{\gamma}^2 B_{R}^2\left(1+\log 2+\log\{ \calN(1/(3n_k),d_{\infty},f_k(\Gamma(\calR)-\Gamma(\calR)))\}\right)}{n_k}}\nonumber.
\end{align}
Under Condition \ref{conb}, we have
\begin{align}\label{expd}
&\mathbb{E}\left[\sup_{\{(\bgamma_k)_{k=1}^K,\bR\},\{(\bgamma_k^\prime)_{k=1}^K,\bR^\prime\}}\frac{1}{K}\sum_{k=1}^{K}\frac{1}{n_k}\sum_{i=1}^{n_k}f_{ik}(\bgamma_k,\bR;\bgamma_k^\prime,\bR^\prime)\right]\nonumber\\
\leq &\mathbb{E}\left[\sup_{\{\bgamma_k,\bR\},\{\bgamma_k^\prime,\bR^\prime\}}\frac{1}{n_k}\sum_{i=1}^{n_k}f_{ik}(\bgamma_k,\bR;\bgamma_k^\prime,\bR^\prime)\right]\nonumber\\
\leq & 4B_{\gamma}^2B_{R}^2+\sqrt{\frac{18B_{\gamma}^2 B_{R}^2(1+\log 2)}{n_k}}+\sqrt{\frac{36p B_{\gamma}^2B_{R}^2}{{n_k}}\log\left({48 n_k B_{R}^2B_{\gamma}}\right)}\nonumber\\&\quad \qquad  + \sqrt{\frac{36SB_{\gamma}^2B_{R}^2}{n_k}\log \left( \frac{192 n_kB_{R}B_{\gamma}^2 (D+1)(B_{R}+1)(2B_{\theta})^{D+2}(\prod_{j=0}^D p_j)}{\left(\prod _{j=1}^D p_j!\right)^{1/S}} \right)}.
\end{align}
Noting by \citet{ledoux2013probability} (p97), the empirically Rademacher complexity is upper bounded by empirical Gaussian complexity up to a factor, together with (\ref{decopG1}) and (\ref{decopG2}), we have
\begin{align}\label{boundhR}
  &\wcalF_{N}(\calB^{\otimes K} +\Gamma^{\otimes K}(\calR))\nonumber\\
  \leq &\sqrt{\frac{\pi}{2}}\left(\wcalG_{N}(\calB^{\otimes K} +\Gamma^{\otimes K}(\calR))\right)\nonumber\\
  \leq &\sqrt{\frac{\pi}{2}}\left(\wcalG_{N}(\calB^{\otimes K}) +\wcalG_{N}(\Gamma^{\otimes K}(\calR))\right)\nonumber\\
  \leq & \sqrt{\frac{\pi}{2}}\max_{k}\wcalG_{n_k}(\calB)+2\sqrt{{2\pi}} \sup_{\{(\bgamma_k)_{k=1}^K,\bR\},\{(\bgamma_k^\prime)_{k=1}^K,\bR^\prime\}}\frac{1}{K}\sum_{k=1}^{K}\frac{1}{n_k}\sum_{i=1}^{n_k}f_{ik}(\bgamma_k,\bR;\bgamma_k^\prime,\bR^\prime)/N^2\nonumber\\&+ 64\sqrt{{2\pi}}(\log N)\left[\left(\max_{k}\|\bgamma_k\|\right)\wcalG_{N}(\calR)+\max_{k}\wcalG_{n_k}(\Gamma)\right].
\end{align}
Under Condition \ref{conb} and $\btheta\in\Theta$, adapting from Theorem 2 of \citet{golowich2018size},
  \begin{align*}
    \calF_{N}(R_j)\leq& 2\prod\limits_{i=0}^D (p_i+1)B_{R}\sqrt{D+2+\log q}/\sqrt{N}.
  \end{align*}
If $n_k\gtrsim d +\log K$, applying Lemma 4 of \citet{tripuraneni2020theory} and using the concavity of $\sqrt{\cdot}$ function,  we have
  $\rmE\left[\sqrt{\sum_{j=1}^d\sigma_j(\Sigma_{\bX_{k}})}\right]\leq O(\sqrt{d})$ \mbox{and} $\rmE\left[\sqrt{\sum_{j=1}^p\sigma_j(\Sigma_{\bR(\bZ_{k})})}\right]\leq O(\sqrt{p})$. Thus, under Condition \ref{conb}, combing (\ref{boundhR}) with (\ref{bgb}), (\ref{bgr}), (\ref{bgh}), and (\ref{expd}), we prove Lemma \ref{bR}.
\end{proof}
\begin{lemma}\label{diffQmuJ}
 Suppose Conditions \ref{conb}-\ref{con5} hold, we have
  \begin{align}\label{diffJJ1} 
    \|\wbmu-\bmu^\ast\|_2
   \leq O_p(n_0^{-1/2})+ O_p(\Delta_N);\nonumber\\
  \|\wbJz^{-1}-\bJ_0^{-1}\|_2
    =O_p(n_0^{-1/2})+O_p(\Delta_N).
  \end{align}
\end{lemma}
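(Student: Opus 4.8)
The plan is to establish both rates through one two-stage decomposition: first replace the population quantities by empirical averages that still use the \emph{true} transferred representation $\bR^\ast$, which costs only the parametric $O_p(n_0^{-1/2})$ fluctuation, and then replace $\bR^\ast$ by the estimate $\widehat{\bR}$, where Theorem~\ref{thediffh} supplies the $O_p(\Delta_N)$ control. Concretely, write the estimator in \eqref{estmu} as $\wbmu=\widehat{\mathbf{C}}\,\widehat{\mathbf{M}}^{-1}$ with $\widehat{\mathbf{C}}=n_0^{-1}\sum_{i=1}^{n_0}\bxiz\widehat{\bR}\tr(\bZ_{0i})$ and $\widehat{\mathbf{M}}=n_0^{-1}\sum_{i=1}^{n_0}\wbRiz\widehat{\bR}\tr(\bZ_{0i})$, and note the population target $\bmu^\ast=\mathbf{C}^\ast(\mathbf{M}^\ast)^{-1}$ with $\mathbf{C}^\ast=\expect[\bX_0\bR^{\ast\top}(\bZ_0)]$ and $\mathbf{M}^\ast=\expect[\bR^\ast(\bZ_0)\bR^{\ast\top}(\bZ_0)]$, the latter invertible under the identifiability conditions. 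I would insert the oracle empirical matrices $\widetilde{\mathbf{C}}=n_0^{-1}\sum_i\bxiz\bRaiz\tr$ and $\widetilde{\mathbf{M}}=n_0^{-1}\sum_i\bRaiz\bRaiz\tr$ built from $\bR^\ast$; by the law of large numbers and the boundedness in Condition~\ref{conb}, $\|\widetilde{\mathbf{C}}-\mathbf{C}^\ast\|_2$ and $\|\widetilde{\mathbf{M}}-\mathbf{M}^\ast\|_2$ are $O_p(n_0^{-1/2})$.

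The second stage compares the $\widehat{\bR}$-based matrices with the $\bR^\ast$-based ones. From $\widehat{\mathbf{C}}-\widetilde{\mathbf{C}}=n_0^{-1}\sum_i\bxiz\{\wbRiz-\bRaiz\}\tr$, Cauchy--Schwarz and $\|\bxiz\|_2\le B_X$ give $\|\widehat{\mathbf{C}}-\widetilde{\mathbf{C}}\|_2\le B_X(n_0^{-1}\sum_i\|\wbRiz-\bRaiz\|_2^2)^{1/2}$, with an analogous bound for $\widehat{\mathbf{M}}-\widetilde{\mathbf{M}}$ after telescoping $\wbRiz\widehat{\bR}\tr(\bZ_{0i})-\bRaiz\bRaiz\tr$ into two terms and using $\|\bR\|_2\le B_R$. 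The crucial move is to replace the empirical target average $n_0^{-1}\sum_i\|\wbRiz-\bRaiz\|_2^2$ by the population distance $d_2^2(\widehat{\bR},\bR^\ast)=O_p(\Delta_N^2)$ from Theorem~\ref{thediffh}: since $\widehat{\bR}$ is computed from the source data and is therefore independent of the target sample, conditioning on the source data makes $\widehat{\bR}$ deterministic and the $\bZ_{0i}$ i.i.d., so the conditional mean of this average equals $d_2^2(\widehat{\bR},\bR^\ast)$ and the fluctuation around it is of smaller order by boundedness of $\mathcal{R}$. This yields $\|\widehat{\mathbf{C}}-\mathbf{C}^\ast\|_2$ and $\|\widehat{\mathbf{M}}-\mathbf{M}^\ast\|_2$ equal to $O_p(n_0^{-1/2})+O_p(\Delta_N)$, and the inverse-perturbation expansion $\widehat{\mathbf{M}}^{-1}=(\mathbf{M}^\ast)^{-1}-(\mathbf{M}^\ast)^{-1}(\widehat{\mathbf{M}}-\mathbf{M}^\ast)(\mathbf{M}^\ast)^{-1}+\cdots$ combined with the product rule for $\widehat{\mathbf{C}}\widehat{\mathbf{M}}^{-1}-\mathbf{C}^\ast(\mathbf{M}^\ast)^{-1}$ delivers $\|\wbmu-\bmu^\ast\|_2=O_p(n_0^{-1/2})+O_p(\Delta_N)$.

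For the second bound I would propagate the rate for $\wbmu$ into the residuals. Set $\widehat{U}_i=\bxiz-\wbmu\wbRiz$ and $U_i^\ast=\bxiz-\bmu^\ast\bRaiz$, so $\wbJz=n_0^{-1}\sum_i\widehat{U}_i\widehat{U}_i\tr$ while the oracle version $\widetilde{\bJ}=n_0^{-1}\sum_iU_i^\ast U_i^{\ast\top}$ satisfies $\|\widetilde{\bJ}-\bJ_0\|_2=O_p(n_0^{-1/2})$ by the law of large numbers and the bounded-moment assumptions. Decomposing $\widehat{U}_i-U_i^\ast=-(\wbmu-\bmu^\ast)\wbRiz-\bmu^\ast(\wbRiz-\bRaiz)$ and using the first bound with $n_0^{-1}\sum_i\|\wbRiz-\bRaiz\|_2^2=O_p(\Delta_N^2)$ gives $n_0^{-1}\sum_i\|\widehat{U}_i-U_i^\ast\|_2^2=O_p(n_0^{-1}+\Delta_N^2)$; then writing $\widehat{U}_i\widehat{U}_i\tr-U_i^\ast U_i^{\ast\top}=(\widehat{U}_i-U_i^\ast)\widehat{U}_i\tr+U_i^\ast(\widehat{U}_i-U_i^\ast)\tr$ and applying Cauchy--Schwarz with uniform boundedness of the residuals gives $\|\wbJz-\widetilde{\bJ}\|_2=O_p(n_0^{-1/2})+O_p(\Delta_N)$, hence $\|\wbJz-\bJ_0\|_2=O_p(n_0^{-1/2})+O_p(\Delta_N)$. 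Since $\bJ_0$ is invertible by Condition~\ref{inv} and this deviation is $o_p(1)$, $\wbJz$ is invertible with probability tending to one and the identity $\wbJz^{-1}-\bJ_0^{-1}=\wbJz^{-1}(\bJ_0-\wbJz)\bJ_0^{-1}$ converts the bound on $\|\wbJz-\bJ_0\|_2$ into the stated rate for $\|\wbJz^{-1}-\bJ_0^{-1}\|_2$.

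The hard part will be the passage from the empirical target average $n_0^{-1}\sum_i\|\wbRiz-\bRaiz\|_2^2$ to the population distance $d_2^2(\widehat{\bR},\bR^\ast)$ controlled by Theorem~\ref{thediffh}: one must exploit the independence of $\widehat{\bR}$ (built from the source samples) from the target data, and either condition on the source data or invoke a uniform law of large numbers over the network class $\mathcal{R}$, whose complexity is already controlled in Lemma~\ref{bR}. A secondary point is verifying invertibility of $\mathbf{M}^\ast=\expect[\bR^\ast(\bZ_0)\bR^{\ast\top}(\bZ_0)]$, which follows from the variability of $\bR^\ast$ guaranteed by Condition~\ref{assumind}(b); everything else is routine matrix perturbation and concentration.
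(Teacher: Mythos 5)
Your proposal is correct and follows essentially the same route as the paper: a two-stage decomposition (population target $\to$ oracle empirical quantities built from $\bR^\ast$, costing $O_p(n_0^{-1/2})$ by the law of large numbers, then $\bR^\ast \to \widehat{\bR}$, costing $O_p(\Delta_N)$ via Theorem~\ref{thediffh}), followed by the resolvent identity $\wbJz^{-1}-\bJ_0^{-1}=\wbJz^{-1}(\bJ_0-\wbJz)\bJ_0^{-1}$ and the analogous perturbation bound for the inverted Gram matrix. If anything, your treatment of the passage from the target-sample average $n_0^{-1}\sum_i\|\wbRiz-\bRaiz\|_2^2$ to $d_2^2(\widehat{\bR},\bR^\ast)$ via conditioning on the source data is more careful than the paper's, which simply asserts a pointwise bound $\|\wbR(\bZ)-\bR^\ast(\bZ)\|_2=O_p(\Delta_N)$ for all $\bZ$.
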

\begin{proof}[Proof of Lemma \ref{diffQmuJ}]
{Theorem \ref{thediffh} indicates that $\|\wbR(\bZ)-\bR^{\ast}(\bZ)\|_2=O_p(\Delta_N)$ for all $\bZ\in\calZ$}. 
Denote $\wbQ=\sum_{i=1}^{n_0}\wbRiz(\wbRiz)\tr /n_0$ and $\bQ=\sum_{i=1}^{n_0}\bRaiz(\bRaiz)\tr /n_0$. Under Condition~\ref{conb}, we first derive
\begin{align}\label{diffhh}
&\left\|\wbQ-\bQ\right\|_2\nonumber\\
=&\left\|\frac{1}{n_0}\sum_{i=1}^{n_0}\left(\wbRiz-\bRaiz\right)\left(\wbRiz-\bRaiz\right)\tr \right.\nonumber\\
&+\frac{1}{n_0}\sum_{i=1}^{n_0}\bRaiz\left(\wbRiz-\bRaiz\right)\tr \nonumber\\&+
\frac{1}{n_0}\sum_{i=1}^{n_0}\left.\left(\wbRiz-\bRaiz\right)\left(\bRaiz\right)\tr \right\|_2\nonumber\\
=&O_p\{\|\wbRz-\bRaz\|_2\}=O_p(\Delta_N).
\end{align}
By \eqref{diffhh} and Condition \ref{conb}, we further derive that
\begin{align*}
&\left\|\wbQ^{-1}-\bQ^{-1}\right\|_2\\
=&\left\|\wbQ^{-1}(\wbQ-\bQ)\bQ^{-1}\right\|_2\\
\leq&\left\|\wbQ^{-1}\right\|_2\left\|\wbQ-
\bQ\right\|_2\left\|\bQ^{-1}\right\|_2\\
=&O_p(1)O_p(\Delta_N)O_p(1)=O_p(\Delta_N).
\end{align*}
Recall the construction of efficient scores for $\bbeta_{\ast 0}$, $\bmu^\ast$ is the minimizer of
\begin{equation*}
  \rmE[\|\bX_0-\bmu\bR^\ast(\bZ_0)\|_2^2].
\end{equation*}
Hence, under Conditions \ref{conb}-\ref{con5} and by the weak law of large numbers, we have
\begin{align}\label{diffwbmu}
  &\|\wbmu-\bmu^\ast\|_2\nonumber\\
 \leq&\left\|\frac{1}{n_0}\sum_{i=1}^{n_0}\bxiz(\wbRiz)\tr \wbQ^{-1}- \frac{1}{n_0}\sum_{i=1}^{n_0}\bxiz(\bRaiz)\tr \bQ^{-1}\right\|_2\nonumber\\
 &+\left\|\frac{1}{n_0}\sum_{i=1}^{n_0}\bxiz(\bRaiz)\tr \bQ^{-1}-\bmu^\ast\right\|_2\nonumber\\
 \leq&\left\|\frac{1}{n_0}\sum_{i=1}^{n_0}\bxiz\{\wbRiz-\bRaiz\}\tr \{\wbQ^{-1}-\bQ^{-1}\}\right\|_2\nonumber\\
 &+\left\|\frac{1}{n_0}\sum_{i=1}^{n_0}\bxiz\{\bRaiz\}\tr \{\wbQ^{-1}-\bQ^{-1}\}\right\|_2\nonumber\\
 &+\left\|\frac{1}{n_0}\sum_{i=1}^{n_0}\bxiz\{\wbRiz-\bRaiz\}\tr \bQ^{-1}\right\|_2\nonumber
\end{align}
\begin{align}
 &+\left\|\frac{1}{n_0}\sum_{i=1}^{n_0}\bxiz(\bRaiz)\tr \bQ^{-1}-\bmu^\ast\right\|_2\nonumber\\
 =&O_p(\Delta_N)+O_p(n_0^{-1/2}).
\end{align}

With the quadratic loss and by the property of $\ExbRaz$, we have
\begin{equation*}
  \ExbRaz=\bmu^\ast\bR^\ast(\bZ_0).
\end{equation*}
Hence, by (\ref{diffwbmu}) and $\|\wbRz-\bRaz\|_2=O_p(\Delta_N)$, and the independence between $\calD_{n, 0}$ and $\calD_{n,k}$ for $k=1,\ldots,K$, we have
\begin{align*}\label{diffJJ}
&\|\wbJz-\bJ_0\|_2\nonumber\\
=&\left\|\frac{1}{n_0}\sum_{i=1}^{n_0}\{\bxiz-\wbmu\wbRiz\}\{\bxiz-\wbmu\wbRiz\}\tr -\bJ_0\right\|_2\nonumber\\
\leq&\left\|\frac{1}{n_0}\sum_{i=1}^{n_0}\{\bxiz-\wbmu\wbRiz\}\{\bxiz-\wbmu\wbRiz\}\tr \right.\nonumber\\
 &\left.\quad-\frac{1}{n_0}\sum_{i=1}^{n_0}\{\bxiz-\bmu^\ast\wbRiz\}\{\bxiz-\bmu^\ast\wbRiz\}\tr \right\|_2\nonumber\\
 &+\left\|\frac{1}{n_0}\sum_{i=1}^{n_0}\{\bxiz-\bmu^\ast\wbRiz\}\{\bxiz-\bmu^\ast\wbRiz\}\tr \right.\nonumber\\
 &\left.\quad -E\left[\{\bX_0-\bmu^\ast\wbRz\}\{\bX_0-\bmu^\ast\wbRz\}\tr \right]\right\|_2\nonumber\\
 &+\left\|E\left[\{\bX_0-\bmu^\ast\wbRz\}\{\bX_0-\bmu^\ast\wbRz\}\tr \right.\right.\nonumber\\&\left.\left.\quad-\{\bX_0-\bmu^\ast\bR^\ast(\bZ_0)\}\{\bX_0-\bmu^\ast\bR^\ast(\bZ_0)\}\tr \right]\right\|_2\nonumber\\
 &+\left\|E\left[\{\bX_0-\bmu^\ast\bR^\ast(\bZ_0)\}\{\bX_0-\bmu^\ast\bR^\ast(\bZ_0)\}\tr \right]-\bJ_0\right\|_2\nonumber\\
=& O_p(n_0^{-1/2})+O_p(\Delta_N).
\end{align*}
\end{proof}

\subsection{Proof of Theorems and Corollaries} \label{sec:proof}
\begin{proof}[Proof of Theorem \ref{theidenti}]
For any $\bpsi$ satisfying \eqref{deff0}, we have
  \begin{align*}
     & \frac{1}{K} \sum_{k=1}^K \expect \bigl[ \ell(\mathcal{D}_k ; \bbeta_k, \bgamma_k, \bR) - \ell(\mathcal{D}_k ; \bbeta_{\ast k}, \bgamma_{\ast k}, \bR_\ast) \bigr] \\
      = {} & \frac{1}{K} \sum_{k=1}^K \expect \bigl\{ \bigl[\bR_\ast\tr (\bZ_k) \bgamma_{\ast k} - \bR\tr (\bZ_k) \bgamma_k + \bX_k^\top \bbeta_{\ast k}- \bX_k^\top \bbeta_k \bigr]^2 \bigr\} \\
      = &\frac{1}{K}\sum_{k=1}^K \rmE\left[\left((\bbeta_k-\bbeta_{\ast k})\tr \{\bX_k-\Exzk\}\right.\right.
    \end{align*}
      \begin{align*} &+\left.\left.(\bbeta_k-\bbeta_{\ast k})\tr \{\Exzk\}+\{\bR\tr (\bZ_k)\bgamma_k-\bR_\ast\tr(\bZ_k)\bgamma_{\ast k}\}\right)^2\right]\\
      =&\frac{1}{K}\sum_{k=1}^K \rmE\left[\left((\bbeta_k-\bbeta_{\ast k})\tr \{\bX_k-\Exzk\}\right)^2\right]\\&+\frac{1}{K}\sum_{k=1}^K \rmE\left[\left((\bbeta_k-\bbeta_{\ast k})\tr \{\Exzk\}+\{\bR\tr (\bZ_k)\bgamma_k-\bR_\ast\tr(\bZ_k)\bgamma_{\ast k}\}\right)^2\right].
  \end{align*}
  Together with Condition~\ref{assumKL}, we conclude that $$\frac{1}{K} \sum_{k=1}^K\expect \bigl[ \ell(\mathcal{D}_k ; \bbeta_k, \bgamma_k, \bR) - \ell(\mathcal{D}_k ; \bbeta_{\ast k}, \bgamma_{\ast k}, \bR_\ast) \bigr]>0,$$ for any $\bbeta_k \neq \bbeta_{\ast k}$.
  Hence, by the definition  of $\bpsi$ and $\bpsi_\ast$, we attain that $\bbeta_k = \bbeta_{\ast k}$, and
  \begin{equation}\label{equ1}
    \bR\tr (\bZ)\bgamma_k=\bR_\ast\tr (\bZ)\bgamma_{\ast k},
  \end{equation}
   for all $\bZ\in\calZ$, every $\bbeta_k\in \calB$ and $\bgamma_k\in \Gamma$, and $k\in[K]$. We ignore the upscript of $\bZ_k$ for erasing the confusion, as for all $\bZ_k\in \calZ$, equation (\ref{equ1}) holds. By Condition 
   \ref{assumind}, we construct an invertible matrix $U_0=\left[\bgamma_{\ast k_1},\cdots,\bgamma_{\ast k_p}\right]\in \mathbb{R}^{p \times p}$ such that $U_0\tr \bR_\ast(\bZ)=U\tr \bR(\bZ)$ for all $\bZ\in\calZ$, where $U=\left[\bgamma_{k_1},\cdots,\bgamma_{k_p}\right]\in \mathbb{R}^{p \times p}$. Then we have $\bR_\ast(\bZ)=(U_0\tr )^{-1}U\tr \bR(\bZ)$.
   Note that by Condition \ref{assumind}, there exist $p$ tasks with input $\bZ_1,\ldots, \bZ_p$ such that  $V_0=\left[\bR_\ast(\bZ_1),\ldots,\bR_\ast(\bZ_p)\right]\in \mathbb{R}^{p \times p}$ is an invertible matrix. Consequently, we can write as
   \begin{equation*}
    V_0=\Lambda V,
   \end{equation*}
   where $\Lambda=(U_0)^{-\rmT}U\tr $ and $V=\left[\bR(\bZ_1),\ldots,\bR(\bZ_p)\right]\in \mathbb{R}^{p \times p}$.
   Since $V_0$ is invertible and $U$ does not depend on the input, so are $\Lambda$ and $V$. This completes the proof.
  \end{proof}
  \begin{proof}[Proof of Theorem \ref{thediffh}]
     We center the functions to
    $$\ell_{ik}(\calD_{ki};\bpsi_k)=\ell(\calD_{ki};\bpsi_k)-\ell(\calD_{ki};\bf{0}),$$ 
    where $\bpsi_k=(\bbeta_{k}, \bgamma_{k}, \bR)$. Under Condition \ref{conb}, applying the contraction principle \citep[Theorem 4.12]{ledoux2013probability} over set $\{\bbeta_k^{\top}\bxik+\bgamma_k^{\top}\bRik,i\in [n_k],k\in [K]\} \subseteq \mbR^{N}$ shows that
    \begin{align}\label{contr}
      &\rmE_{\varepsilon}\left[\sup\limits_{\bpsi\in \Psi}\frac{1}{K}\sum_{k=1}^K\frac{1}{n_k}\sum_{i=1}^{n_k}\varepsilon_{ik}\ell_{ik}(\calD_{ki};\bpsi_k)\right]
      \leq 2 B_{\delta}\calF_{N}(\calB^{\otimes K} +\Gamma^{\otimes K}(\calR)),
    \end{align}
   with probability at least $1-\delta$, where $B_\delta=c\log{(1/\delta)}+ 4 B_{X} B_{\beta}+4B_{R}B_{\gamma}$ and $c$ is a constant.
    Additional, we can easily prove that  $|\ell_{ik}(\calD_{ki};{\bf{0}})|\leq B_{\ell}$ with probability $1-\delta$ under Condition \ref{conb}, where $B_{\ell}=(c\log{(1/\delta)}+ B_{X} B_{\beta}+B_{R}B_{\gamma})^2$. Further, the constant-shift property of Rademacher averages (\citet{wainwright2019high}, Exercise 4.7c) gives
    \begin{align}\label{csp}
      &\rmE_{\varepsilon}\left[\sup\limits_{\bpsi\in \Psi}\frac{1}{K}\sum_{k=1}^K\frac{1}{n_k}\sum_{i=1}^{n_k}\varepsilon_{ik}\ell(\calD_{ki};\bpsi_k)\right]\nonumber\\
      \leq& \rmE_{\varepsilon}\left[\sup\limits_{\bpsi\in \Psi}\frac{1}{K}\sum_{k=1}^K\frac{1}{n_k}\sum_{i=1}^{n_k}\varepsilon_{ik}\ell_{ik}(\calD_{ki};\bpsi_k)\right]+\frac{B_\ell}{\sqrt{N}},
    \end{align}
    with probability at least $1-\delta$.
  Theorem 4.10 of \citet{wainwright2019high} shows that
    \begin{align*}
      &\sup\limits_{\bpsi\in \Psi}\left|\frac{1}{K}\sum_{k=1}^{K}\calL(\calD_{n, k};\bbeta_k,\bgamma_k,\bR)-\frac{1}{K}\sum_{k=1}^{K}\rmE\left\{\ell(\calD_{k};\bbeta_k,\bgamma_k,\bR)\right\} \right|\\
      \leq& 2\calF_{N}(\ell(\calB^{\otimes K} +\Gamma^{\otimes K}(\calR)))+2B_\ell\sqrt{\frac{\log(1/\delta)}{{N}}}
    \end{align*}
    with probability at least $1-3\delta$, where $\calL(\calD_{n, k};\bbeta_k,\bgamma_k,\bR)= \sum_{i=1}^{n_k}\ell(\calD_{ki};
\bbeta_k,\bgamma_k,\bR) / n_k$.
    Consequently, combining (\ref{contr}) and (\ref{csp}), we have
    \begin{align}\label{diffLN0}
      &\sup\limits_{\bpsi\in \calF}\left|\frac{1}{K}\sum_{k=1}^{K}\calL(\calD_{n, k};\bbeta_k,\bgamma_k,\bR)-\frac{1}{K}\sum_{k=1}^{K}\rmE\left\{\ell(\calD_{k};\bbeta_k,\bgamma_k,\bR)\right\} \right|\nonumber\\
      \leq& 4 B_{\delta}\calF_{N}(\calB^{\otimes K} +\Gamma^{\otimes K}(\calR))+4B_\ell\sqrt{\frac{\log(1/\delta)}{N}},
    \end{align}
    with probability at least $1-4\delta$.
    Therefore, under Condition~\ref{con4}, combining (\ref{diffLN0}) with Lemma \ref{bR}, we can conclude that
    \begin{align}\label{diffLL}
      \sup\limits_{\bpsi\in \calF}\left|\frac{1}{K}\sum_{k=1}^{K}\calL(\calD_{n, k};\bbeta_k,\bgamma_k,\bR)-\frac{1}{K}\sum_{k=1}^{K}\rmE\left\{\ell(\calD_{k};\bbeta_k,\bgamma_k,\bR)\right\} \right|\pover 0.
    \end{align}
    Further applying Theorem \ref{theidenti}, there exists an invertible matrix $\Lambda_{\ast}$ such that, $\widehat{\bR}$ coverges to $\bR^\ast$, and $\wbgamma_k$ converges to $\bgamma^{\ast}_k$, where $\bR^\ast=\Lambda_{\ast}^{-1}\bR_\ast$ and $\bgamma_{k}^{\ast}=\Lambda_{\ast}^{\top}\bgamma_{\ast k}$ for $k\in[K]$.
    Define
    \begin{equation}\label{defth}
      \tbR=\arg\min_{\bR\in\calR} \frac{1}{K}\sum_{k=1}^K \rmE\left[\left(\bR\tr (\bZ_k)\bgamma_{\ast k}-\bR\tr _\ast(\bZ_k)\bgamma_{\ast k}\right)^2\right].
    \end{equation}
     Under Conditions \ref{conb}--\ref{conRieZ}, by the proof of Theorem 6.2 in \citet{jiao2023deep}, we know that if the network width and depth be $W=114(\lfloor\kappa\rfloor+1)(p\log q)^{\lfloor\kappa\rfloor+1}$ and $D=21(\lfloor\kappa\rfloor+1)^2\lceil N^{(p\log q)/2(p\log q +2\kappa)}\log_2(8N^{(p\log q)/2(p\log q +2\kappa)})\rceil$, then
    \begin{align*}
        &\frac{1}{K}\sum_{k=1}^K\rmE\left[\left(\bgamma^{\ast\top}_{k}\widetilde{\bR}^{\ast}(\bZ_k)-\bgamma^{\ast\top}_{k}\bR^{\ast}(\bZ_k)\right)^2\right]\\
      =&\frac{1}{K}\sum_{k=1}^K\rmE\left[\left(\bgamma^{\top}_{\ast k}\widetilde{\bR}(\bZ_k)-\bgamma^{\top}_{\ast k}\bR_\ast(\bZ_k)\right)^2\right]
    \end{align*}
        \begin{align*}
      \leq &\frac{1}{K}\sum_{k=1}^K\rmE\left[\left\|\widetilde{\bR}(\bZ_k)-\bR_\ast(\bZ_k)\right\|^2\|\bgamma_{\ast k}\|^2\right]\\=&O(p B_{\gamma}^2 N^{-\frac{2s}{2s+p\log q}}),
    \end{align*}
    where $\tbR^{\ast}(\cdot)=\Lambda_{\ast}\tbR(\cdot)$.
    Consequently, under Condition \ref{con4}, we have
    \begin{align}\label{diffEhh}
      &\frac{1}{K}\sum_{k=1}^{K}\rmE\left[\ell(\calD_{k};\bbeta_{\ast k},\bgamma_k^\ast,\tbR^{\ast})\right]-\frac{1}{K}\sum_{k=1}^{K}\rmE\left[\ell(\calD_{k};\bbeta_{\ast k},\bgamma_k^{\ast},\bR^{\ast})\right]\nonumber\\=& \frac{1}{K}\sum_{k=1}^K \rmE\left[\left(\tbR^{\ast\rmT}(\bZ_k)\bgamma_k^\ast-\bR^{\ast\rmT}_{}(\bZ_k)\bgamma_k^\ast\right)^2\right]\rightarrow 0.
    \end{align}
    Then, (\ref{diffLL}) and (\ref{diffEhh}) lead to
    \begin{align}\label{diffLhh}
    &\left|\frac{1}{K}\sum_{k=1}^{K}\calL(\calD_{n, k};\bbeta_{\ast k},\bgamma_k^{\ast},\tbR^{\ast})-\frac{1}{K}\sum_{k=1}^{K}\calL(\calD_{n, k};\bbeta_{\ast k},\bgamma_k^{\ast},\bR^{\ast})\right|\nonumber\\\leq&
    \left|\frac{1}{K}\sum_{k=1}^{K}\calL(\calD_{n, k};\bbeta_{\ast k},\bgamma_k^{\ast},\tbR^{\ast})-\frac{1}{K}\sum_{k=1}^{K}\rmE\left[\ell(\calD_{k};\bbeta_{\ast k},\bgamma_k^{\ast},\tbR^{\ast})\right]\right|\nonumber\\
    &+\left|\frac{1}{K}\sum_{k=1}^{K}\rmE\left[\ell(\calD_{k};\bbeta_{\ast k},\bgamma_k^{\ast},\tbR^{\ast})\right]-\frac{1}{K}\sum_{k=1}^{K}\rmE\left[\ell(\calD_{k};\bbeta_{\ast k},\bgamma_k^{\ast},\bR^{\ast})\right]\right|\nonumber\\
    &+\left|\frac{1}{K}\sum_{k=1}^{K}\rmE\left[\ell(\calD_{k};\bbeta_{\ast k},\bgamma_k^{\ast},\bR^{\ast})\right]-\frac{1}{K}\sum_{k=1}^{K}\calL(\calD_{n, k};\bbeta_{\ast k},\bgamma_k^{\ast},\bR^{\ast})\right|\nonumber\\=&o_p(1).
    \end{align}
    Noting that $\{(\wbbeta_k,\wbgamma_k)_{k=1}^K,\wbR\}$ is the minimizer of (\ref{estMul}), by (\ref{diffLhh}), we further have
    \begin{align}\label{diffhatLL}
      \frac{1}{K}\sum_{k=1}^{K}\calL(\calD_{n, k};\wbbeta_k,\wbgamma_k,\wbR)
      \leq& \frac{1}{K}\sum_{k=1}^{K}\calL(\calD_{n, k};\bbeta_{\ast k},\bgamma_k^{\ast},\tbR^{\ast})\nonumber\\
    \leq&\frac{1}{K}\sum_{k=1}^{K}\calL(\calD_{n, k};\bbeta_{\ast k},\bgamma_k^{\ast},\bR^{\ast})+o_p(1).
    \end{align}

    Since
    \begin{align*}
      &\frac{1}{K}\sum_{k=1}^{K}\rmE\left[\ell(\calD_{k};\bbeta_k,\bgamma_k,\bR)\right]-\frac{1}{K}\sum_{k=1}^{K}\rmE\left[\ell(\calD_{k};\bbeta_{\ast k},\bgamma_k^{\ast},\bR^{\ast})\right]\\=&d_2^2(\bpsi,\bpsi^{\ast}),
    \end{align*}
    then, for any small $\delta>0$, we have
    \begin{align}\label{diffLL0f}
     \inf\limits_{d(\bpsi,\bpsi^{\ast})\geq \delta} \frac{1}{K}\sum_{k=1}^{K}\rmE\left[\ell(\calD_{k};\bbeta_k,\bgamma_k,\bR)\right]
     >&\frac{1}{K}\sum_{k=1}^{K}\rmE\left[\ell(\calD_{k};\bbeta_{\ast k},\bgamma_{\ast k},\bR_\ast)\right]\nonumber\\
     >&\frac{1}{K}\sum_{k=1}^{K}\rmE\left[\ell(\calD_{k};\bbeta_{\ast k},\bgamma_k^{\ast},\bR^{\ast})\right].
    \end{align}
    Therefore, the Conditions of Theorem 5.7 in \citet{van2000asymptotic} follow from (\ref{diffLL}), (\ref{diffhatLL}), and (\ref{diffLL0f}), and this implies that $d(\widehat{\bpsi},\bpsi^{\ast})\pover 0$ as $n\rightarrow \infty$, where $\widehat{\bpsi}=\{(\wbbeta_k,\wbgamma_k)_{k=1}^K,\wbR\}$.
    Next, we show the convergence rates of $d(\widehat{\bpsi},\bpsi^{\ast})$. By lemma \ref{lemog}, we have
    \begin{align}\label{outer}
     \rmE^{\ast} &\left[\sup\limits_{\bpsi\in \calF_{\delta}}\sqrt{N}\left|\frac{1}{K}\sum_{k=1}^{K}\calL(\calD_{n, k};\bbeta_k,\bgamma_k,\bR)-\frac{1}{K}\sum_{k=1}^{K}\rmE\left[\ell(\calD_{k};\bbeta_k,\bgamma_k,\bR)\right]\right.\right.\nonumber\\&\left.\left.-\left\{\frac{1}{K}\sum_{k=1}^{K}\calL(\calD_{n, k};\bbeta_{\ast k},\bgamma_k^{\ast},\bR^{\ast})-\frac{1}{K}\sum_{k=1}^{K}\rmE\left[\ell(\calD_{k};\bbeta_{\ast k},\bgamma_k^{\ast},\bR^{\ast})\right]\right\}\right|\right]\nonumber\\
     \lesssim&\bphi_N(\delta),
      \end{align}
      where $\bphi_N(\delta)=\delta \sqrt{s_1\log(s_2/\delta)}+\frac{s_1}{\sqrt{N}}\log(s_2/\delta)$.

    Denote $\upsilon_N=s_1^{1/2}(\log N)^2N^{-1/2}$. With some calculations, we have that $\bphi_N(\upsilon_N)\leq \upsilon_N^2\sqrt{N}$ and $\bphi_N(p^{1/2} B_{\gamma}N^{-{s}/({2s+p\log q})}) \leq  \sqrt{N}p B_{\gamma}^2 N^{-\frac{2s}{2s+p\log q}}$. On the other hand, by the definition of  $\widetilde{\bR}$ in (\ref{defth}) and analogy to (\ref{outer}),
    \begin{align*}
      &\left|\frac{1}{K}\sum_{k=1}^{K}\calL(\calD_{n, k};\bbeta_{\ast k},\bgamma_k^{\ast},\widetilde{\bR}^{\ast})-\frac{1}{K}\sum_{k=1}^{K}\calL(\calD_{n, k};\bbeta_{\ast k},\bgamma_k^{\ast},\bR^{\ast})\right|\\
      \lesssim&\left|\frac{1}{K}\sum_{k=1}^{K}\rmE\left\{\ell(\calD_{k};\bbeta_{\ast k},\bgamma_k^{\ast},\widetilde{\bR}^{\ast})\right\}-\frac{1}{K}\sum_{k=1}^{K}\rmE\left\{\ell(\calD_{k};\bbeta_{\ast k},\bgamma_k^{\ast},\bR^{\ast})\right\}\right|\\
     &+O_p({N}^{-1/2}\bphi_N(\upsilon_N))\\
     \lesssim& d_2^2(\{(\bbeta_{\ast k},\bgamma_k^{\ast})_{k=1}^K,\widetilde{\bR}^{\ast}\},\{(\bbeta_{\ast k},\bgamma_k^{\ast})_{k=1}^K,{\bR^{\ast}}\})+O_p({N}^{-1/2}\bphi_N(\upsilon_N))\\
     \leq& O(p B_{\gamma}^2 N^{-\frac{2s}{2s+p\log q}})+O_p(\upsilon_N^2).
    \end{align*}
    Then by the definition of $\widehat{\bpsi}$, we have
    \begin{align*}
      &\frac{1}{K}\sum_{k=1}^{K}\calL(\calD_{n, k};\wbbeta_k,\wbgamma_k,\wbR)\\
      \leq &\frac{1}{K}\sum_{k=1}^{K}\calL(\calD_{n, k};\bbeta_{\ast k},\bgamma_k^{\ast},\tbR^{\ast})\\
      \leq& \frac{1}{K}\sum_{k=1}^{K}\calL(\calD_{n, k};\bbeta_{\ast k},\bgamma_k^{\ast},\bR^{\ast})+O_p(p B_{\gamma}^2 N^{-\frac{2s}{2s+p\log q}}+\upsilon_N^2).
    \end{align*}
    By Theorem 3.4.1 in \citet{van1996weak}, we have $d(\widehat{\bpsi},\bpsi^{\ast})=O(p^{1/2}N^{-{s}/({2s+p\log q})}+\upsilon_N)$.
    Furthermore, we have
    \begin{align*}
      &d_2^2(\widehat{\bpsi},\bpsi^{\ast})\\=&\frac{1}{K}\sum_{k=1}^K \rmE\left[\left((\wbbeta_k-\bbeta_{\ast k})\tr \{\bX_k-\ExbRak\}+\right.\right.\\&\left.\left.(\wbbeta_k-\bbeta_{\ast k})\tr \{\ExbRak\}+\{\wbR\tr (\bZ_k)\wbgamma_k-{\bR^{\ast}}\tr (\bZ_k)\bgamma_k^{\ast}\}\right)^2\right]\\
      =&\frac{1}{K}\sum_{k=1}^K \rmE\left[\left((\wbbeta_k-\bbeta_{\ast k})\tr \{\bX_k-\ExbRak\}\right)^2\right]\\&+\frac{1}{K}\sum_{k=1}^K \rmE\left[\left((\wbbeta_k-\bbeta_{\ast k})\tr \{\ExbRak\}+\{\wbR\tr (\bZ_k)\wbgamma_k-{\bR^{\ast}}\tr (\bZ_k)\bgamma_k^{\ast}\}\right)^2\right].
    \end{align*}
    Thus, by Conditions \ref{assumKL}--\ref{conb}, it follows that $\max_{k\in[K]}\|\wbbeta_k-\bbeta_{\ast k}\|_2=O_p(p^{1/2}N^{-{s}/({2s+p\log q})}+\upsilon_N)$
    and
    \begin{align*} 
      \frac{1}{K}\sum_{k=1}^K \rmE\left[\{\wbR\tr (\bZ_k)\wbgamma_k-{\bR^{\ast}}\tr (\bZ_k)\bgamma_k^{\ast}\}^2\right]=O(p B_{\gamma}^2 N^{-\frac{2s}{2s+p\log q}}+\upsilon_N^2).
    \end{align*}
Denote
$$\{\widetilde{\boldsymbol{\gamma}}_k^\prime,\widetilde{\boldsymbol{\gamma}}_k\}=\arg\sup\limits_{\bgamma_k^\prime\in\Gamma}\inf\limits_{\bgamma_k\in\Gamma}\frac{1}{K}\sum_{k=1}^K \rmE\left[\{\wbR\tr (\bZ_k)\bgamma_k-{\bR^{\ast}}\tr (\bZ_k)\bgamma_k^\prime\}^2\right].$$
    Furthermore,
    \begin{align*}
    &\frac{1}{K}\sum_{k=1}^K \rmE\left[\{\wbR\tr (\bZ_k)\widetilde{\boldsymbol{\gamma}}_k-{\bR^{\ast}}\tr (\bZ_k)\widetilde{\boldsymbol{\gamma}}_k\}^2\right]\nonumber\\
    \leq&\sup\limits_{\bgamma_k^\prime\in\Gamma}\inf\limits_{\bgamma_k\in\Gamma}\frac{1}{K}\sum_{k=1}^K \rmE\left[\{\wbR\tr (\bZ_k)\bgamma_k-{\bR^{\ast}}\tr (\bZ_k)\bgamma_k^\prime\}^2\right]\nonumber\\
        \leq& C \inf\limits_{\bgamma_k\in\Gamma}\frac{1}{K}\sum_{k=1}^K \rmE\left[\{\wbR\tr (\bZ_k)\bgamma_k-{\bR^{\ast}}\tr (\bZ_k)\bgamma_k^\ast\}^2\right] \\
        \leq& \frac{C}{K}\sum_{k=1}^K \rmE\left[\{\wbR\tr (\bZ_k)\wbgamma_k-{\bR^{\ast}}\tr (\bZ_k)\bgamma_k^\ast\}^2\right],
    \end{align*}
    where the second inequality is implied by  Lemma 6 of \citet{tripuraneni2020theory}. Consequently, by Condition \ref{conb}, we have $d_2(\wbR,\bR^{\ast})=O(\Delta_N)$.
    \end{proof}

\begin{proof}[Proof of Theorem \ref{thenorm}]
Under Condition \ref{conb}, by Theorem \ref{thediffh} and Lemma \ref{diffQmuJ}, it is easily to conclude that
  \begin{equation}\label{tem1}
    \{\wbJz\}^{-1}\left[\frac{1}{n_0}\sum_{i=1}^{n_0}\{\bxiz-\wbmu\wbRiz\}
    \{\bRaiz-\wbRiz\}\tr \bgamma_0^\ast\right]=O_p(n_0^{-1/2}\Delta_N+\Delta_N^2).
  \end{equation}
  By (\ref{diffwbmu}), and the independence between $\calD_{n,k}$ and $\calD_{n, 0}$, we have
  \begin{align*}
  &\left\|\frac{1}{n_0}\sum_{i=1}^{n_0}\{\bmu^\ast\wbRiz-\wbmu\wbRiz\}\epsiz\right\|_2\\
  \leq &\left\|\wbmu-\bmu^\ast\right\|_2\left\|\frac{1}{n_0}\sum_{i=1}^{n_0}\wbRiz\epsiz\right\|_2\\
  =&O_p(n_0^{-1/2}\Delta_N+n_0^{-1}).
  \end{align*}
  Consequently, we can easily obtain
  \begin{align} \label{tem3}
  &\{\wbJz\}^{-1}\left[\frac{1}{n_0}\sum_{i=1}^{n_0}
    \{\bmu^\ast\bRaiz-\bmu^\ast\wbRiz+\bmu^\ast\wbRiz-\wbmu\wbRiz\}\epsiz\right]\nonumber\\
    =&O_p(n_0^{-1/2}\Delta_N+n_0^{-1}).
  \end{align}

  Recall the first-order optimality Conditions
  \begin{align*} 
    &\frac{1}{n_0}\sum_{i=1}^{n_0}\bxiz\{Y_{0i}-\bxiz\tr \wbbetaz-(\wbRiz)\tr \wbgamma_0\}=0;  \nonumber\\
    &\frac{1}{n_0}\sum_{i=1}^{n_0}\wbRiz\{Y_{0i}-\bxiz\tr \wbbetaz-(\wbRiz)\tr \wbgamma_0\}=0.
  \end{align*}
  Then the empirical orthogonal score  for $\bbeta_{\ast 0}$ is then given by
  \begin{equation*}
    \frac{1}{n_0}\sum_{i=1}^{n_0}\{\bxiz-\wbmu\wbRiz\}\{Y_{0i}-\bxiz\tr \wbbetaz-(\wbRiz)\tr \wbgamma_0\}=0.
  \end{equation*}
  With some simple calculations, we can obtain that
  \begin{align*}
    \frac{1}{n_0}\sum_{i=1}^{n_0}&\{\bxiz-\wbmu\wbRiz\}\{\bxiz\tr (\bbeta_{\ast 0}-\wbbetaz)-
    (\wbRiz)\tr (\wbgamma_0-\bgamma_0^\ast)\\&+(\bRaiz-\wbRiz)\tr \bgamma_0^\ast+\epsiz\}=0.
  \end{align*}
  Then  we conclude
  \begin{align*} 
    \wbbetaz-\bbeta_{\ast 0}=&\{\wbJz\}^{-1}\left[\frac{1}{n_0}\sum_{i=1}^{n_0}\{\bxiz-\wbmu\wbRiz\}
    (\bRaiz-\wbRiz)\tr \bgamma_0^\ast\right]\nonumber\\
    &+\{\wbJz\}^{-1}\left[\frac{1}{n_0}\sum_{i=1}^{n_0}
    \{\bxiz-\wbmu\wbRiz\}\epsiz\right]\nonumber\\
  =&\{\wbJz\}^{-1}\left[\frac{1}{n_0}\sum_{i=1}^{n_0}\{\bxiz-\wbmu\wbRiz\}
  (\bRaiz-\wbRiz)\tr \bgamma_0^\ast\right]\nonumber
\end{align*}
  \begin{align*}
  &+\{\wbJz\}^{-1}\left[\frac{1}{n_0}\sum_{i=1}^{n_0}
    \{\bxiz-\bmu^\ast\bRaiz\}\epsiz\right]\nonumber\\
    &+\{\wbJz\}^{-1}\left[\frac{1}{n_0}\sum_{i=1}^{n_0}
    \{\bmu^\ast\bRaiz-\bmu^\ast\wbRiz+\bmu^\ast\wbRiz-\wbmu\wbRiz\}\epsiz\right].
  \end{align*}
  By (\ref{diffJJ1}), (\ref{tem1}), and (\ref{tem3}), we obtain that
  \begin{align*} 
    \wbbetaz-\bbeta_{\ast 0}=\{\bJ_0\}^{-1}\left[\frac{1}{{n}_0}\sum_{i=1}^{n_0}
    \{\bxiz-\bmu^\ast\bRaiz\}\epsiz\right]+O_p(\Delta_N^2).
  \end{align*}
  Consequently, we have proved that
  \begin{align*}
  \sqrt{n_0}(\wbbetaz-\bbeta_{\ast 0})\dover N(0,\sigma_0^2 {\bJ_0}^{-1}).
  \end{align*}
  \end{proof}
  \begin{proof}[Proof of Corollary \ref{corvar}]
 Denote the efficient score  for $\bbeta_{\ast 0}$ as
  \begin{align*}
    \bPhi(\calD_0;\bmu_0,\bbeta_{\ast 0},\bgamma_0^\ast,\bR^\ast)=\{\bX_0-\bmu^{\ast}\bR^{\ast}(\bZ_0)\}\{Y_0-\bX_0\tr \bbeta_{\ast 0}-(\bR^{\ast}(\bZ_0))\tr \bgamma_0^{\ast}\}.
  \end{align*}
  Then the efficient orthogonal score  for $\bbeta_{\ast 0}$ is then given by
  \begin{align*}
    \bPhi(\calD_0;\wbmu,\wbbetaz,\wbgamma_0,\wbR)=\{\bX_0-\wbmu\wbRiz\}\{Y_0-\bX_0\tr \wbbetaz-(\wbRiz)\tr \wbgamma_0\}.
  \end{align*}
  Note that $\bPhi$ is a $d$-dimensional vector. Let $\Phi_j$ denote the $j$th element of $\bPhi$, $j\in[d]$. To simplify the notation, write
  $\Phi_j(\calD_0)=\Phi_j(\calD_0;\bmu^\ast,\bbeta_{\ast 0},\bgamma_0^\ast,\bR^\ast)$ and $\widehat{\Phi}_j(\calD_0;\wbmu,\wbbetaz,\wbgamma_0,\wbR)$.
  For any $j\in[d]$, with some calculations, under Conditions \ref{conb} and \ref{inv}, the results of Theorem \ref{thediffh} and Theorem \ref{thenorm} imply that
  \begin{align}\label{diffpsi}
    \frac{1}{n_0}\sum_{i=1}^{n_0}\left\|\widehat{\bPhi}(\calD_0)-\bPhi(\calD_0)\right\|_2^2=O_p(\Delta_N+n_0^{-1/2}).
  \end{align}
  By Condition \ref{inv}, we have that
  \begin{align*}
    &{\rmE}\left[\left\|\bPhi(\bmu^\ast,\bbeta_{\ast 0},\bgamma_0^\ast,\bR^\ast)\right\|_2^2\right]\\=&\rmE[\sigma_{0}^2]\rmE[\{\bX_0-\bm{m}^\ast(\bZ_0)\}\tr \{\bX_0-\bm{m}^\ast(\bZ_0)\}]=O(1).
  \end{align*}
  Therefore, for any $j_1,j_2\in[d]$,
  \begin{align}\label{boundpsi}
    \left(\frac{1}{n_0}\sum_{i=1}^{n_0}\left|\Phi_{j_1}(\calD_{0i})\right|^2\bigvee\left|\Phi_{j_2}(\calD_{0i})\right|^2\right)^{1/2}=O_p(1),
  \end{align}
  where $a \bigvee b=\max\{a,b\}$.
  Consequently, by (\ref{diffpsi}) and (\ref{boundpsi}), we have
  \begin{align}\label{boundpsi2}
    &\left|\frac{1}{n_0}\sum_{i=1}^{n_0}\widehat{\Phi}_{j_1}(\calD_{0i})\widehat{\Phi}_{j_2}(\calD_{0i})-\frac{1}{n_0}\sum_{i=1}^{n_0}\Phi_{j_1}(\calD_{0i})\Phi_{j_2}(\calD_{0i})\right|\nonumber\\
    \leq& \frac{1}{n_0}\sum_{i=1}^{n_0}\left|\widehat{\Phi}_{j_1}(\calD_{0i})\widehat{\Phi}_{j_2}(\calD_{0i})-\Phi_{j_1}(\calD_{0i})\Phi_{j_2}(\calD_{0i})\right|\nonumber \\
    \leq& \frac{1}{n_0}\sum_{i=1}^{n_0}\left(\left|\widehat{\Phi}_{j_1}(\calD_{0i})-\Phi_{j_1}(\calD_{0i})\right|\bigvee\left|\widehat{\Phi}_{j_2}(\calD_{0i})-\Phi_{j_2}(\calD_{0i})\right|\right)\nonumber\\
    &\times\left(\left|\Phi_{j_1}(\calD_{0i})\right|\bigvee\left|\Phi_{j_2}(\calD_{0i})\right|+\left|\widehat{\Phi}_{j_1}(\calD_{0i})-\Phi_{j_1}(\calD_{0i})\right|\bigvee\left|\widehat{\Phi}_{j_2}(\calD_{0i})-\Phi_{j_2}(\calD_{0i})\right|\right)\nonumber\\
    \leq& O_p(\Delta_N+n_0^{-1/2}).
  \end{align}
  Furthermore, note that
  \begin{align}\label{diffEpsi}
    \frac{1}{n_0}\sum_{i=1}^{n_0}\Phi_{j_1}(\calD_{0i})\Phi_{j_2}(\calD_{0i})-\rmE(\epsilon_0^2V_{j_1}^\ast V_{j_2}^\ast)=O_p(n_0^{-1/2}),
  \end{align}
   where $V_j^\ast$ is the $j$th element of $\bX_0-\bm{m}^\ast(\bZ_0)$.
   Therefore, by (\ref{boundpsi2}) and (\ref{diffEpsi}), we have
   \begin{align*}
      &\left|\frac{1}{n_0}\sum_{i=1}^{n_0}\widehat{\Phi}_{j_1}(\calD_{0i})\widehat{\Phi}_{j_2}(\calD_{0i})-\rmE(\epsilon_0^2V_{j_1}^\ast V_{j_2}^\ast)\right|\\
      \leq& O_p(\Delta_N+n_0^{-1/2}).
   \end{align*}
   Note that
   \begin{align*}
    \widehat{\bJ}_0\pover \bJ_0.
   \end{align*}
    Then applying the continuous mapping theorem completes the proof of Corollary \ref{corvar}.
  \end{proof}

\end{document}